\documentclass[11pt,letterpaper]{article}
\usepackage{fullpage}
\usepackage{amsmath,amssymb,amsthm,mathtools}
\usepackage{mathpazo}
\usepackage[T1]{fontenc}
\usepackage[full]{textcomp}
\usepackage[american]{babel}
\usepackage[varg]{pxfonts}
\usepackage{tgpagella}
\usepackage{lmodern}
\usepackage{enumitem}
\usepackage{booktabs}
\usepackage{array}
\usepackage[dvipsnames]{xcolor}
\usepackage{tikz}
\usepackage{placeins}
\usetikzlibrary{arrows.meta}
\usepackage{hyperref}
\usepackage[normalem]{ulem}
\usepackage{xcolor}

\newcommand{\doi}[1]{doi: \href{https://doi.org/#1}{\nolinkurl{#1}}}
\hypersetup{colorlinks=true,linkcolor=blue,citecolor=OliveGreen,urlcolor=blue,pdftitle={Cubical Sheaf Complexes with Constant Expansion with Applications to Asymptotically Good qLTCs},pdfauthor={Anonymous}}
\allowdisplaybreaks
\renewcommand{\leq}{\leqslant}
\renewcommand{\le}{\leqslant}

\renewcommand{\ge}{\geqslant}
\let\epsilon\varepsilon
\newtheorem{theorem}{Theorem}[section]
\newtheorem{lemma}[theorem]{Lemma}
\newtheorem{proposition}[theorem]{Proposition}
\newtheorem{corollary}[theorem]{Corollary}
\theoremstyle{definition}
\newtheorem{definition}[theorem]{Definition}
\newtheorem{example}[theorem]{Example}
\theoremstyle{remark}
\newtheorem{remark}[theorem]{Remark}
\newcommand{\F}{\mathbb F}
\newcommand{\PP}{\mathbb P}
\newcommand{\cO}{\mathcal O}
\newcommand{\cB}{\mathcal B}
\newcommand{\cF}{\mathcal F}
\newcommand{\X}{\mathfrak{X}}
\newcommand{\im}{\operatorname{im}}
\newcommand{\supp}{\operatorname{supp}}
\newcommand{\rank}{\operatorname{rank}}
\newcommand{\RS}{\operatorname{RS}}
\newcommand{\Nrd}{\operatorname{Nrd}}
\newcommand{\SL}{\operatorname{SL}}
\newcommand{\PGL}{\operatorname{PGL}}
\newcommand{\PSL}{\operatorname{PSL}}
\newcommand{\CSS}{\operatorname{CSS}}
\newcommand{\dist}{\operatorname{dist}}
\newcommand{\diag}{\operatorname{diag}}
\newcommand{\Hom}{\operatorname{Hom}}
\newcommand{\one}{\mathbf 1}
\newcommand{\db}{d_b}
\newcommand{\coloc}{\operatorname{coloc}}
\newcommand{\eps}{\varepsilon}
\newcommand{\minus}{\setminus}

\makeatletter
\let\rs@mark\relax
\def\rs@markonly{\rs@mark}
\newcommand{\rs@rule}{\textcolor{red}{\rule[0.5ex]{2pt}{0.4pt}}}
\newcommand{\rs@sout}[1]{\bgroup\markoverwith{\rs@rule}\ULon{#1}}
\newcommand{\rs@delsout}[1]{\textcolor{red}{\rs@sout{#1}}}
\long\def\rs@chunk#1{\def\rs@ctmp{#1}%
  \ifx\rs@ctmp\@empty\let\rs@cnext\relax
  \else\long\def\rs@cnext{\rs@style{#1}}\fi
  \rs@cnext}
\long\def\rs@paras#1{\rs@parloop#1\par\rs@mark\rs@stop}
\long\def\rs@parloop#1\par#2\rs@stop{%
  \rs@chunk{#1}%
  \def\rs@ptmp{#2}%
  \ifx\rs@ptmp\rs@markonly\let\rs@pnext\relax
  \else\long\def\rs@pnext{\par\rs@parloop#2\rs@stop}\fi
  \rs@pnext}
\long\def\rs@strip#1\rs@mark#2\rs@stop{\rs@paras{#1}}
\long\def\rs@disploop#1\[#2\]#3\rs@stop{%
  \rs@strip#1\rs@mark\rs@stop
  \def\rs@dtmp{#3}%
  \ifx\rs@dtmp\@empty\let\rs@dnext\relax
  \else\long\def\rs@dnext{\[\color{red}#2\]\rs@disploop#3\rs@stop}\fi
  \rs@dnext}
\long\def\rs@run#1{\begingroup\rs@disploop#1\rs@mark\[\]\rs@stop\endgroup}
\long\def\redsout#1{\let\rs@style\rs@sout\rs@run{#1}}
\long\def\del#1{\let\rs@style\rs@delsout\rs@run{#1}}
\makeatother

\title{Cubical Sheaf Complexes with Constant Expansion\\[2mm]
{\Large with Applications to Asymptotically Good qLTCs}}

\author{
  \setlength{\tabcolsep}{4pt}
  \begin{tabular}{@{}cccc@{}}
    Yeyuan Chen &
    Miryam Mi-Ying Huang &
    Yinchen Liu &
    Er-Cheng Tang\footnote{The authorship is in alphabetical order.}\\[4pt]
    {\small University of Michigan, Ann Arbor} &
    {\small Carnegie Mellon University} &
    {\small Tsinghua University} &
    {\small University of Washington}
  \end{tabular}
}
\date{}
\usepackage{tcolorbox}

\newtcolorbox{aidisclosure}{
  colback=white,
  colframe=black,
  boxrule=0.4pt,
  sharp corners,
  boxsep=0pt,
  left=6pt,
  right=6pt,
  top=6pt,
  bottom=6pt,
  fontupper=\small,
  before skip=24pt,
  after skip=12pt,
  before upper={%
    \setlength{\parindent}{0pt}%
    \setlength{\parskip}{0.5\baselineskip}%
    \textbf{AI disclosure and Acknowledgement.}\enspace
  }
}

\begin{document}
\pagenumbering{gobble}
\hypersetup{pageanchor=false}

\maketitle

\begin{abstract}
For every fixed integers $r \ge 4$ and $2 \le k \le r-2$, we construct $r$-dimensional cubical sheaf complexes whose degree-$k$ CSS codes have positive constant rate, linear distance, and constant soundness, with bounded row and column weights. Taking $r=4$ and $k=2$ gives a family of asymptotically good binary qLTCs.

At the core of our construction is a uniform product-expansion theorem for explicit Reed-Solomon codes on norm-one evaluation sets. The key point is that the expansion constant stays bounded away from zero as the local code lengths grow.  We place these codes on arithmetic cubical complexes, obtaining constant local expansion for both the resulting sheaf and its dual. Together with the local-to-global framework of Dinur, Lin, and Vidick (FOCS 2024) and sheaf duality, this gives linear distance and constant soundness, while an asymmetric choice of local code dimensions gives positive rate. The resulting codes are explicit and polynomial-time computable.

\end{abstract}
\vspace{1.5em}
\begin{aidisclosure}
Some of the work ruling out edge cases came from earlier models (GPT-5.6 Pro and Opus 5), while the main construction
and technical ideas in the current paper were generated with GPT-6 Astra on September 7th
\href{https://drive.google.com/drive/folders/1dV3hYreuELOS7l-_9shsAW9x-OEkjDVF?usp=sharing}{(See here for original GPT generated write-up)}. The original GPT-generated \href{https://drive.google.com/file/d/1zdFPhx4UjV6OiBzXrXj1xPEqAZ5umGFg/view?usp=share_link}{write-up} was hard to read and used many
ad hoc terms, including terminology that is not standard in coding
theory. We believe that this result is important and that human
understanding of the proof is essential. In this era of AI-assisted research, our focus is less on
personal recognition or being first, and more on advancing
human knowledge in a form that others can understand, build
on, and apply. We therefore spent
substantial time internalizing, checking and rewriting the proofs, and revising
the paper. Our goal was to make the arguments clear to coding
theorists and to give the broader theoretical computer science
community some intuition behind the construction.

Parts of the appendices are based on material from an unpublished
project by Yinchen Liu and Ryan O'Donnell. We thank Andrea Coladangelo
for his useful feedbacks.

We compare this result with the recent works of Bafna, Li, and Nguyen,
and of Gay and Jeronimo, in the related work section \ref{related-work}.

\end{aidisclosure}

\clearpage
\begingroup
\small
\tableofcontents
\clearpage
\endgroup
\pagenumbering{arabic}
\hypersetup{pageanchor=true}

\section{Introduction} \label{sec:introduction}
Quantum error correction asks how to store quantum information so that
it survives noise without too much redundancy. A quantum code with
parameters $[[N,K,D]]$ encodes $K$ logical qubits into $N$ physical
qubits and can correct any error on fewer than $D/2$ qubits. Its
\emph{rate} is $K/N$, and an asymptotically good family has
$K=\Omega(N)$ and $D=\Omega(N)$. The code should also admit simple \emph{local checks}.
In a quantum low-density parity-check (qLDPC) code, every check involves
only a constant number of qubits, and every qubit participates in only
a constant number of checks.

However, qLDPC codes are not the whole story. Even with sparse checks, a corrupted word can be very far from every valid codeword while violating only a small fraction of the
checks. Quantum locally testable codes (qLTCs), introduced by
Aharonov and Eldar~\cite{AharonovEldar2015}, are codes that rule out this undesirable behavior by an additional locality condition called \emph{local testability}. 
More formally, a qLDPC code $\CSS(C_X,C_Z)$ is locally testable with soundness \(\rho\) if for every constituent code $C \in\{ C_X, C_Z\}$ and every word \(u\), the probability that a uniformly sampled parity check of $C$ rejects \(u\) is at least $\rho\cdot {\operatorname{dist}(u,C)}/{N}$.
In other words, a word at
relative distance $\delta$ violates at least a $\rho\delta$ fraction
of the checks. We write $[[N,K,D,\rho]]$ for a qLTC with these four
parameters, and extend the notation to a family by asking that the
stated bounds hold for every member.

The natural goal is to achieve all of these properties simultaneously: a
qLDPC code
with constant rate, linear distance, and constant soundness, with all
constants independent of $N$. This asks for more than an asymptotically good qLDPC code:
while the qLDPC condition guarantees that the constraints remain sparse,
local testability further requires these sparse constraints to faithfully
reflect global distance from the code, so that globally significant
errors remain visible through local checks.  There has been substantial
progress toward this goal~\cite{PanteleevKalachev2022,LeverrierZemor2022,
QuantumCodesFromHighDimensionalManifolds,
TowardsLocalTestabilityForQuantumCoding,CrossEtAl2024,
TradeoffConstructionsForQLTCs}.  Most closely related to our work,
Dinur, Lin, and Vidick~\cite{DinurLinVidick2024} constructed qLTCs with
constant rate and bounded check weights, while incurring only
polylogarithmic losses in relative distance and soundness. Their result
therefore brings the problem close to the desired regime, leaving the
following natural question:

\begin{quote}
\centering {\it Can qLTCs simultaneously achieve constant rate, linear distance, constant soundness, and bounded check and qubit degrees? \footnote{Here, of course, all constants must be independent of the block length.}}
\end{quote}

\subsection{Main results}
We construct such codes using cubical sheaf complexes. The dimension
$r$ and coding degree $k$ can be fixed independently within the range
$2\le k\le r-2$.
\begin{theorem}[Main result, informal]
\label{intro}
For every fixed $r\ge4$ and $2\le k\le r-2$, we construct an explicit,
polynomial-time computable family of $r$-dimensional cubical sheaf
complexes whose degree-$k$ CSS codes have block length $N\to\infty$ and
$$
K=\Omega(N),\qquad
D=\Omega(N),\qquad
\rho=\Omega(1).$$
The associated check matrices have uniformly bounded row and column
weights. These guarantees hold on both the primal and dual sides and
are preserved after binary expansion.
\end{theorem}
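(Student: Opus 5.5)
The proof will follow the strategy laid out in the abstract, assembled in four stages.

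\emph{Stage 1: local codes.} The first step is a uniform product-expansion theorem for Reed--Solomon codes on norm-one evaluation sets. Fix a prime power $q$ and consider the quadratic extension $\F_{q^2}/\F_q$. The norm-one subgroup $\{x:\Nrd(x)=1\}$ has size $q+1$. On it we place $\RS$ codes $C_1,\dots,C_r$, taking some dimensions small and the others large (an asymmetric choice, constant fractions of $q+1$). We must show that the family $(C_1,\dots,C_r)$ and the dual family $(C_1^\perp,\dots,C_r^\perp)$ are both product-expanding, with a constant $\kappa>0$ independent of $q$.

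For this I would proceed by induction on $r$, as in the Kalachev--Panteleev and DLV arguments for $r=2$. Using the multiplicative group structure, a codeword of the tensor complex decomposes into low-degree polynomial pieces. A degree-counting (Schwartz--Zippel type) argument on the cyclic set then bounds the number of coordinates in which a minimal-weight decomposition can differ. I expect this step to be the main obstacle. Generic product expansion degrades as the local length grows, so the algebraic structure of norm-one points is exactly what must keep $\kappa$ uniform. My first attempt would reduce it to a polynomial interpolation bound: a low-degree bivariate (or $r$-variate) polynomial that vanishes on many norm-one lines must vanish identically.

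\emph{Stage 2: the complex.} Next we take arithmetic $r$-dimensional cubical complexes, namely quotients of products of Bruhat--Tits trees by cocompact lattices in a quaternion algebra over a global function field. Their links are complete $r$-partite, with each coordinate direction indexed by the norm-one set, so that each local view is exactly $\F_q^{(q+1)^r}$. Placing the codes $C_i$ on direction $i$ defines a sheaf $\cF$ together with its dual sheaf $\cF^\vee$. Stage 1 then gives constant local (co)boundary expansion of every link, for both $\cF$ and $\cF^\vee$, because links of faces are tensor products of subfamilies of the $C_i$. The global spectral expansion of the complex follows from Ramanujan-type bounds, which give $\lambda=O(1/\sqrt q)$. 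Since $q$ is a fixed large constant, this is below the threshold that the local-to-global step requires.

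\emph{Stage 3: local-to-global.} We invoke the Dinur--Lin--Vidick local-to-global theorem. Because the local expansion is now constant rather than $1/\mathrm{polylog}$, it yields constant-factor cocycle and coboundary expansion in degrees $k$ and $k+1$. Cocycle expansion gives distance $\Omega(N)$ for $C_X$, and coboundary expansion gives soundness $\Omega(1)$ for $C_X$. Applying the same statement to $\cF^\vee$ and using sheaf duality ($H^k(\cF)\cong H_{k}(\cF^\vee)$ up to reindexing) gives the same bounds for $C_Z$. The condition $2\le k\le r-2$ ensures that the degrees $k-1,k,k+1$ used all lie in the range where the DLV induction applies on both sides.

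\emph{Stage 4: rate and finishing.} For rate, we lower-bound $\dim H^k$ by an Euler-characteristic count $\sum_j(-1)^j\dim C^j$ restricted near degree $k$. The asymmetric dimensions are chosen so that the alternating local dimension sum at degree $k$ is a positive constant fraction of $N$. Vanishing of the neighbouring cohomology, or sufficient control over it, then turns this count into $K=\Omega(N)$. Bounded row and column weights follow because $q$ and $r$ are fixed. Finally, the binary expansion replaces $\F_q$ by $\F_2^{\log q}$ via self-dual bases. This changes $N,K,D$ and $\rho$ only by constant factors, and the construction is polynomial-time because the lattices are explicit.
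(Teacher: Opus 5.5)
\textbf{Main gap: Stage 1.} Your Stage 1 is where the paper's main new work lies, and your sketch does not supply it. A word $x\in\sum_\ell C_\ell$ is not the evaluation of a single low-degree polynomial. By the parity test it is only orthogonal to weighted evaluations of polynomials with $\deg_i f<h_i$. The whole difficulty is that many line codewords can cancel down to a word of small support. A Schwartz--Zippel count, or the fact that a low-degree polynomial vanishing on many lines vanishes identically, gives no way to re-decompose such an $x$ cheaply. Product expansion of an $r$-tuple is also not implied by that of its subtuples, so ``induction on $r$ as for $r=2$'' has no stated inductive step.

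The paper's route is as follows.
\begin{enumerate}
\item Enlarge $U=\supp x$ to the degree-$d$ polynomial hull $Z=\operatorname{hull}_d(U)$, with $d=\lfloor\min_i h_i/K_r\rfloor$.
\item Bound $|Z|\le K|U|$, with $K=K(r,\beta)$, by a random-subgrid count (Lemma~\ref{lem:polynomial-hull}).
\item Write $Z$ as the grid zero set of a polynomial $E$ of coordinate degrees at most $d$.
\item Prove that such zero sets are inner-generated (Theorem~\ref{thm:inner-generation-main}).
\item Conclude $\varrho\ge1/(rK)$ by Lemma~\ref{lem:inner-to-product}.
\end{enumerate}
Step 4 needs three ingredients that are absent from your plan:
\begin{itemize}
\item the Frobenius derivative-isolation lemma (Lemma~\ref{lem:CRT}), which uses $z^{q_i}=z^{-1}$ on $\Sigma_i$ to replace a polynomial by one with the same grid zero set whose gradient points along a chosen axis;
\item a determinant argument against the parity test (Lemma~\ref{lem:rank});
\item a cohomological cylinder-gluing theorem (Theorem~\ref{thm:cylinders}).
\end{itemize}

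You also take one $q$ in every direction, which matters here. The isolation lemma aligns Frobenius exponents by a CRT modulo $a_i=\log_2 q_i$. This costs only a constant degree factor because $q_i=2^{2(A+i)}$ gives $\gcd(a_i,a_k)\mid 2|i-k|$. With all $a_i$ equal, aligning two directions with the same $2$-adic exponent forces a shift by $a$, which inflates degrees by a factor of $q$. So your setup loses the one mechanism the paper has for uniformity in the length, and the paper does not settle the equal-$q$ case.

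\textbf{Second gap: compatibility in Stage 2.} Saying ``placing $C_i$ on direction $i$'' does not by itself give a sheaf on $\Gamma\backslash\mathcal T$ whose local complexes are tensor products of RS codes. Edge labels around a face are identified through stabilizer elements acting on $\PP^1(\F_q)$. The local code, including its coordinate scalings, must therefore be preserved and must agree at both endpoints of each edge. For degree-$d$ homogeneous polynomials, an edge stabilizer acts by $a^{-d}$ at one endpoint and by $a^{d}$ at the other, and these generally differ. The paper fixes this with a Frobenius twist at color-$0$ vertices, taking $q=Q^2$ and $(Q-1)\mid d$ (Lemma~\ref{lem:matching-characters}). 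It also checks that the complementary degrees descend to give $\cF^\perp$, and only then identifies the codes with $\RS(\Sigma_i,d_i+1)$. This equivariance, not any degradation of generic codes, is why RS codes are used: random codes are robust but cannot be placed compatibly.

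\textbf{Smaller points.}
\begin{itemize}
\item DLV's polylogarithmic loss came from their geometry, where directional graphs of Abelian lifts split into many components. It did not come from weak local expansion. What fixes it here is the connected Ramanujan parallel-face graphs.
\item For rate, the Euler characteristic is not enough, because the neighbouring cohomology does not vanish. The paper bounds $\dim H^j$ by $\dim C^j$ for $j<k$. For $j>k$ it uses sheaf duality to bound it by $\dim C^{r-j}(\X;\cF^\perp)$. Both are small under the asymmetric dimension choice.
\end{itemize}
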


Here the constants may depend on $(r,k)$ but not on $N$. By explicit,
we mean that the check matrices can be computed in time polynomial
in $N$.

Asymptotically good qLDPC codes were constructed by Panteleev and
Kalachev~\cite{PanteleevKalachev2022} and Leverrier and
Z{\'e}mor~\cite{LeverrierZemor2022}. These give constant rate, linear
distance, and bounded check weights, but do not provide constant
soundness. Earlier qLTC constructions gave different tradeoffs among
rate, distance, soundness, and check weight
\cite{QuantumCodesFromHighDimensionalManifolds,
TowardsLocalTestabilityForQuantumCoding,CrossEtAl2024,
TradeoffConstructionsForQLTCs}.

Most closely related to our work, Dinur, Lin, and
Vidick~\cite{DinurLinVidick2024} constructed qLTCs with constant rate
and constant check weight, with distance
$D=\Omega(N/\log^3 N)$ and soundness $\rho=\Omega(1/\log^3 N)$.
Our construction removes these polylogarithmic losses while keeping
both the check weight and the qubit degree bounded. We discuss prior
and concurrent works further in Section~\ref{related-work}.

Our construction combines arithmetic quotients of products of
trees~\cite{RSV2019,HLMRZ2025} with compatible Reed-Solomon local
codes. Stronger geometric expansion requires larger tree degrees,
which also makes the local codes longer. We prove that these codes
have product expansion with a constant that stays uniform as their
lengths grow. This gives uniform local expansion for the sheaf and
its dual. The local-to-global framework of Dinur, Lin, and
Vidick~\cite{DinurLinVidick2024} then gives linear distance and
constant soundness, while a choice of different local code dimensions
in different directions gives positive rate.

Taking $r=4$ and $k=2$ gives the following binary application.

\begin{theorem}[Four-dimensional binary application]\label{thm:main-original}
There are absolute constants $\alpha,\delta,\rho>0$, an integer
$\Delta$, and a polynomial-time computable family of binary matrices
$H_X,H_Z$ with $N\to\infty$ columns, such that
\[
 H_XH_Z^{\mathsf T}=0,\qquad
 \CSS(H_X,H_Z)=[[N,K,D,\rho]],\qquad
 K\ge\alpha N,\quad D\ge\delta N.
\]
Every row and column of either matrix has weight at most $\Delta$.
Writing $m_X,m_Z$ for their numbers of rows, we have $m_X,m_Z=\Theta(N)$,
and for every $x,z\in\F_2^{N}$,
\begin{align*}
 \frac{|H_Xx|}{m_X}&\ge\rho\frac{\dist(x,\ker H_X)}{N},
 &\frac{|H_Zz|}{m_Z}&\ge\rho\frac{\dist(z,\ker H_Z)}{N}.
 \label{eq:main-soundness}
\end{align*}
\end{theorem}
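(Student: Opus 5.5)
The plan is to obtain Theorem~\ref{thm:main-original} as the specialization $r=4$, $k=2$ of Theorem~\ref{intro}, followed by a binary expansion step. We work over $\F_q$ with $q=2^m$ fixed, chosen large enough that the Reed--Solomon local codes on norm-one evaluation sets exist and have the uniform product-expansion constant.

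\textbf{Step 1: the $\F_q$ code.} Fix a $4$-dimensional arithmetic cubical complex $X$, a quotient of a product of four trees, with tree degrees large enough that the local-to-global machinery applies. Place on it the sheaf $\cF$ of tensor products of Reed--Solomon codes. In direction $i$ the dimension is $\kappa_i n$, and the $\kappa_i$ are chosen asymmetrically so that an Euler-characteristic count gives $\dim H^2(X,\cF)\ge\alpha' |X(2)|\cdot\dim\cF_2$. The CSS code is then defined from the cochain complex $C^1\to C^2\to C^3$, taking $H_X=\delta_2$ and $H_Z=\delta_1^{\mathsf T}$, so that $H_XH_Z^{\mathsf T}=\delta_2\delta_1=0$.

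\textbf{Step 2: distance and soundness.} Uniform product expansion of the local codes gives constant local cosystolic expansion of $\cF$. Via sheaf duality the same holds for the dual sheaf $\cF^*$, whose coboundaries realize $\delta_1^{\mathsf T}$ and $\delta_2^{\mathsf T}$. The Dinur--Lin--Vidick local-to-global theorem then gives coboundary/cosystolic expansion of $\cF$ in degrees $2$ and $3$ and of $\cF^*$ in the corresponding degrees. Coboundary expansion at degree $2$ yields the soundness inequality for $H_X$, and the dual statement yields it for $H_Z$. Cosystolic distance on both sides gives $D\ge\delta' N$. Bounded degree of $X$ and constant $q$ give bounded row and column weights, and $m_X,m_Z=\Theta(N)$.

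\textbf{Step 3: binary expansion.} Fix an $\F_2$-basis of $\F_q$ together with its trace-dual basis. Expand $H_X$ using the basis and $H_Z$ using the dual basis, so that orthogonality $H_XH_Z^{\mathsf T}=0$ is preserved over $\F_2$. The binary parameters then transfer as follows.
\begin{itemize}
\item The rate is multiplied by exactly $1$: $N$ and $K$ both scale by $m$.
\item Distance degrades by at most a factor $m$, since an $\F_q$ symbol is nonzero if some bit of it is nonzero.
\item Each $\F_q$ check becomes $m$ binary checks. If a $q$-ary check is violated, then at least one of its $m$ binary checks is violated, so rejection fractions lose at most a factor $m$. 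Distances likewise change by at most a factor $m$.
\end{itemize}
Hence the constants change only by factors depending on $m$.

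\textbf{Main obstacle.} The hard part is Step~2, specifically the uniform product expansion of the Reed--Solomon codes, which must hold both for the codes and for their duals with constants independent of $n$. Getting this requires the norm-one evaluation sets to be compatible with the arithmetic group action. The rate argument must also be checked against the expansion requirements: the asymmetric dimensions must keep $\kappa_i$ in the regime where product expansion holds for both the codes and their duals.
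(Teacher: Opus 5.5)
Your route is essentially the paper's. The paper proves this theorem by specializing Theorem~\ref{thm:prescribed-degree} to $r=4$, $k=2$ and using the restriction-of-scalars construction from its proof, where the trace pairing plays exactly the role of your trace-dual basis; your accounting of rate, distance and soundness under binary expansion is correct.

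One correction to Step~2. Sheaf duality does not transfer local expansion from $\cF$ to the dual side, and the coboundaries of $\cF^\perp$ are not $\delta^{\mathsf T}$. Local coboundary expansion of $\cF^\perp$ comes independently from product expansion of the dual Reed--Solomon codes, which are again Reed--Solomon codes on the same norm-one sets with complementary redundancy (the point you raise under ``main obstacle''). Proposition~\ref{prop:sheaf-duality} is used only afterwards, to turn the global degree-$2$ cocycle bounds for $\cF^\perp$ into systolic distance and cycle expansion for $\cF$ in degree $2$, i.e.\ the $H_Z$ side.
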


\subsection{Technical overview}
\label{sec:technical-overview}
\label{sec:technical-overview-new}

Our construction uses decorated cubical complexes coming from arithmetic quotients of products of trees \cite{RSV2019,HLMRZ2025}. At a high level, we use the tree directions to define local tensor codes, descend these codes to a finite cubical quotient, and then apply the local-to-global framework of Dinur, Lin, and Vidick~\cite{DinurLinVidick2024} to obtain a quantum code. 

We first briefly recall the language of cubical complexes and Tanner sheaves. A cubical complex consists of vertices, edges, squares, and higher-dimensional cubes glued together along their faces. A sheaf assigns a code space $\cF(\sigma)$ to each face $\sigma$, together with restriction maps between incident faces. These maps assemble into a cochain complex
\[
 C^j(\X;\cF)=\bigoplus_{\sigma\in\X(j)}\cF(\sigma),
 \qquad \delta^{j+1}\delta^j=0.
\]
Writing $B^j$ for the matrix of $\delta^j$ in binary coordinates, we take
\[
 H_X=B^k,\qquad H_Z=(B^{k-1})^{\mathsf T},\qquad
 N=\dim_{\F_2}C^k,\qquad K=\dim_{\F_2}H^k.
\]
The identity $\delta^k\delta^{k-1}=0$ gives the CSS condition $H_XH_Z^{\mathsf T}=0$, which is exactly the CSS commutation condition. Moreover, each check only involves nearby incident faces. Thus, if the cubical complex has bounded degree, then the resulting quantum code has bounded check weight and bounded qubit degree. We will also use the dual Tanner sheaf $\cF^\perp$, obtained by replacing the generating local codes by their duals, to control the other CSS side.

For this construction to have constant relative distance and soundness, we need two main conditions. We state these conditions below using the framework and terminology of Dinur, Lin, and Vidick~\cite{DinurLinVidick2024} and the quantitative sheaf-duality formulation of~\cite{LiEtAl2025}.

\begin{proposition}[Local-to-global conditions, informal]
\label{overview:criterion}
Fix $r\ge4$ and $\kappa>0$. For the cubical complexes constructed here
there is $\lambda_0=\lambda_0(r,\kappa)>0$ such that the following
holds. Suppose that
\begin{enumerate}[label=(\roman*),leftmargin=*]
\item the parallel face graphs of $\X$ satisfy a spectral bound with
parameter $\lambda\le\lambda_0$, and
\item the local complexes of both $\cF$ and $\cF^\perp$ have coboundary
expansion at least $\kappa$.
\end{enumerate}
Then, for $2\le k\le r-2$, the degree-$k$ CSS code has bounded check
weights and parameters
\[
 [[N,K,D,\rho]],\qquad N=\Theta(|\X(0)|),\qquad
 D=\Omega(N),\qquad \rho=\Omega(1).
\]
Positive rate $K=\Omega(N)$ is arranged separately, in
Section~\ref{sec:rate}.
\end{proposition}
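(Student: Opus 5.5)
The plan is to derive this proposition by feeding hypotheses (i) and (ii) into the local-to-global theorem of Dinur, Lin, and Vidick~\cite{DinurLinVidick2024} for cubical sheaf complexes. That theorem states that if every proper link of $\X$ satisfies coboundary expansion at least $\kappa$ in the relevant degrees, and the parallel face graphs are $\lambda$-spectral expanders with $\lambda$ small enough in terms of $(r,\kappa)$, then the global sheaf complex is a small-set (co)boundary expander in degree $k$. This holds in the sense of cosystolic expansion: every $f\in C^k$ satisfies $|\delta^k f|\ge \eta\,\min_{g\in Z^k}|f-g|$ whenever the distance is below $\mu N$, and every nontrivial cocycle has weight at least $\mu N$. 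I would fix $\lambda_0(r,\kappa)$ to be the threshold demanded by that theorem, applied uniformly for all $k\le r-2$ (so that links of dimension at least $2$ are used). The restriction $k\ge2$, together with $k\le r-2$, is what lets both $\cF$ in degree $k$ and $\cF^\perp$ in the dual degree $r-k$ sit in the admissible range.

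The key steps are as follows.

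First, bounded check weight. The complexes are quotients of products of bounded-degree trees, so every face meets $O_r(1)$ faces. Each $\cF(\sigma)$ has bounded dimension for fixed tree degrees, so the rows and columns of $B^k$ and $B^{k-1}$ have bounded weight. Moreover, $N=\sum_{\sigma\in\X(k)}\dim\cF(\sigma)=\Theta(|\X(0)|)$.

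Second, the $X$-side. I would apply DLV to $\cF$ in degree $k$, using (i) and the local expansion of $\cF$ from (ii). This gives linear cosystolic distance and soundness for $H_X=B^k$. Soundness against the full range of distances (not just small sets) follows by the standard trick: if $\dist(f,Z^k)\ge\mu N$, then either $|\delta f|$ is already $\Omega(N)$, or a greedy local correction reduces to the small-set case.

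Third, the $Z$-side. Here I would invoke the sheaf-duality formulation of~\cite{LiEtAl2025}. The chain complex $C_\bullet(\X;\cF)$ with boundary $(\delta^{k-1})^{\mathsf T}$ is isomorphic, after Poincaré-type duality on the cube structure, to the cochain complex of $\cF^\perp$ in degree $r-k$. Hence $H_Z$-soundness and $Z$-distance reduce to cosystolic expansion of $\cF^\perp$. Applying DLV to $\cF^\perp$ with the second half of (ii) gives this. Converting to the normalized form $|H x|/m\ge\rho\,\dist/N$ only costs constants, because $m_X,m_Z=\Theta(N)$.

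The main obstacle I expect is the duality step. I would need to check that the link complexes of $\cF^\perp$ entering DLV are exactly those whose expansion is assumed in (ii), and that the identification preserves weights up to constants independent of $N$. That dependence is only on $r$ and on local dimensions, which is acceptable since $\lambda_0$ and the final constants are allowed to depend on these. Binary expansion then only multiplies weights and distances by a bounded factor.
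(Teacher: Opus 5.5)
Your architecture matches the paper's: a DLV-type local-to-global bound for both $\cF$ and $\cF^\perp$, sheaf duality for the $Z$-side, then Lemma~\ref{lem:css-parameters}. However, the step that upgrades your small-set statement to soundness on all words fails as written.

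Suppose $|\delta^k f|_b$ is small. Greedily changing $f$ on vertex stars to decrease $|\delta f|_b$ moves $f$ by at most $\nu_k|\delta f|_b$ blocks, so the corrected $f'$ is still far from $\ker\delta^k$. Nothing has been reduced to the small-distance regime in degree $k$. In fact every solution of $\delta f=y$ has the same distance to $\ker\delta^k$, so a degree-$k$ inequality valid only within distance $\mu N$ of $\ker\delta^k$ says nothing about such $y$.

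What the greedy process actually produces is a nonzero locally co-minimal cocycle $\delta f'$ in degree $k+1$. To rule it out you need a lower bound $d_{\coloc}(k+1)\ge\eta F_{k+1}$ on locally co-minimal cocycles, coboundaries included. Your black box only bounds cohomologically nontrivial cocycles, and $\delta f'$ is a coboundary. The paper gets the needed bound from Lemma~\ref{lem:global}, valid in every degree $<r$. It then applies \cite[Lemma~2.3]{DinurLinVidick2024} (Lemma~\ref{lem:cominimal-to-expansion}) to obtain $\eps_{\rm cocyc}(k)\ge\min\{1/\nu_k,\eta F_{k+1}/F_k\}$ for all words. This shift to degree $k+1$, not link dimension, is why $k\le r-2$ is required. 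Applying the same argument to $\cF^\perp$ in degree $r-k$ is why $k\ge2$ is required.

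There are two further points you gloss over.

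First, DLV's framework assumes pairwise commuting actions, so a parallel move keeps its edge label. Here $B_iB_j=B_jB_i$ holds only setwise, so their theorem does not apply verbatim.
\begin{itemize}
\item The paper re-proves the walk mixing estimate directly on parallel-face graphs (Lemma~\ref{lem:cubical-mixing}). What makes the quadratic term $D_{k\ell}|A|^2/F_k$ have a constant independent of $|\X|$ is that each whole parallel-face graph is connected, rather than only its many small components expanding as in DLV's abelian lifts. This is what yields linear rather than polylogarithmic bounds.
\item It then reruns the support argument using unique cube completion (Lemma~\ref{lem:cubical-incidence}), so that each external face is charged to a single coface.
\end{itemize}

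Second, sheaf duality is not a chain isomorphism between $C_\bullet(\X;\cF)$ and $C^{r-\bullet}(\X;\cF^\perp)$; their graded dimensions differ in general. Proposition~\ref{prop:sheaf-duality} gives $H^{r-k}(\X;\cF^\perp)\cong H_k(\X;\cF)$. It also gives comparability of distances and expansion constants up to factors depending on $r$ and the coface degree. It requires local acyclicity of both sheaves, which follows from hypothesis (ii).
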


Let us briefly explain the two conditions. Condition~(i) is the geometric one. For each family of parallel faces of $\X$, we consider the graph that connects faces that meet in a common higher-dimensional cube. We require these graphs to be connected spectral expanders, with expansion parameter $\lambda$ sufficiently small. Intuitively, this means that a small collection of faces cannot have too many connections only among themselves.

Condition~(ii) is the local coding one. For each face $\sigma$, consider its \emph{upper star}, namely the collection of cubes containing $\sigma$. The restriction maps of the sheaf on this neighborhood form a small local complex. Coboundary expansion asks that if a local syndrome is nonzero, then correcting it requires changing a number of local blocks proportional to its distance from a valid local configuration. We require this expansion to be at least a fixed constant $\kappa$, uniformly over all faces and for both $\cF$ and $\cF^\perp$.

The intuition behind the two conditions is that local expansion forces the support of a
minimum-weight nontrivial logical operator to meet many neighboring
cubes, while spectral mixing prevents a small set from having that many
internal neighbors. Together, these two effects rule out nontrivial
logical operators with small support, and give linear distance and
constant soundness. Positive rate is arranged
separately, by giving $k$ directions small fixed local rates and the
remaining ones the complementary rates, as in DLV's rate
argument~\cite[Section~4]{DinurLinVidick2024}.

\paragraph{Geometry.}
The first condition explains why the bounded-degree construction of Dinur, Lin, and Vidick~\cite{DinurLinVidick2024} does not already give good qLTCs. Earlier good qLDPC constructions use two directions coming from left and right multiplication in a non-Abelian group~\cite{PanteleevKalachev2022,LeverrierZemor2022}. These commute because $a(gb)=(ag)b$, but this does not directly give the four directions needed here. Using an Abelian group would make all directions commute, but constant spectral expansion on $M$ vertices would then require $\Omega(\log M)$ generators. Dinur, Lin, and Vidick keep the degrees bounded by using Abelian lifts whose directional graphs split into many disjoint expanding components. Each component expands well internally, but a whole component has no edges leaving it. As the complex grows, one component occupies a smaller fraction of the graph. Thus, the available expansion bounds only apply to sets of inverse-polylogarithmic density, and this loss propagates to the resulting distance and soundness bounds.

We avoid this loss by using arithmetic quotients of products of Bruhat-Tits trees~\cite{RSV2019,HLMRZ2025}. Our complexes have the form
$$
\X_s=\Gamma_s\backslash
\bigl(\mathcal T_1\times\cdots\times\mathcal T_r\bigr),
$$
where each $\mathcal T_i$ is a $(q_i+1)$-regular tree. We color the vertices of each tree by $0$ and $1$, so that every edge joins vertices of different colors. The group $\Gamma_s$ acts coordinatewise, preserving the tree directions and these colors, and identifies vertices and cubes to give a finite quotient. These quotients are decorated Cayley cubical complexes~\cite{HLMRZ2025}. Their vertices can be written as $(g,x)\in G\times\F_2^r$, where $x_i\in\{0,1\}$ records the vertex color in tree direction $i$. In direction $i$, the neighbors of $(g,x)$ are
\[
 (gb,x+e_i),\qquad b\in B_i,
\]
where $B_i$ is an inverse-closed set of size $q_i+1$. The sets $B_i$ commute setwise, so $B_iB_j=B_jB_i$, with unique factorization for products in distinct directions. Thus, choosing one incident edge in each of several directions determines a unique cube. The individual generators need not commute. As shown in Figure~\ref{fig:edges-and-cubes}, the opposite edges of a square may have different labels.

\begin{figure}[htbp]
\centering
\begin{minipage}[b]{0.49\textwidth}
\centering
\begin{tikzpicture}[
 font=\footnotesize,
 vertex/.style={circle,fill=black,inner sep=1.5pt},
 idir/.style={NavyBlue,thick,-{Stealth[length=4pt]}},
 jdir/.style={BrickRed,thick,-{Stealth[length=4pt]}}
]
 \path[use as bounding box] (-1.4,-0.8) rectangle (5.2,4.05);
 \coordinate (v00) at (0,0.35);
 \coordinate (v10) at (3.8,0.35);
 \coordinate (v01) at (0,3.05);
 \coordinate (v11) at (3.8,3.05);
 \fill[NavyBlue!4] (v00)--(v10)--(v11)--(v01)--cycle;
 \draw[idir] (v00)--node[below] {$b\in B_i$} (v10);
 \draw[idir] (v01)--node[below] {$b'\in B_i$} (v11);
 \draw[jdir] (v00)--node[left] {$c\in B_j$} (v01);
 \draw[jdir] (v10)--node[right] {$c'\in B_j$} (v11);
 \foreach \v in {v00,v10,v01,v11} \node[vertex] at (\v) {};
 \node[below=15pt] at (v00) {$(g,x)$};
 \node[below=15pt] at (v10) {$(gb,x+e_i)$};
 \node[above=6pt] at (v01) {$(gc,x+e_j)$};
 \node[above=6pt] at (v11) {$(gbc',x+e_i+e_j)$};
 \node at (1.9,1.7) {$bc'=cb'$};
\end{tikzpicture}

\small (a) Unique square completion.
\end{minipage}\hfill
\begin{minipage}[b]{0.49\textwidth}
\centering
\begin{tikzpicture}[
 font=\footnotesize,
 fiber/.style={circle,draw=black!35,fill=white,minimum size=19pt,
               inner sep=0pt},
 idir/.style={NavyBlue,thick},
 jdir/.style={BrickRed,thick},
 kdir/.style={ForestGreen,thick}
]
 \path[use as bounding box] (-0.7,-0.8) rectangle (4.7,4.05);
 \coordinate (p000) at (0,0);
 \coordinate (p100) at (2.65,0);
 \coordinate (p010) at (0,2.55);
 \coordinate (p110) at (2.65,2.55);
 \coordinate (p001) at (1.25,0.95);
 \coordinate (p101) at (3.9,0.95);
 \coordinate (p011) at (1.25,3.5);
 \coordinate (p111) at (3.9,3.5);
 \foreach \bits in {000,100,010,110,001,101,011,111}
   \coordinate (v\bits) at ([xshift=5pt,yshift=-3pt]p\bits);
 \fill[NavyBlue!7] (v100)--(v110)--(v111)--(v101)--cycle;
 \foreach \bits in {000,100,010,110,001,101,011,111} {
   \node[fiber] at (p\bits) {};
   \node[font=\scriptsize] at ([yshift=2pt]p\bits) {$G$};
 }
 \foreach \bits in {000,100,001,101}
   \node[below=10pt,font=\scriptsize] at (p\bits) {$\bits$};
 \foreach \bits in {010,110,011,111}
   \node[above=10pt,font=\scriptsize] at (p\bits) {$\bits$};
 \draw[idir,densely dashed] (v001)--(v101);
 \draw[jdir,densely dashed] (v001)--(v011);
 \draw[kdir,densely dashed] (v000)--(v001);
 \draw[idir] (v000)--node[below=2pt] {$B_i$} (v100);
 \draw[idir] (v010)--(v110) (v011)--(v111);
 \draw[jdir] (v000)--node[left=2pt] {$B_j$} (v010);
 \draw[jdir] (v100)--(v110) (v101)--(v111);
 \draw[kdir] (v100)--node[below=2pt] {$B_k$} (v101);
 \draw[kdir] (v010)--(v011) (v110)--(v111);
 \foreach \bits in {000,100,010,110,001,101,011,111}
   \fill (v\bits) circle (1.3pt);
\end{tikzpicture}

\small (b) A cube in eight color classes.
\end{minipage}
\caption{Edges and cubes in a decorated Cayley cubical complex.
In (a), the two paths to the opposite vertex agree by $bc'=cb'$.
The two blue opposite edges are vertices joined by an edge in the
direction-$j$ parallel-face graph.
In (b), the three bits record colors in directions $(i,j,k)$;
each circle represents a copy of $G$, and the dots and segments
select one cube. The shaded square has type $\{j,k\}$.}
\label{fig:edges-and-cubes}
\end{figure}

For our arithmetic quotients, the parallel face graphs needed for condition~(i) are connected and satisfy the desired spectral bounds
\[
 \lambda\le\max_i\frac{2\sqrt{q_i}}{q_i+1},
\]
independently of $s$. Therefore, their expansion improves as the tree degrees increase. Once these degrees are fixed, the quotient can grow while its local neighborhoods remain unchanged. 

At the same time, the product-of-trees structure allows us
to build a sheaf code by taking tensor products of sheaf
codes on the individual trees~\cite{PanteleevKalachev2024}.
We first describe its local structure on the product of trees.
At a vertex $v=(v_1,\ldots,v_r)$, let $E_i(v_i)$ be the
set of edges incident to $v_i$ in the $i$th tree, and choose
a code $C_i(v_i)\subseteq F^{E_i(v_i)}$.
Choosing one edge in each direction determines one
$r$-dimensional cube, so the cubes around $v$ form the
grid $\prod_i E_i(v_i)$.
We require every coordinate line in direction $i$ to
carry a word in $C_i(v_i)$.
For example, in two directions, these are simply the
code constraints on the rows and columns of a matrix.
Together, the directional constraints define the
local tensor code
$\mathcal G(v)=\bigotimes_{i=1}^r C_i(v_i)$.

The corresponding local complex is
\[
 \bigotimes_{i=1}^r
 [C_i(v_i)\hookrightarrow F^{E_i(v_i)}].
\]
Each two-term complex describes one directional code
and its inclusion into the space of edge values.
Their tensor product records how these constraints
fit together across the different directions.
To obtain local coboundary expansion, we need these
constraints to work robustly together.

The property we use to ensure this is \emph{product expansion},
a robustness property of the whole tuple of local codes.
Existing results show how product expansion gives
local coboundary expansion for these tensor-product
complexes~\cite{KalachevPanteleev2025,DinurLinVidick2024}.
Around a higher-dimensional face, the local complex
uses only the directions not occupied by that face.
We therefore require product expansion for every
nonempty subtuple of the local codes.
To satisfy condition~(ii), these expansion bounds must
remain bounded away from zero as the local code lengths grow.

\paragraph{Local codes.}
This tells us what expansion property the local codes should have, but we still need to place them consistently on the arithmetic quotient. We cannot simply put arbitrary robust codes on these local grids. After taking the arithmetic quotient, different neighborhoods are identified through the group action, and their local coordinates may be permuted or rescaled. The code constraints must still agree under these identifications. Random codes can give the robustness we want~\cite{KalachevPanteleev2025}, but they do not naturally have this extra structure. Thus, we need local codes that are both robust and compatible with the arithmetic quotient.

Reed-Solomon codes are a natural choice for this purpose. The $q+1$ edges around a vertex of a Bruhat-Tits tree are naturally indexed by the projective line $\PP^1(\F_q)$, namely the set of one-dimensional directions in $\F_q^2$. We can therefore define the local code by evaluating homogeneous polynomials on these $q+1$ directions.

To build the sheaf code on the tree, we start with the local codes at the two endpoints of one standard edge and propagate them using the group action. The group used here preserves the two vertex colors and can send any edge to any other edge. Different group elements may send the standard edge to the same edge. For their copies to fit together, an element fixing this edge must rescale its code coordinate by the same factor at both endpoints.

For homogeneous polynomials of degree $d$, the usual actions give factors $a^{-d}$ and $a^d$ at the endpoints of colors $0$ and $1$, respectively, where $a\in\F_q^\times$ depends on the group element. These factors need not agree. We make them agree by a Frobenius twist at the vertices of color $0$. A group element acts on the local code by a linear substitution in the two variables $X$ and $Y$. Writing $q=Q^2$, the twist raises the coefficients of this substitution to their $Q$th power before substituting. This is again a group action, since raising to the $Q$th power is a field automorphism of $\F_q$. It changes the first factor to $a^{-Qd}$, and taking $d$ to be a multiple of $Q-1$ gives
\[
 (Q-1)\mid d\quad\Longrightarrow\quad
 a^{-Qd}=a^d\qquad(a\in\F_q^\times).
\]
The two factors now agree, so the copies fit together into a sheaf on the whole tree compatible with the group action. Since the Frobenius twist only permutes the projective directions, the local codes remain Reed-Solomon codes up to permutation and rescaling. Replacing $d$ by $q-1-d$ gives the dual local code and the descent also applies to the dual sheaf.

Putting these pieces together, we can place the Reed-Solomon codes in each tree direction, combine them into the product sheaf $\mathcal G$, and pass them consistently to the finite arithmetic quotient. Descent preserves these local tensor complexes, and hence their local expansion. This settles the compatibility part of condition~(ii), but we still need to prove uniform robustness of these local codes. We do this by proving product expansion for these codes and their duals.

\paragraph{Product expansion.}
The construction above makes the local codes compatible
with the arithmetic quotient.
We still need to prove that these codes and their duals
have the uniform robustness required in condition~(ii).

The difficulty is that codewords from different directions
can cancel when added together.
Consider, for example, a word obtained by adding several
row codewords and column codewords in a two-dimensional grid.
Their values may cancel at many intersections, leaving
a final word with much smaller support.
Product expansion asks whether this same word can be
written using a small total length of coordinate lines. For each coordinate line $\ell$, let $C_\ell$ denote the code supported on that line. We ask that every word $x$ in the span of these line codes admit a decomposition $$x=\sum_\ell a_\ell,\quad a_\ell\in C_\ell,\quad \text{s.t.    }\quad\varrho\sum_{\ell:a_\ell\neq0}|\ell|\le |x|,$$ where $|x|$ is the support size of $x$. Importantly, this decomposition does not have to be originally used to form $x$. Even if many line codewords were used and then cancelled heavily, there should be another decomposition of the same word using only a small total length of coordinate lines. We prove this property for the Reed-Solomon codes used in our construction. After a change of coordinates, the projective evaluation points can be written as the norm-one sets $$q_i=2^{2(A+i)},\qquad\Sigma_i=\left\{z\in\F_{q_i^2}: z^{q_i+1}=1\right\},\qquad i\in[r].$$
    Up to coordinate permutations and nonzero rescalings, our local codes are ordinary Reed-Solomon codes evaluated on the sets $\Sigma_i$. Our main coding theorem shows that, when the local rates lie in $[\beta,1-\beta]$, these codes and their dual codes have product expansion $\varrho=c_r\beta^r>0.$
    The same bound also holds for every nonempty subset of the indices. Most importantly, the product expansion factor $\varrho$ is independent of the choice of $A$, and hence does not deteriorate as the tree degrees grow. This is the uniform robustness required in condition~(ii). 
    
    We next explain the main idea behind the proof. Let $U=\supp(x)$ be the support of the final word. The set $U$ itself may not contain sufficient complete coordinate lines needed for the required decomposition. We therefore enlarge $U$ to a set $Z$ that contains enough complete lines, while keeping $Z$ only a constant factor larger than $U$. Following~\cite{KalachevPanteleev2025}, we want $Z$ to be \emph{inner-generated}: every word in the line-code span whose support lies inside $Z$ can be decomposed using only coordinate lines contained in $Z$. Our choice of $Z$ is a low-degree polynomial hull of $U$. Intuitively, $Z$ is enlarged from $U$ by adding all grid points that are forced by the low-degree polynomial relations already satisfied by $U$. For example, if $U$ contains more points on a coordinate line than the allowed polynomial degree, then every such polynomial vanishing on these points must vanish on the whole line. The polynomial hull therefore naturally fills in coordinate lines that are useful for the decomposition.

    The main algebraic step is to show that the polynomial hull is inner-generated. Here the special choice of the evaluation sets $\Sigma_i$ is crucial: every $z\in\Sigma_i$ satisfies $z^{q_i}=z^{-1}$. This identity allows us to use Frobenius operations to replace any multivariate polynomial by a polynomial with exactly the same zero set, but whose gradient is pointed towards a prescribed direction. The replacement increases the coordinate degrees by only a constant factor. We use this flexibility to construct, via determinants, low-degree polynomials that isolate individual points of a common zero set. Combined with the Reed--Solomon parity constraints, they force the support of a codeword to lie in the union of coordinate lines contained in the polynomial hull. An interpolation and gluing argument then yields a decomposition into codewords supported on these lines, establishing inner generation. The full argument appears in Appendix~\ref{sec:algebra}.

    Once $Z$ is inner-generated, the product-expansion bound follows by counting. Lines in any fixed direction are disjoint, so the total length of all coordinate lines used inside $Z$ is at most $r|Z|$. Hence the resulting decomposition satisfies $\sum_{\ell:a_\ell\neq0}|\ell|\le r|Z|\le rK|x|,$ which gives $\varrho\ge \frac{1}{rK}.$ Since $K$ depends only on $r$ and $\beta$, this bound is uniform in the local code lengths.
    
    Finally, we can combine the two conditions. We first choose the local rates so that both the sheaf and its dual have positive rate and uniform local robustness. We then choose the tree degrees large enough so that the arithmetic quotients satisfy the geometric expansion of condition~(i). After these local parameters are fixed, we grow only the size of the finite quotient. Applying the local-to-global argument of Dinur, Lin, and Vidick together with quantitative sheaf duality~\cite{LiEtAl2025} gives linear distance and constant soundness on both CSS sides. Since the local data remain fixed, passing to binary coefficients preserves these guarantees while keeping the check weights and qubit degrees uniformly bounded. 

\subsection{Related work}
\label{related-work}

The construction of asymptotically good qLDPC codes by Panteleev and
Kalachev~\cite{PanteleevKalachev2022} and Leverrier and
Z{'e}mor~\cite{LeverrierZemor2022} showed that constant rate, linear
distance, and constant-weight checks can be achieved simultaneously.
Additionally, obtaining local testability has proved more difficult.
Earlier and more recent qLTC constructions explored a range of tradeoffs
among rate, distance, soundness, and locality
~\cite{QuantumCodesFromHighDimensionalManifolds,
TowardsLocalTestabilityForQuantumCoding,CrossEtAl2024,
GeneralDistanceBalancingForQLTCs,TradeoffConstructionsForQLTCs}.
Most closely related to our setting, Dinur, Lin, and
Vidick~\cite{DinurLinVidick2024} obtained constant rate and constant
locality with only polylogarithmic losses in relative distance and
soundness.

Our construction also builds on the theory of sheaf codes and local
complexes~\cite{PanteleevKalachev2024}, together with product expansion
for tuples of local codes.  Kalachev and
Panteleev~\cite{KalachevPanteleev2025} proved product expansion for
arbitrary fixed tuples of random codes and introduced the
inner-generation framework that we use in our Reed-Solomon analysis.
Bafna and Vyas~\cite{BafnaVyas2026} studied product expansion for
Reed-Solomon codes on multiplicative subgroups of coprime orders,
proving a two-factor result and formulating a higher-dimensional
conjecture.  Our norm-one evaluation sets satisfy different arithmetic
constraints, so our result is complementary and does not resolve their
conjecture.

For the underlying geometry, we use arithmetic quotients of products
of trees in the function-field framework of~\cite{RSV2019}, together
with the cubical formulation of~\cite{HLMRZ2025}.  These complexes
provide the high-dimensional geometry on which our Reed-Solomon
sheaves are placed.

\paragraph{Comparison with concurrent works.}

Concurrently, Gay and Jeronimo~\cite{GayJeronimo2026} also construct
asymptotically good qLTCs.  Our proof is considerably shorter and more
modular: we invoke
existing local-to-global and sheaf-duality theorems and concentrate the
main new argument on Reed-Solomon product expansion via
inner generation.  This gives a relatively direct route from the local
coding statement to the final qLTC parameters.

Bafna, Li, and Nguyen~\cite{BafnaLiNguyen2026} give a closely
related construction using arithmetic cubical complexes
and Reed-Solomon local codes.
Both constructions use Frobenius twists to make the local
codes compatible with the geometry.
Their qLTC theorem assumes that certain Reed-Solomon code
tuples have product expansion with a constant independent
of their lengths, which is their Conjecture~5.6. In this work, we prove this expansion property for explicit norm-one
families, using polynomial hulls and the inner-generation
theorem proved in Appendix~\ref{sec:algebra}.
In particular, our result (see Theorem~\ref{thm:rs-product-expansion})
covers the parameter family used in their construction,
thereby partially resolving their conjecture.\footnote{
The exponents $(L+8,L+16,L+32,L+64)$ in their
Definition~5.8 correspond to directions $\{1,5,13,29\}$
of our theorem with $r=29$ and $A=L/2+3$.
If necessary, extend the coefficient field to contain all
29 evaluation sets and then descend the line-code
decompositions by a linear projection fixing the original
field.
This preserves code membership and does not increase
support.
}
The constant remains fixed as the local code lengths grow,
so we can improve the geometric mixing without losing
local expansion.
This gives good qLTCs without a product-expansion conjecture.
Our construction also works in every fixed dimension
$r\ge4$ and every prescribed degree $2\le k\le r-2$,
while their qLTC construction is presented in dimension
four with qubits in degree two.

Li, Li, and Liu~\cite{LiLiLiu2026} focus on a different direction. Using their earlier framework for logical gates~\cite{TransversalGatesOnAlmostGoodQuantumCodes}, they show that the good qLTCs of Gay and Jeronimo~\cite{GayJeronimo2026} can also support transversal multi-controlled-$Z$ gates. Their construction therefore gives codes that are simultaneously good qLDPC codes, good qLTCs, and compatible with these transversal logical gates. Our work focuses instead on improving the code parameters, and we do not study logical gates.

\paragraph{Roadmap}
Section~\ref{sec:preliminaries} introduces CSS codes and
local testability, Tanner sheaves and local complexes,
and decorated Cayley cubical complexes.
Section~\ref{sec:local-to-global} proves a local-to-global
expansion criterion using local robustness,
parallel-face mixing, and sheaf duality.
Section~\ref{sec:rs-construction} constructs Tanner sheaves
on products of trees and their finite quotients, and
realizes this construction with Reed-Solomon local codes
on arithmetic quotients.
Section~\ref{sec:product-expansion} proves uniform product
expansion for these codes and derives local coboundary
expansion for the sheaf and its dual.
Section~\ref{sec:realization} establishes positive rate
and selects the parameters to complete the main results.
Appendix~\ref{sec:arithmetic} provides the arithmetic
and spectral arguments.
Appendix~\ref{sec:algebra} proves the inner-generation
theorem for polynomial zero sets used in the
product-expansion proof.

\section{Preliminaries}
\label{sec:preliminaries}

\subsection{Notation}
Write $[m]=\{1,\ldots,m\}$ and let $e_i$ denote the $i$th
coordinate vector in the ambient coordinate space.
We write $\F_q$ for the field with $q$ elements and $\PP^1(F)$
for the projective line, whose points are the one-dimensional
subspaces of $F^2$.
The coefficient field is denoted by $F$ and has characteristic two;
it is finite unless an extension field is explicitly introduced.
All vector spaces, tensor products, and dimensions are over $F$
unless specified otherwise.
For a finite set $S$, $F^S$ is the space of arrays indexed by $S$,
and $\supp x=\{a\in S:x_a\ne0\}$, so $|x|=|\supp x|$.
For $Z\subseteq S$, we identify $F^Z$ with the subspace of $F^S$
supported on $Z$ by extending arrays by zero.
For a code $C\subseteq F^S$, its dual code is
$C^\perp=\{y\in F^S:\langle x,y\rangle=0\text{ for every }x\in C\}$,
where $\langle x,y\rangle=\sum_{a\in S}x_ay_a$.
Its redundancy is $|S|-\dim C$; relative redundancy and relative
minimum distance are obtained by dividing by $|S|$.
For a set $A$, $\one_A$ is its indicator function.

For a specified direct sum $V=\bigoplus_{a\in S}V_a$, each
summand is one \emph{block}, regardless of its dimension. Define
\[
 \supp_b x=\{a\in S:x_a\ne0\},\qquad
 |x|_b=|\supp_b x|,\qquad
 \db(x,U)=\min_{u\in U}|x-u|_b
\]
for $x=(x_a)_{a\in S}$ and a nonempty set $U\subseteq V$.
Ordinary coordinate Hamming
weight and distance are denoted by $|x|$ and $\dist(x,U)$.
Minima and infima over empty sets are $+\infty$.

\subsection{CSS codes and local testability}
\label{sec:css-parameters}
Binary matrices $H_X,H_Z$ with $N$ columns and
$H_XH_Z^{\mathsf T}=0$ define the CSS code
$\CSS(H_X,H_Z)=[[N,K,D]]$, with $N$ physical qubits,
$K$ logical qubits, and minimum distance $D$, where
\begin{align*}
 K&=N-\rank H_X-\rank H_Z,\\
 D&=\min\left\{
   \min_{x\in\ker H_X\minus\im H_Z^{\mathsf T}}|x|,
   \min_{z\in\ker H_Z\minus\im H_X^{\mathsf T}}|z|
   \right\}.
\end{align*}

We use the normalization of~\cite[Definition~2.6]{DinurLinVidick2024}.
\begin{definition}[Quantum local testability]
\label{def:normalized-soundness}
A binary check matrix $H$ with $m>0$ rows and $N>0$ columns has
soundness at least $\rho$ if
\[
 \frac{|Hx|}{m}\ge\rho\,\frac{\dist(x,\ker H)}{N}
 \qquad(x\in\F_2^N).
\]
A CSS code is quantum locally testable with soundness at least $\rho$
if both check matrices have soundness at least $\rho$.
\end{definition}

{The vector $Hx$ is the syndrome, and $|Hx|$ counts the violated
checks. Thus ${|Hx|}/{m}$ is the rejection probability when a row of $H$
is chosen uniformly, while $\dist(x,\ker H)/{N}$ is the minimum fraction
of coordinates that must be changed to satisfy all checks.
For a CSS code, each check matrix is normalized by its own number
of rows.}

As in the introduction, we write $[[N,K,D,\rho]]$ for the parameters of
a CSS code that is quantum locally testable with soundness at least
$\rho$, and we use the same notation for a family, meaning that each
stated bound holds for every member. We seek families with constant
rate $K/N=\Omega(1)$, linear distance $D=\Omega(N)$, uniformly bounded
check weights, and constant soundness $\rho=\Omega(1)$.

\subsection{Sheaf codes and local complexes}
\label{sec:cubical-sheaf}\label{sec:sheaf-definition}
We use sheaf codes in the sense of
\cite[Section~III-B]{PanteleevKalachev2024}. Tanner sheaves, generated
by codes on the codimension-one cells, will be our main example;
their duals follow~\cite[Definitions~3.10--3.11]{LiEtAl2025}.

Let $\X$ be an $r$-dimensional regular cell complex,\footnote{Regularity
means that each characteristic map is a homeomorphism from a closed
ball onto the cell's closure. We also assume that $\X$ is locally
finite (every point has a neighborhood meeting only finitely many
cells) and pure (every cell is a face of an $r$-cell).}
and write $\X(j)$ for its set of $j$-cells.
The relation $\sigma\le\tau$ means that $\sigma$ is a face of $\tau$;
write $\sigma\lessdot\tau$ when additionally
$\dim\tau=\dim\sigma+1$. The upper star of $\sigma$ is
$\X_{\ge\sigma}=\{\tau:\sigma\le\tau\}$, with
$\X_{\ge\sigma}(j)=\X_{\ge\sigma}\cap\X(j)$.
For a cell $\sigma$, let $U_\sigma=\{\omega\in\X(r):\sigma\le\omega\}$
be the set of top-dimensional cells containing $\sigma$.

\begin{definition}[Sheaf codes]
\label{def:sheaf-code}
A \emph{sheaf code} $\cF$ on $\X$ assigns a linear code
$\cF(\sigma)\subseteq F^{U_\sigma}$ to each cell $\sigma$, with
$\cF(\omega)=F$ for every top-dimensional cell $\omega$, such that
\[
 c\in\cF(\sigma),\quad \sigma\le\tau
 \quad\Longrightarrow\quad c|_{U_\tau}\in\cF(\tau).
\]
For $\sigma\le\tau$, coordinate projection defines the restriction
$
 \rho_{\sigma,\tau}:\cF(\sigma)\longrightarrow\cF(\tau),c\longmapsto c|_{U_\tau}.
$
In particular, $\rho_{\sigma,\omega}(c)=c_\omega$ for
$\omega\in U_\sigma$.
These coordinate projections automatically satisfy
\[
 \rho_{\sigma,\sigma}=\mathrm{id},\qquad
 \rho_{\tau,\omega}\circ\rho_{\sigma,\tau}
 =\rho_{\sigma,\omega}\quad(\sigma\le\tau\le\omega).
\]
\end{definition}

In this paper, we are primarily focused on Tanner sheaf codes and their duals.
\begin{example}[Tanner sheaf codes and their duals]
\label{ex:tanner-sheaf}\label{def:local-orthogonal-sheaf}
For each $(r-1)$-cell $\tau$, fix a local code
$\mathcal C_\tau\subseteq F^{U_\tau}$.
The \emph{Tanner sheaf} generated by
$\{\mathcal C_\tau\}_{\tau\in \X(r-1)}$
assigns, for cells $\sigma$ and $\omega$ with $\dim\sigma\le r-1$ and $\omega\in\X(r)$, the spaces
\begin{equation}
 \cF(\sigma):=
 \bigl\{c\in F^{U_\sigma}:c|_{U_\tau}\in\mathcal C_\tau
 \text{ for all }\tau\in\X_{\ge\sigma}(r-1)\bigr\},
 \qquad \cF(\omega):=F.
 \label{eq:sheaf-spaces}
\end{equation}
Specifically, $\cF(\tau)=\mathcal C_\tau$ for every $(r-1)$-cell $\tau$.

The \emph{dual Tanner sheaf} $\cF^\perp$ is generated using the same rule by
$\mathcal C_\tau^\perp\subseteq F^{U_\tau}$, the dual code taken with respect to the standard coordinate
pairing on $F^{U_\tau}$. 
\end{example}

\paragraph{Cochains and chains.}
\label{sec:sheaf-differential}
For any sheaf code $\cF$ on $\X$, the cochain spaces and differential are
\cite[Section~VI]{PanteleevKalachev2024}
\begin{equation*}
 C^k=C^k(\X;\cF)=
       \bigoplus_{\sigma\in \X(k)}\cF(\sigma),\qquad
 (\delta^k x)_\tau=
       \sum_{\substack{\sigma\lessdot\tau\\\dim\sigma=k}}
            \rho_{\sigma,\tau}(x_\sigma).
\end{equation*}
For a regular cell complex $\X$, we have $\delta^{k+1}\delta^k=0$
in characteristic $2$.

Its dual chain complex has $C_k=(C^k)^*$ and
$\partial_k=(\delta^{k-1})^*$, where $V^*=\Hom_F(V,F)$.
The blocks of $C^k$ are the summands $\cF(\sigma)$, and those of
$C_k$ are the summands $\cF(\sigma)^*$.
Terms and maps outside the specified degree range are zero. Write
\[
 H^k(\X;\cF)=\ker\delta^k/\im\delta^{k-1},\qquad
 H_k(\X;\cF)=\ker\partial_k/\im\partial_{k+1}.
\]
We also write $H^k(C)$ and $H_k(C)$ when the complex is understood.

To study the sheaf near a cell $\sigma$, we restrict the cochain complex
to cochains supported on its upper star, following
\cite[Eq.~(3.1) and Definition~3.6]{LiEtAl2025}.

\begin{definition}[Local complex and local acyclicity]
\label{def:upper-star-complex}\label{def:local-acyclicity}
For an $\ell$-face $\sigma$, its local cochain complex on the
upper star has terms
\begin{equation}
 K_\sigma^j(\cF)=
 \bigoplus_{\substack{\tau\ge\sigma\\\dim\tau=\ell+j}}
 \cF(\tau),\qquad 0\le j\le r-\ell.
 \label{eq:upper-star-complex}
\end{equation}
Its differential $\delta_\sigma$ sums the restriction
maps over incidences within the upper star.
{Here $K_\sigma^0(\cF)=\cF(\sigma)$, and we can regard $K_\sigma^j(\cF)$ as
the subspace of $C^{\ell+j}(\X;\cF)$ of cochains vanishing off the
upper star. Since the upper star is
closed under passing to cofaces, restricting $\delta$ to the upper star subspace defines $\delta_\sigma$ and gives $\delta_\sigma^2=0$.}

The sheaf $\cF$ is \emph{locally acyclic} if each $K_\sigma$ is exact
below its top degree: $\ker\delta_\sigma^j=\im\delta_\sigma^{j-1}$
for $0\le j<r-\ell$, where $\im\delta_\sigma^{-1}=0$.
\end{definition}
In the cubical tensor-product setting of \cite{DinurLinVidick2024}, each $\cF(\sigma)$ is a
tensor product of code spaces. Then, the resulting local complex defined above is in fact isomorphic with DLV's local
complex~\cite[Section~5.1 and Lemma~5.2]{DinurLinVidick2024}.

We also note that a locally acyclic sheaf code is in fact the Tanner
sheaf generated by its codimension-one codes
\cite[Theorem~3.3 and Corollary~3.7]{LiEtAl2025}.

\subsection{Cubical complexes}
\label{sec:cubical-complex}\label{sec:geometry-construction}
For instantiation, we use the decorated Cayley cubical complexes of
\cite[Definitions~3.1--3.2]{HLMRZ2025}.

\begin{definition}[Cubical generating sets]
\label{def:cubical-generators}
Let $G$ be a finite group. Nonempty subsets $B_1,\ldots,B_r\subseteq G$
are cubical generating sets if
\begin{equation}
 B_i^{-1}=B_i,\qquad B_iB_j=B_jB_i\quad(i\ne j),\qquad
 |B_1\cdots B_r|=\prod_{i=1}^r|B_i|.
 \label{eq:cubical-generators}
\end{equation}
\end{definition}
{Here $B_iB_j=\{bc:b\in B_i,\ c\in B_j\}$, so
$B_iB_j=B_jB_i$ is an equality of sets and does not require
individual generators to commute.}
The cardinality condition and set-wise commutation imply that for any distinct set of indices $\{i_1,\dots,i_k\} \subseteq [r]$, each element of $B_{i_1}\cdots B_{i_k}$ has a unique factorization $b_{i_1}\cdots b_{i_k}$ with $b_{i_j}\in B_{i_j}$.

\begin{definition}[Decorated Cayley cubical complex]
\label{def:cayley-cubical}\label{def:colored-cubical}
Given cubical generating sets $B_1,\ldots,B_r$, write $|B_i|=q_i+1$.
The complex $\X=\operatorname{Cay}_{\square}(G;B_1,\ldots,B_r)$
is defined as follows.
\begin{itemize}
\item Its vertex set is $\X(0)=G\times\F_2^r$; the color of
$(g,x)$ is $x$.
\item Its $r$-faces are the sets
$f=\{(g_x,x):x\in\F_2^r\}\subseteq\X(0)$ satisfying
\[
 g_x^{-1}g_{x+e_i}\in B_i\qquad(x\in\F_2^r,\ i\in[r]).
\]
\item For $I\subseteq[r]$, an $I$-subcube of $\F_2^r$ is a set
$C=x_0+\operatorname{span}_{\F_2}\{e_i:i\in I\}$.
The faces over $C$ are
\[
 \X(C)=\{f|_C:f\in\X(r)\},\qquad
 f|_C=\{(g_x,x):x\in C\}.
\]
For $\sigma\in\X(C)$, write $\operatorname{type}(\sigma)=I$
and $\dim\sigma=|I|$; subfaces are obtained by further restriction.
Write $\X(I)$ for all faces of type $I$, and
$\X(j)=\bigcup_{|I|=j}\X(I)$ for all $j$-faces.
\end{itemize}
We identify $\X$ with its standard cubical realization,
a finite, pure $r$-dimensional regular cell complex.
\end{definition}
\begin{lemma}[Face uniqueness and coface counts {\cite[Lemma~3.3]{HLMRZ2025}}]
\label{lem:cubical-incidence}
{A nonempty vertex set lies in at most one face over the smallest
coordinate subcube containing its colors. For a face $\sigma$ of
type $I$ and any $I\subseteq J\subseteq[r]$, we have
\[
 \bigl|\{\tau\ge\sigma:\operatorname{type}(\tau)=J\}\bigr|
 =\prod_{j\in J\setminus I}(q_j+1).
\]
In particular, $|U_\sigma|=\prod_{j\notin I}(q_j+1)$.}
\end{lemma}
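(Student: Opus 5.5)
The plan has two parts. First, show that a face is pinned down by its vertices over its own subcube. Then count cofaces by extending a face one new direction at a time.

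\emph{Face uniqueness.} Let $S\subseteq\X(0)$ be a nonempty vertex set, and let $C=x_0+\operatorname{span}\{e_i:i\in I\}$ be the smallest coordinate subcube containing its colors. Suppose $\sigma=f|_C$ and $\sigma'=f'|_C$ both contain $S$, with vertices $(g_x,x)$ and $(g'_x,x)$ for $x\in C$. Two observations drive the argument.
\begin{itemize}
\item Each face has exactly one vertex of each color in $C$. So for $x$ a color occurring in $S$, both $(g_x,x)$ and $(g'_x,x)$ equal the unique vertex of $S$ with that color, giving $g_x=g'_x$.
\item The cubical condition propagates agreement along paths. Consider a geodesic in $C$ from $x$ to $y$ that flips a set $J$ of coordinates, in some order $i_1,\dots,i_m$. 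Then $g_x^{-1}g_y=b_{i_1}\cdots b_{i_m}$, with each $b_{i_t}\in B_{i_t}$ read off along the path.
\end{itemize}
The set-wise commutation $B_iB_j=B_jB_i$, together with unique factorization in $B_{i_1}\cdots B_{i_m}$, should show that the collection of values $\{g_z:z\text{ in the subcube spanned by }x,y\}$ is determined by $g_x$ and $g_y$.

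Concretely, $g_x^{-1}g_y$ has a unique factorization in each ordering. I would argue by induction on $m$ that the entire sub-cube is determined. For $m=1$ this is trivial. For larger $m$, the intermediate vertex $g_xb_{i_1}$ is fixed by unique factorization in the order beginning with $i_1$. The same holds for every starting direction, and induction finishes the step.

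The colors present in $S$ span $C$ by minimality. Using that, I would fill in all of $C$ from the agreement points. The main obstacle is here: the agreed colors need not be antipodal, so the induction has to handle a general spanning set of colors. My approach is to grow the agreement set by taking joins of pairs of subcubes: if $\sigma$ and $\sigma'$ agree at $x$ and $y$, they agree on the subcube $x\vee y$. Iterating this reaches the smallest subcube containing all of them, which is $C$.

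\emph{Coface counts.} Fix $\sigma$ of type $I$ over $C$, and $J\supseteq I$. I would show that type-$J$ cofaces correspond bijectively to choices of $(b_j)_{j\in J\setminus I}\in\prod_{j\in J\setminus I}B_j$. For surjectivity, take any coface $\tau$. Let $x_0\in C$ and read off the generators $b_j=g_{x_0}^{-1}g_{x_0+e_j}$. By uniqueness (the first part, applied to $\sigma$ together with these edge vertices), $\tau$ is determined by $\sigma$ and these edges. For injectivity and existence, I would handle one new direction $j$ at a time. The task is to extend $\sigma$ over $C+\{0,e_j\}$ starting from a prescribed edge label $b\in B_j$ at $x_0$.

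The new values are forced by set-wise commutation: $b_ib'=bb_i'$, which is uniquely solvable because $|B_iB_j|=|B_i||B_j|$. One must check that this is consistent around every square of $C$. I would get that consistency from unique factorization in triple products $B_iB_{i'}B_j$. I would also use that every face extends to some $r$-face, which follows from the same extension argument, so the extension really is a restriction of some $f\in\X(r)$.

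Multiplying the $q_j+1$ choices over $j\in J\setminus I$ gives the count. Taking $J=[r]$ yields $|U_\sigma|=\prod_{j\notin I}(q_j+1)$.
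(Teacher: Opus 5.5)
Your outline is correct. The paper does not prove this lemma itself; it imports it from \cite[Lemma~3.3]{HLMRZ2025}, so your sketch is a self-contained verification.

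\emph{Face uniqueness.} This half works as written. Your induction shows that two antipodal vertices of a subcube determine it. Iterated joins do reach $C$: for subcubes $C_1,C_2$ one can always pick $x\in C_1$, $y\in C_2$ such that $x\vee y$ is the smallest subcube containing $C_1\cup C_2$.

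\emph{The step that needs more.} Per-element unique factorization does not compare the two routes around a square. Suppose the square has base $g$ and far corner $gc$ with $c\in B_iB_{i'}$, and $b\in B_j$ is the prescribed vertical edge. Then both routes give top corners $gx$ with $x\in cB_j\cap bB_iB_{i'}$, and you need this intersection to be a single element. That is a counting fact:
\begin{itemize}
\item By unique factorization, each element of $B_iB_{i'}B_j$ lies in exactly one coset $c'B_j$ with $c'\in B_iB_{i'}$, and in exactly one $b'B_iB_{i'}$ with $b'\in B_j$.
\item Every intersection $c'B_j\cap b'B_iB_{i'}$ is nonempty. Indeed, $B_jB_iB_{i'}=B_iB_{i'}B_j$, so you can write $b'^{-1}c'=p\beta$ with $p\in B_iB_{i'}$ and $\beta\in B_j$. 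Then $c'\beta^{-1}=b'p$, and $\beta^{-1}\in B_j$.
\item There are $|B_i||B_{i'}||B_j|=|B_iB_{i'}B_j|$ such intersections, and they partition $B_iB_{i'}B_j$, so each is a singleton.
\end{itemize}
With this, and since square moves connect any two edge paths in the cube $C$, your extension is well defined. A small labelling point: what you call surjectivity is really injectivity of $\tau\mapsto(b_j)$; existence is the surjective direction.

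\emph{A shorter route.} You can skip the consistency check entirely. Write $B_T$ for the set product over $T$. Fix $x_0\in C$ and $g=g_{x_0}$.
\begin{itemize}
\item A type-$J$ face through $(g,x_0)$ is determined by its diagonal $h=g^{-1}g_{x_0+e_J}\in B_J$. Its vertex at $x_0+e_T$ is $g\,u_T(h)$, where $h=u_T(h)w$ is the unique factorization in $B_T\cdot B_{J\setminus T}$.
\item Conversely, these formulas define a face for every $h\in B_J$, and it extends to an $r$-face via $hw'$ with $w'\in B_{[r]\setminus J}$.
\item Let $h_\sigma=g^{-1}g_{x_0+e_I}$ be the diagonal of $\sigma$. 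By your antipodal uniqueness, $\tau\ge\sigma$ exactly when $u_I(h_\tau)=h_\sigma$, i.e.\ when $h_\tau\in h_\sigma B_{J\setminus I}$.
\item This coset has $\prod_{j\in J\setminus I}(q_j+1)$ elements by unique factorization.
\end{itemize}
Your incremental construction is more local, in the style of square completion for VH-complexes. Its cube condition, however, ultimately rests on the same counting fact.
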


At each vertex $(g,x)$, the $q_i+1$ direction-$i$ edges lead to
$(gb,x+e_i)$, $b\in B_i$. Choosing one such edge in each direction
of $I$ determines a unique $I$-face; see
Figure~\ref{fig:edges-and-cubes}.
The complex $\X$ is connected exactly when the steps $(b,e_i)$,
$b\in B_i$, $i\in[r]$, generate $G\times\F_2^r$.
We use lower and upper \emph{directional degree bounds}
$\underline q$ and $\overline q$ such that
$
 1\le\underline q\le q_i+1\le\overline q
$ for any $i\in[r]$.

The coface counts also give the total number of cells.
\begin{lemma}[Cell counts]\label{lem:cell-counts}
Each full vertex color occurs $|G|$ times, and
\begin{align*}
 F_k:=|\X(k)|=|G|2^{r-k}\nu_k,\qquad
 \nu_k:=\sum_{|I|=k}\prod_{i\in I}(q_i+1),
\end{align*}
where $\nu_k$ is also the number of $k$-faces containing a fixed vertex.
\end{lemma}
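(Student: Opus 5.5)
The plan is a double count of incidences between vertices and $k$-faces, after first checking that each color class has exactly $|G|$ vertices. By Definition~\ref{def:cayley-cubical}, $\X(0)=G\times\F_2^r$, and every vertex $(g,x)$ has color $x$. So for each fixed $x\in\F_2^r$ the vertices of color $x$ are exactly $\{(g,x):g\in G\}$, which has $|G|$ elements. This gives the first claim.

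Next I count the $k$-faces containing a fixed vertex $v=(g,x)$. Every such face has some type $I$ with $|I|=k$, and lies over the $I$-subcube $x+\operatorname{span}\{e_i:i\in I\}$, since faces are restrictions $f|_C$ with $C$ containing the color $x$. I then apply Lemma~\ref{lem:cubical-incidence} to the $0$-face $\sigma=\{v\}$, whose type is $\emptyset$, taking $J=I$. The number of faces of type $I$ containing $v$ is $\prod_{i\in I}(q_i+1)$. Summing over all $I$ with $|I|=k$, the number of $k$-faces through $v$ is $\nu_k$, independently of $v$.

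Now I double count the pairs $(v,\tau)$ with $v\in\X(0)$, $\tau\in\X(k)$ and $v\in\tau$. Counting from the vertex side, there are $|G|2^r$ vertices, each lying in $\nu_k$ faces, so the total is $|G|2^r\nu_k$. Counting from the face side, each $k$-face $\tau=f|_C$ has exactly $|C|=2^k$ vertices, one for each color in $C$. These vertices are distinct because their colors differ. Comparing the two counts gives $2^kF_k=|G|2^r\nu_k$, which rearranges to the stated formula $F_k=|G|2^{r-k}\nu_k$.

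The only delicate point is whether two faces of type $I$ through $v$ might coincide as vertex sets, or whether the coface count really refers to distinct faces. The uniqueness part of Lemma~\ref{lem:cubical-incidence} settles this: a face is determined by its vertex set over the minimal subcube. Granting the cited lemma, the rest of the argument is routine bookkeeping.
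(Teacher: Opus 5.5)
Your proposal is correct and essentially matches the paper's proof. You apply Lemma~\ref{lem:cubical-incidence} to a vertex and sum over direction sets of size $k$ to get $\nu_k$, then double count vertex--face incidences using that each $k$-face has $2^k$ vertices, which gives $2^kF_k=|G|2^r\nu_k$; the color-class count via $\X(0)=G\times\F_2^r$ is identical as well.
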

\begin{proof}
Apply Lemma~\ref{lem:cubical-incidence} to a vertex and sum
over direction sets of size $k$ to obtain $\nu_k$ faces through it.
Since every $k$-face has $2^k$ vertices, counting vertex-face
incidences gives
\[
 2^k F_k=|\X(0)|\nu_k=|G|2^r\nu_k.
\]
Each full color occurs $|G|$ times because
$\X(0)=G\times\F_2^r$.
\end{proof}

\section{Expansion framework for cubical sheaf complexes}
\label{sec:complex-expansion}\label{sec:local-to-global}
We now restrict to finite cubical complexes equipped with sheaf
codes. The definitions and one-sided expansion arguments apply to
general sheaf codes. Whenever $\cF^\perp$ is used, we assume that
$\cF$ is Tanner and use the dual of Example~\ref{ex:tanner-sheaf}.
For the expansion arguments, $\X$ is a decorated Cayley
cubical complex as in Section~\ref{sec:cubical-complex}. We follow
the framework of~\cite{DinurLinVidick2024}.
Recall from Lemma~\ref{lem:cell-counts} that $F_k=|\X(k)|$ and
$\nu_k=|\X_{\ge v}(k)|$ for every vertex $v$.
The directional degree bounds satisfy $1\le\underline q\le q_i+1\le\overline q$.

\subsection{Expansion parameters and standard reductions}
\label{sec:expansion-parameters}\label{sec:cominimal-expansion}
We follow closely the terminology in~\cite[Definition~2.1]{DinurLinVidick2024}.

\begin{definition}[Distances and expansion]
\label{def:systolic-distances}\label{def:cycle-expansion}
The degree-$k$ systolic and cosystolic distances are
\[
 \mu_{\rm syst}(k)=\min_{x\in\ker\partial_k\minus\im\partial_{k+1}}|x|_b,
 \qquad
 \mu_{\rm cosyst}(k)=\min_{x\in\ker\delta^k\minus\im\delta^{k-1}}|x|_b.
\]
The cycle and cocycle expansion constants are
\[
 \eps_{\rm cyc}(k)=\inf_{x\notin\ker\partial_k}
     \frac{|\partial_kx|_b}{\db(x,\ker\partial_k)},\qquad
 \eps_{\rm cocyc}(k)=\inf_{x\notin\ker\delta^k}
     \frac{|\delta^kx|_b}{\db(x,\ker\delta^k)}.
\]
\end{definition}

The support argument below is stated in terms of DLV's local co-minimality
\cite[Definition~2.2]{DinurLinVidick2024}.

\begin{definition}[Local co-minimality]\label{def:local-cominimality}
A cochain $x\in C^k(\X;\cF)$ is \emph{locally co-minimal} if
\[
 |x+\delta y|_b\ge|x|_b
\]
for every vertex $v$ and every $y\in C^{k-1}(\X;\cF)$ supported
on $\X_{\ge v}(k-1)$.
The locally co-minimal distance is
\[
 d_{\coloc}(k)=\min\{|x|_b:0\ne x\in\ker\delta^k,
                       \ x\text{ locally co-minimal}\},
\]
with value $+\infty$ if the set is empty.
\end{definition}

In degree zero the local co-minimality condition is vacuous.

\begin{lemma}[From local co-minimality to expansion,
{\cite[Lemma~2.3]{DinurLinVidick2024}}]
\label{lem:cominimal-to-expansion}
For $0\le k<r$, one has
$\mu_{\rm cosyst}(k)\ge d_{\coloc}(k)$.
For $0\le k\le r-2$,
\begin{equation}
 \eps_{\rm cocyc}(k)\ge
 \min\left\{\frac1{\nu_k},\,
              \frac{d_{\coloc}(k+1)}{F_k}\right\}.
 \label{eq:cominimal-to-expansion}
\end{equation}
\end{lemma}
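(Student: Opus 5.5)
The plan is to prove both inequalities by a local-correction (greedy descent) argument built on local co-minimality.

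\emph{Cosystolic bound.} Fix $x\in\ker\delta^k\minus\im\delta^{k-1}$ of minimum block weight. If $x$ were not locally co-minimal, there would be a vertex $v$ and a $y$ supported on $\X_{\ge v}(k-1)$ with $|x+\delta y|_b<|x|_b$. But $x+\delta y$ is again a cocycle in the same cohomology class, hence not in $\im\delta^{k-1}$, so this contradicts minimality. Therefore $x$ is a nonzero locally co-minimal cocycle, and $|x|_b\ge d_{\coloc}(k)$. For $k=0$ local co-minimality is vacuous and the statement is immediate. This gives $\mu_{\rm cosyst}(k)\ge d_{\coloc}(k)$.

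\emph{Cocycle expansion.} Take $x\in C^k\minus\ker\delta^k$ and run the following process. Set $x_0=x$ and $z_0=\delta x$. While $z_t$ is not locally co-minimal in $C^{k+1}$, pick a vertex $v$ and $w\in C^k$ supported on $\X_{\ge v}(k)$ with $|z_t+\delta w|_b<|z_t|_b$. Then update
\[
 x_{t+1}=x_t+w,\qquad z_{t+1}=\delta x_{t+1}=z_t+\delta w.
\]
Each step lowers $|z_t|_b$ by at least $1$ and changes at most $|\X_{\ge v}(k)|=\nu_k$ blocks of $x_t$. The process terminates after at most $T\le|\delta x|_b$ steps at some $z_T=\delta x_T$. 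This $z_T$ is a coboundary, hence a cocycle, and it is locally co-minimal.

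There are two cases at termination.

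If $z_T=0$, then $x_T\in\ker\delta^k$ and $\db(x,\ker\delta^k)\le\nu_k T\le\nu_k|\delta x|_b$. This gives the ratio bound $1/\nu_k$.

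If $z_T\ne0$, then $z_T$ is a nonzero locally co-minimal element of $\ker\delta^{k+1}$, so $|\delta x|_b\ge|z_T|_b\ge d_{\coloc}(k+1)$. In this case we use the trivial estimate $\db(x,\ker\delta^k)\le|x|_b\le F_k$, since $0\in\ker\delta^k$. This yields
\[
 \frac{|\delta x|_b}{\db(x,\ker\delta^k)}\ge\frac{d_{\coloc}(k+1)}{F_k}.
\]

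Taking the minimum of the two cases and the infimum over $x$ gives \eqref{eq:cominimal-to-expansion}. The hypothesis $k\le r-2$ ensures that $k+1<r$, so $C^{k+1}$ is a genuine degree in which $d_{\coloc}(k+1)$ is defined.

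\emph{Main point to check.} The step needing care is verifying that the local move $w$ has exactly the right support type. Applying Definition~\ref{def:local-cominimality} in degree $k+1$ means $w$ ranges over cochains supported on $\X_{\ge v}(k)$, which is what bounds the change in $x$ by $\nu_k$ blocks per step. One must also check that $z_T$ remains a cocycle, which holds because $\delta^{k+1}\delta^k=0$.
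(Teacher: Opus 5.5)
Your proof is correct and is essentially the standard argument behind \cite[Lemma~2.3]{DinurLinVidick2024}, which the paper cites without reproducing: a minimum-weight nontrivial cocycle is automatically locally co-minimal, and greedy local correction of $\delta x$ either reaches $0$ after at most $|\delta x|_b$ moves of at most $\nu_k$ blocks each, or stops at a nonzero locally co-minimal cocycle of block weight at least $d_{\coloc}(k+1)$. One small correction to your closing remark: $d_{\coloc}(r)$ is also well defined, so the restriction $k\le r-2$ is not actually used by your argument (it is harmless).
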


Our Tanner sheaf and its dual are the generated sheaves of
\cite[Definitions~3.10--3.11]{LiEtAl2025}. We use their duality theorem.
\begin{proposition}[Sheaf duality,
{\cite[Theorem~3.17]{LiEtAl2025}}]
\label{prop:sheaf-duality}\label{sec:dual-sheaf}
Let $\cF$ be a Tanner sheaf on a finite pure $r$-dimensional cubical
complex $\X$, and let $\cF^\perp$ be its dual Tanner sheaf.
If both are locally acyclic, then, for $0\le k\le r$,
\[
 H^{r-k}(\X;\cF^\perp)\cong H_k(\X;\cF).
\]
The corresponding systolic/cosystolic distances and
cycle/cocycle expansion constants are of the same order up to factors
depending only on $r$ and $D$, where $D$ is the maximum number of
cofaces of dimension one higher containing a cell.
\end{proposition}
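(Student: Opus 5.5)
The plan is to build a single double complex that resolves both sides, so that one local-acyclicity hypothesis identifies it with $C^\bullet(\X;\cF^\perp)$ and the other with the dual chain complex $C_\bullet(\X;\cF)$. Because every map in the construction is supported on upper stars, the resulting isomorphism on (co)homology moves supports only by a bounded amount, and this is what transfers the distances and expansion constants.

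\textbf{Step 1: local duality on one upper star.} Fix an $\ell$-cell $\sigma$. The top term of $K_\sigma(\cF)$ is $K_\sigma^{r-\ell}(\cF)=F^{U_\sigma}$. For a Tanner sheaf, $\cF(\tau)$ for $\tau\ge\sigma$ is cut out inside $F^{U_\tau}$ by the codimension-one codes $\mathcal C_\rho$ with $\rho\ge\tau$; the codes $\mathcal C_\rho^\perp$ generating $\cF^\perp$ are exactly the annihilators of these constraints. I would check that $(\delta_\sigma^{r-\ell-1})^*$, composed with the coordinate pairing on $F^{U_\sigma}$, is the map that sums the coordinate embeddings of the spaces $\mathcal C_\rho^\perp$. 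This should give a canonical isomorphism
\[
\cF^\perp(\sigma)\cong\bigl(F^{U_\sigma}/\im\delta_\sigma^{r-\ell-1}\bigr)^*\cong\ker\bigl(\delta_\sigma^{r-\ell-1}\bigr)^*,
\]
or a dual version of it. Combined with exactness of $K_\sigma(\cF)$ below the top degree, this says that the truncated dual local complex $K_\sigma(\cF)^*$ is a resolution whose only homology is $\cF^\perp(\sigma)$, sitting in the correct degree. The symmetric statement, with the roles of $\cF$ and $\cF^\perp$ exchanged, follows in the same way from local acyclicity of $\cF^\perp$. The identification $(\mathcal C^\perp)^\perp=\mathcal C$ makes the two statements compatible.

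\textbf{Step 2: the double complex.} Set
\[
D^{p,q}=\bigoplus_{\substack{\sigma\le\tau\\ \dim\sigma=r-p,\ \dim\tau-\dim\sigma=q}}\cF(\tau)^*,
\]
with one differential running along $\sigma$ (from the cell structure of $\X$, using characteristic $2$) and the other being the dual local differential inside $K_\sigma(\cF)$. Taking homology in the $q$-direction first, Step 1 shows that each column collapses to $\bigoplus_{\dim\sigma=r-p}\cF^\perp(\sigma)=C^{p}(\X;\cF^\perp)$, with induced differential $\delta_{\cF^\perp}$. Taking homology in the other direction first, local acyclicity of $\cF$ (reindexed along $\tau$, with exactness now coming from the lower links of the cubes, which are exact because they are cellular chain complexes of cubes) leaves only the terms with $\sigma=\tau$, namely $C_{r-p}(\X;\cF)$ with differential $\partial$. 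Since both spectral sequences degenerate at the first page, the total complex is quasi-isomorphic to each, and this gives $H^{r-k}(\X;\cF^\perp)\cong H_k(\X;\cF)$.

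\textbf{Step 3: quantitative transfer.} I would make the quasi-isomorphism explicit by zig-zag lifting. A cocycle $x\in C^{r-k}(\X;\cF^\perp)$ is lifted through the staircase using local contracting homotopies of $K_\sigma(\cF)$. These homotopies exist and have norm bounded in terms of $r$ and $D$, since each local complex involves at most $D^{r}$ blocks. Every move in the zig-zag stays inside upper stars, so each block of the image depends on $O_{r,D}(1)$ blocks of the input and conversely. This gives $|\Phi(x)|_b\le c(r,D)|x|_b$, with a matching inverse bound, and likewise for the boundaries $|\partial\,\cdot|_b$ versus $|\delta\,\cdot|_b$ and for the distances to $\ker$. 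From these bounds the systolic and cosystolic distances and the cycle and cocycle expansion constants agree up to factors depending only on $r$ and $D$.

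\textbf{Main obstacle.} I expect Step 3 to be the hardest part, specifically controlling the distance to the kernel rather than just weights. To bound $\db(\Phi(x),\ker\partial)$ from below I would transport a near-optimal correction of $\Phi(x)$ back through the inverse zig-zag. For this the homotopy corrections must themselves be supported near the original support, and I would verify that by induction on $q$.
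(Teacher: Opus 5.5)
The paper gives no argument for this proposition; it imports it from \cite[Theorem~3.17]{LiEtAl2025}, so your proposal is an independent proof rather than a reconstruction. Its core idea is sound: a double complex on pairs $\sigma\le\tau$ with entries $\cF(\tau)^*$, collapsing one way by local acyclicity of $\cF$ and the other way by contractibility of cells. But Step~2 is set up incorrectly in two places.

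\emph{The grading.} As written, your grading does not give a double complex. With $p=r-\dim\sigma$ and $q=\dim\tau-\dim\sigma$, moving $\sigma$ changes $p+q$ by $2$, while the dual local differential changes it by $1$. Also, $\bigoplus_{\dim\sigma=r-p}\cF^\perp(\sigma)$ is $C^{r-p}$, not $C^p$. Use bidegree $(\dim\sigma,\,r-\dim\tau)$. Let $d_h$ be the identity of $\cF(\tau)^*$ from $(\sigma,\tau)$ to $(\sigma',\tau)$ for $\sigma\lessdot\sigma'\le\tau$. Let $d_v=\rho_{\tau',\tau}^*$ from $(\sigma,\tau)$ to $(\sigma,\tau')$ for $\sigma\le\tau'\lessdot\tau$. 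These square to zero and commute by the diamond property in characteristic two; moving $\sigma$ to faces instead would not commute with $d_v$. For fixed $\sigma$ the $d_v$-complex is $K_\sigma(\cF)^*$. Its cohomology is $\cF^\perp(\sigma)$, only on the edge $\dim\tau=r$ and in total degree $\dim\sigma$, and $d_h$ induces coordinate restriction, i.e.\ the differential of $\cF^\perp$.

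\emph{The second collapse.} You misidentify which term survives. For fixed $\tau$, the $d_h$-complex is $\cF(\tau)^*\otimes C^\bullet(\bar\tau;\F_2)$, the cellular cochains of a closed cube. Its only cohomology is $H^0$: the diagonal copy of $\phi\in\cF(\tau)^*$ over the vertices of $\tau$. It is not the term $\sigma=\tau$; if that survived, every class would sit in total degree $r$. The diagonal has total degree $r-\dim\tau$, and $d_v$ sends the diagonal of $\phi$ to the diagonal of $\partial\phi$. This yields $C_{r-n}(\X;\cF)$ in total degree $n$, hence $H^{r-k}(\X;\cF^\perp)\cong H_k(\X;\cF)$. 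No local acyclicity is used in this direction.

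In Step~1, $(\delta_\sigma^{r-\ell-1})^*$ is the map $y\mapsto(y|_{U_\rho}\bmod\mathcal C_\rho^\perp)_\rho$. The sum of the embeddings of the $\mathcal C_\rho^\perp$ is instead the local differential of $\cF^\perp$. The kernel is still $\cF^\perp(\sigma)$, which is all you use.

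In Step~3, zig-zagging cocycles controls the systolic and cosystolic distances. The expansion constants, however, need more: chain maps $\Phi,\Psi$ defined on all (co)chains, together with homotopies $\Psi\Phi-1=dH+Hd$ and $\Phi\Psi-1=dH'+H'd$, all changing block weight by at most some $c(r,D)$. Then, for any $x$, let $y$ be a cocycle nearest to $\Phi x$. The cocycle $\Psi y+dHx$ lies within $(c^2/\eps+c)\,|dx|_b$ of $x$, where $\eps$ is the expansion constant on the other side. This is where the replacement of $\eps$ by $\min\{1,\eps\}$ comes from. The homological perturbation lemma, applied to this bounded double complex, produces such data from the local contractions, with terminating series. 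Note that $\Phi\colon C^\bullet(\X;\cF^\perp)\to C_{r-\bullet}(\X;\cF)$ is built from the cube contractions, and $\Psi$ from contractions of $K_\sigma(\cF)^*$. The whole argument uses only local acyclicity of $\cF$; the hypothesis on $\cF^\perp$ is needed only for the symmetric statement.
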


In this comparison, each expansion constant $\eps$ is replaced by
$\min\{1,\eps\}$. For our decorated cubical complexes, $D\le r\overline q$.

Finally, the following lemma gives lower bounds on the distance and
soundness of the associated CSS code in terms of the systolic and
cosystolic distances and the cycle and cocycle expansion constants,
respectively. It is the binary case of~\cite[Lemma~2.7]{DinurLinVidick2024},
with $H_X,H_Z$ interchanged.

\begin{lemma}[CSS parameters;
binary case of {\cite[Lemma~2.7]{DinurLinVidick2024}}]
\label{lem:css-parameters}
Let $C$ be a finite cochain complex over $\F_2$, with block
decompositions $C^j=\bigoplus_{\sigma\in\X(j)}V_\sigma$.
All dimensions in this lemma are over $\F_2$.
Choose bases on each $V_\sigma$ and dual bases on $V_\sigma^*$,
and let $B^j$ represent $\delta^j$.
If $\dim C^{k-1},\dim C^k,\dim C^{k+1}>0$, the checks $H_X=B^k$,
$H_Z=(B^{k-1})^{\mathsf T}$ give a binary CSS code with parameters
$[[N,K,D,\rho]]$, where
\[
 N=\dim C^k,\qquad K=\dim H^k(C),\qquad
 D\ge\min\{\mu_{\rm syst}(k),\mu_{\rm cosyst}(k)\},
\]
and $\rho$ is given by
\begin{equation}
 \rho=\frac{\dim C^k}{\max_{\sigma\in\X(k)}\dim V_\sigma}
 \min\left\{
 \frac{\eps_{\rm cocyc}(k)}{\dim C^{k+1}},\,
 \frac{\eps_{\rm cyc}(k)}{\dim C^{k-1}}\right\}.
 \label{eq:css-parameter-soundness}
\end{equation}
\end{lemma}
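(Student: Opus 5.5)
The approach is to reduce Lemma~\ref{lem:css-parameters} to unwinding the definitions of Section~\ref{sec:css-parameters}, checking each parameter in turn against the block-weight quantities of Definition~\ref{def:systolic-distances}.

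\textbf{Code and rate.} First, $\delta^k\delta^{k-1}=0$ gives $B^kB^{k-1}=0$, which is exactly $H_XH_Z^{\mathsf T}=0$. For the rate we use $K=N-\rank B^k-\rank B^{k-1}$, which equals $\dim\ker\delta^k-\dim\im\delta^{k-1}=\dim H^k(C)$.

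\textbf{Distance.} Take $x\in\ker H_X\minus\im H_Z^{\mathsf T}$. Then $x$ is a nontrivial cocycle, so $|x|\ge|x|_b\ge\mu_{\rm cosyst}(k)$. For the other side, with dual bases the matrix of $\partial_k=(\delta^{k-1})^*$ is $(B^{k-1})^{\mathsf T}=H_Z$, and the matrix of $\partial_{k+1}$ is $(B^k)^{\mathsf T}=H_X^{\mathsf T}$. Hence any $z\in\ker H_Z\minus\im H_X^{\mathsf T}$ is a nontrivial cycle in $C_k$, and $|z|\ge|z|_b\ge\mu_{\rm syst}(k)$. In both cases we use that a nonzero block contributes at least one nonzero coordinate.

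\textbf{Soundness for $H_X$.} Given $x$, note $|H_Xx|\ge|\delta^kx|_b$ and $\dist(x,\ker H_X)\le\max_\sigma\dim V_\sigma\cdot\db(x,\ker\delta^k)$, since changing one block changes at most $\dim V_\sigma$ coordinates. Combining these with $|\delta^kx|_b\ge\eps_{\rm cocyc}(k)\,\db(x,\ker\delta^k)$ and dividing by $m_X\le\dim C^{k+1}$ yields
\[
\frac{|H_Xx|}{m_X}\ge\frac{\eps_{\rm cocyc}(k)}{\dim C^{k+1}}\cdot\frac{\dist(x,\ker H_X)}{\max_\sigma\dim V_\sigma},
\]
and this matches \eqref{eq:css-parameter-soundness} after multiplying and dividing by $N=\dim C^k$.

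\textbf{Soundness for $H_Z$.} The same computation with $\partial_k$, $\eps_{\rm cyc}(k)$ and $m_Z\le\dim C^{k-1}$ handles $H_Z$. Here one uses that the blocks of $C_k$ are $V_\sigma^*$, which have the same dimensions as the $V_\sigma$.

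\textbf{Main obstacle.} The only delicate point is the normalization. The row counts $m_X$ and $m_Z$ are $\dim C^{k+1}$ and $\dim C^{k-1}$ exactly, so the inequality above goes in the right direction. The nonvanishing hypotheses guarantee that these matrices have a positive number of rows, so Definition~\ref{def:normalized-soundness} applies. In the degenerate case $x\in\ker\delta^k$, both sides of the soundness inequality are zero, so it holds trivially.
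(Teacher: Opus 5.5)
Your proposal is correct and takes essentially the same approach as the paper. The paper gives no separate argument and only cites the binary case of \cite[Lemma~2.7]{DinurLinVidick2024} with $H_X,H_Z$ interchanged; your steps are the routine verification behind that lemma, namely comparing coordinate and block weights in both directions, identifying the dual-basis matrices of $\partial_k$ and $\partial_{k+1}$ with $H_Z$ and $H_X^{\mathsf T}$, and using the exact row counts $m_X=\dim C^{k+1}$ and $m_Z=\dim C^{k-1}$.
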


\subsection{Parallel-face walks and geometric expansion}
\label{sec:geometric-expansion}
We adapt mixing method in \cite[Section~6]{DinurLinVidick2024}
to parallel-face graphs, whose moves may change edge labels. Throughout this section, functions on faces are
real-valued, and all inner products and Markov operators
are taken over $\mathbb R$.

\begin{definition}[Parallel-face graphs and spectral expansion]
\label{def:parallel-face-graphs}\label{def:directional-expansion}
Fix $I\subsetneq[r]$, $j\notin I$, and the colors outside
$I\cup\{j\}$. The direction-$j$ parallel-face graph has the
$I$-cells of these colors as vertices. Each $(I\cup\{j\})$-cell
gives an edge between its two opposite $I$-faces, called parallel
neighbors. This graph is $(q_j+1)$-regular and bipartite by color $j$.
The complex $\X$ satisfies the \emph{spectral assumption} with
parameter $0\le\lambda<1$ if every such graph is connected and
its normalized eigenvalues other than $1$ and $-1$ have absolute
value at most $\lambda$.
\end{definition}

For $0\le\ell\le k<r$, define the walk $W$ on $\X(k)$ as in
\cite[Definition~6.6]{DinurLinVidick2024}. From $f\in \X(k)$, select an $\ell$-face
$v\le f$, then a direction unoccupied by $v$ and a parallel
neighbor $v'$ in that direction, and finally a $k$-face containing
$v'$. Each choice is uniform among the available options.
We also write $W$ for its transition operator.

\begin{lemma}[Mixing estimate,
adapted from {\cite[Lemma~6.12]{DinurLinVidick2024}}]
\label{lem:cubical-mixing}
Suppose $\X$ satisfies Definition~\ref{def:directional-expansion}
with bound $\lambda$.
Fix $0\le\ell\le k<r$ and $R\ge\overline q/\underline q$.
For the walk $W$ above and any $A\subseteq \X(k)$,
\begin{equation}
 \langle\one_A,W\one_A\rangle
 \le\lambda|A|+D_{k\ell}|A|^2/F_k,\qquad
 D_{k\ell}=2^{r-\ell-1}\binom r\ell R^k.
 \label{eq:walk-mixing}
\end{equation}
The inner product is the counting inner product on $\X(k)$.
\end{lemma}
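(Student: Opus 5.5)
The plan is to write $W$ as a composition of simpler operators and reduce the estimate to the spectral bound on parallel-face graphs, in the manner of DLV's Lemma~6.12. Decompose the walk according to the intermediate data: the type $I$ of the $\ell$-face $v$ ($|I|=\ell$), the direction $j\notin I$, and the colors outside $I\cup\{j\}$. Up to nonuniform weights, $W$ is then a sum over the choices $(I,j)$ of terms $D_I^{\ast}P_{I,j}D_I$. Here $D_I\colon\mathbb R^{\X(k)}\to\mathbb R^{\X(I)}$ averages a function over the $I$-faces of $f$ (a down-operator), $P_{I,j}$ is the normalized adjacency operator of the direction-$j$ parallel-face graph on $\X(I)$, and the up-step averages uniformly over $k$-faces containing $v'$. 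So
\[
 \langle\one_A,W\one_A\rangle=\sum_{I,j} w_{I,j}\,\langle g_{I},P_{I,j}\,h_{I}\rangle,
\]
where $g_I,h_I$ are pushforwards of $\one_A$ to $\X(I)$. Because the up-step degree $|\X_{\ge v'}(k)|$ depends on the directional degrees $q_i+1$, $h_I$ is not literally the adjoint of $g_I$. This is where $R\ge\overline q/\underline q$ enters: comparing coface counts of size $\prod(q_i+1)$ over at most $k$ directions costs a factor of at most $R^k$.

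For each parallel-face graph component (fixed $I$, $j$ and outside colors), apply the expander mixing bound for a connected bipartite regular graph whose nontrivial eigenvalues have modulus at most $\lambda$. The eigenvalue $1$ corresponds to the constant function and $-1$ to the signed bipartite function. Both contribute a term of the form $\langle g,\one\rangle\langle\one,h\rangle/n$, while the remainder is at most $\lambda\|g\|\|h\|$. Summing the $\lambda$-part with Cauchy--Schwarz and the fact that $W$ is a stochastic operator (row sums one, column sums bounded up to the degree ratio) gives at most $\lambda|A|$; this requires checking that the weights $w_{I,j}$ sum to a probability. The mass terms are each at most $(\text{number of }k\text{-faces through an }I\text{-face})^2|A|^2/|\X(I)\text{-color class}|$. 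Lemma~\ref{lem:cubical-incidence} and Lemma~\ref{lem:cell-counts} convert this to a multiple of $|A|^2/F_k$. The number of choices of $I$ contributes $\binom r\ell$, the number of fixed colors outside $I\cup\{j\}$ gives at most $2^{r-\ell-1}$ color classes, and the degree ratios give $R^k$. Together these produce $D_{k\ell}$.

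The main obstacle will be the bookkeeping in the second step. The trivial and bipartite eigenvector contributions must combine cleanly; the bipartite $-1$ eigenvector is handled by bounding it with the total mass on each color side separately. Degree nonuniformity must also be absorbed into exactly $R^k$. I would first treat the regular case $q_i$ all equal to verify the constant, and then insert ratio bounds.
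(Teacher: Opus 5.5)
Your architecture (down step, parallel step on each connected parallel-face graph, up step, spectral bound off the constants) is the same as the paper's. However, the bookkeeping you sketch does not deliver the stated constants, and the difficulties are not where you place them.

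There is no non-adjointness. By Lemma~\ref{lem:cubical-incidence}, every face of type $I$ has the same number $d_I=\sum_{J\supseteq I,|J|=k}\prod_{i\in J\setminus I}(q_i+1)$ of $k$-cofaces. So your $h_I$ is a constant multiple of $g_I$ on $\X(I)$, and $W$ is a symmetric matrix whose column sums are exactly $1$. The paper packages this with the measure $\pi_\ell(v)=d_I/(b_{k\ell}F_k)$, the law of $v$ when $f$ is uniform and $v\le f$ is a uniform $\ell$-subface. Under this measure, down- and up-averaging are adjoint, so $\langle\one_A,W\one_A\rangle/F_k=\langle g,Mg\rangle_{\pi_\ell}$ for the single function $g=D\one_A$, with $\mathbb E_{\pi_\ell}g=|A|/F_k$ and $\|g\|_{\pi_\ell}^2\le|A|/F_k$. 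The nonconstant part then gives exactly $\lambda|A|$, and $R$ plays no role there. Likewise, in this quadratic form the bipartite eigenvector $\chi$ of a component $C$ contributes $-\langle g,\chi\rangle^2/|C|\le0$ and is simply dropped. It is not a term of the same form as the constant one, and bounding it in absolute value, as you propose, can double the mass term.

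The genuine gap is the aggregation of the constant-eigenvector terms. Bounding each component's term separately and multiplying by the number of components, $\binom r\ell2^{r-\ell-1}$ per direction, loses up to that factor. A per-component bound that carries squared coface counts, as you write it, is not even uniform in the $q_i$ unless those counts cancel exactly against the weights.

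In the paper, $D_{k\ell}$ is not a component count times a per-component bound but $1/\min_C\pi_\ell(C)$. Each component has $\pi_\ell(C)=2|G|\bigl(\prod_{i\in I}(q_i+1)\bigr)d_I/(b_{k\ell}F_k)\ge1/D_{k\ell}$, using $\bigl(\prod_{i\in I}(q_i+1)\bigr)d_I/\nu_k\ge[\binom k\ell/\binom r\ell]R^{-k}$ and Lemma~\ref{lem:cell-counts}. Write $M$ as the average of $r-\ell$ operators, each using one free direction per type and hence block-diagonal over components. Then one needs, for $g\ge0$,
\[
 \sum_C\frac{\bigl(\int_Cg\,d\pi_\ell\bigr)^2}{\pi_\ell(C)}\le D_{k\ell}\Bigl(\sum_C\int_Cg\,d\pi_\ell\Bigr)^2=D_{k\ell}\bigl(\mathbb E_{\pi_\ell}g\bigr)^2,
\]
because a sum of squares of nonnegative numbers is at most the square of their sum. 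That step is what your outline is missing. Without it you obtain the inequality only with a larger constant depending on $r,k,\ell,R$. That would still suffice for Lemma~\ref{lem:global}, but it is not the stated $D_{k\ell}$.
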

\begin{proof}
We follow~\cite[Lemma~6.12]{DinurLinVidick2024} with the incidence
weights below. Write $W=UMD$, where $D$ averages over $k$-cofaces,
$M$ performs the parallel step, and $U$ averages over $\ell$-subfaces.
A $k$-face has
$b_{k\ell}=\binom k\ell2^{k-\ell}$ subfaces of dimension $\ell$.
An $\ell$-face of occupied type $I$ has
\[
 d_I=\sum_{J\supseteq I, |J|=k}\prod_{i\in J\minus I}(q_i+1)
\]
cofaces of dimension $k$. The probability measures
\[
 \pi_k(f)=1/F_k\quad(f\in \X(k)),\qquad
 \pi_\ell(v)=d_I/(b_{k\ell}F_k)\quad(v\in \X(I),\ |I|=\ell)
\]
make $D,U$ adjoint Markov contractions, by incidence counting.

The parallel step $M$ preserves type and is self-adjoint for
$\pi_\ell$. A fixed-direction parallel graph, with
its other colors fixed, has $\pi_\ell$-mass
\[
 \frac{2|G|(\prod_{i\in I}(q_i+1))d_I}{b_{k\ell}F_k}
 \ge\frac1{D_{k\ell}}.
\]
For the inequality use
$(\prod_{i\in I}(q_i+1))d_I/\nu_k
\ge[\binom k\ell/\binom r\ell]R^{-k}$.
Order the $r-\ell$ free directions separately for each
occupied type. For each position in these orders, take the operator
that uses the corresponding direction on each type. Averaging these
$r-\ell$ operators gives $M$.
On each connected parallel graph, the upper spectral bound gives
the nonconstant contribution at most $\lambda$ times squared norm.
For a nonnegative function the component mean terms sum to at most
$D_{k\ell}$ times its global mean squared: divide by each component
mass and use that a sum of squares of nonnegative integrals is at
most the square of their sum. The same bound therefore holds after averaging these operators.
For $g=D\one_A$, adjointness and contraction give
\[
 \mathbb E_{\pi_\ell}g=\frac{|A|}{F_k},
 \qquad
 \|g\|_{\pi_\ell}^2\le\frac{|A|}{F_k}.
\]
Consequently,
\[
 \frac{\langle\one_A,W\one_A\rangle}{F_k}
 =\langle g,Mg\rangle_{\pi_\ell}
 \le \lambda\frac{|A|}{F_k}
      +D_{k\ell}\left(\frac{|A|}{F_k}\right)^2,
\]
which proves \eqref{eq:walk-mixing}.
The
bipartite eigenvalue $-1$ causes no problem for this upper bound.

\end{proof}

\subsection{Local-to-global expansion}
\label{sec:local-robustness}\label{sec:sheaf-criterion}

We combine local coboundary expansion with the geometric mixing
estimate of Lemma~\ref{lem:cubical-mixing} to obtain global distance
and expansion bounds. We first state the local condition, in the
distance form of DLV's robustness condition
\cite[Definition~5.6 and Remark~5.7]{DinurLinVidick2024}.

\begin{definition}[Local coboundary expansion]
\label{def:local-coboundary-expansion}
A cubical sheaf $\cF$ has local coboundary expansion $\kappa>0$ if for
every face $\sigma$, every $0\le j<r-\dim\sigma$, and every
$y\in K_\sigma^j(\cF)$,
\begin{equation}
 \bigl|\delta_\sigma^j y\bigr|_b\ge
 \kappa\,\underline q\,\db\bigl(y,\im\delta_\sigma^{j-1}\bigr).
 \label{eq:local-coboundary-expansion}
\end{equation}
\end{definition}

Local expansion implies local acyclicity.
When all directional lengths agree, take $\underline q=\overline q$.
For tensor sheaves, expansion of both
$\cF$ and $\cF^\perp$ is DLV's \emph{two-way robustness}
\cite[Definition~5.9]{DinurLinVidick2024}.

Under this local condition, the next lemma gives a linear lower
bound on the locally co-minimal distance. Its proof adapts DLV's
support argument using unique cube completion and our parallel-face
mixing estimate.

\begin{lemma}[Locally co-minimal distance, adapted from {\cite[Proposition~7.1]{DinurLinVidick2024}}]\label{lem:global}
Fix $r,R,\kappa$. There are $\lambda_0,\eta>0$, depending only on
these parameters, such that if $\X$ has $\overline q/\underline q\le R$ and satisfies
Definition~\ref{def:directional-expansion} with $\lambda\le\lambda_0$, then local coboundary
expansion $\kappa$ implies, for every $k<r$,
\begin{equation}
 d_{\coloc}(k)\ge\eta F_k.
 \label{eq:coloc-bound}
\end{equation}
\end{lemma}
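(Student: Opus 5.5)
We follow the support argument of DLV's Proposition~7.1, replacing their Abelian-lift mixing with Lemma~\ref{lem:cubical-mixing}. Fix $k<r$ and a nonzero locally co-minimal cocycle $x\in\ker\delta^k$. Let $A=\supp_b x\subseteq\X(k)$. Suppose toward a contradiction that $|A|<\eta F_k$. We will show that many walk steps from $A$ must land back in $A$. Lemma~\ref{lem:cubical-mixing} then forces $|A|\ge c F_k$ for a constant $c$ depending only on $r,R,\kappa$, once $\lambda\le\lambda_0$ is small.

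\textbf{Local step.} For each face $f\in A$ and each $\ell$-subface $v\le f$ (we use $\ell=0$, the vertex case, possibly iterating over $\ell$ as in DLV), consider the localization $x|_{\X_{\ge v}(k)}$, viewed as an element of $K_v^{k-\ell}$. Since $\delta x=0$, the local coboundary $\delta_v$ of this localization is zero on the upper star. Local co-minimality then says $|x|_{\ge v}|_b$ cannot be reduced by adding $\delta y$ with $y$ supported near $v$. This is stated at vertices, but as in DLV it propagates to all faces through unique cube completion and the tensor structure. Local coboundary expansion \eqref{eq:local-coboundary-expansion}, combined with exactness, shows $x|_{\ge v}$ is close to $\im\delta_v$. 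Co-minimality shows it is far. Together these give the following \emph{heaviness} dichotomy. Either $x$ vanishes on $\X_{\ge v}(k)$, or $|x|_{\ge v}|_b\ge\kappa\underline q/C_r$.

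More precisely, we aim for the quantitative form used by DLV. For a typical $f\in A$ and $v\le f$, a constant fraction (depending on $\kappa,r,R$) of the parallel neighbors $v'$ of $v$ in free directions have many $k$-cofaces in $A$. Equivalently, we aim to show
\[\langle\one_A,W\one_A\rangle\ge \gamma|A|\]
with $\gamma=\gamma(r,R,\kappa)>0$. The mechanism is DLV's: the $k$-faces through $v$ in $A$, together with the cocycle relation on $(k+1)$-faces through $v$, force support on opposite faces of the cubes through $v$. Unique cube completion (Lemma~\ref{lem:cubical-incidence}) makes these opposite faces well-defined parallel neighbors, so the forced support is counted exactly by the walk.

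\textbf{Global step.} Combine this with \eqref{eq:walk-mixing}:
\[\gamma|A|\le\lambda|A|+D_{k\ell}|A|^2/F_k.\]
Choose $\lambda_0=\gamma/2$. This gives $|A|\ge \gamma F_k/(2D_{k\ell})$. Setting $\eta=\gamma/(2D_{k\ell})$ yields \eqref{eq:coloc-bound}, since $D_{k\ell}$ depends only on $r,R$. For $k=0$, co-minimality is vacuous, and local expansion at vertices directly gives the heaviness statement.

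\textbf{Main obstacle.} The main obstacle is the local step: extracting the return probability $\gamma$ from local coboundary expansion plus local co-minimality in our non-commutative setting, where opposite edges of a square carry different labels. DLV's argument passes through lower-dimensional faces by induction on $\ell$. Our first attempt would be to reproduce that induction verbatim, checking at each stage only two things. First, that the local complexes are the upper-star complexes of Definition~\ref{def:upper-star-complex}. Second, that the "opposite face" map is the parallel-neighbor relation. Both are purely incidence statements guaranteed by face uniqueness. The factor $\underline q$ in \eqref{eq:local-coboundary-expansion} is exactly what normalizes the local counts against the walk's uniform choice among $q_j+1$ neighbors. The ratio $R$ absorbs the discrepancy between directions.
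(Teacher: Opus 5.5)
Your outer frame matches the paper's. You take $A=\supp_b x$ for a nonzero locally co-minimal cocycle $x$, show that parallel-face walks started in $A$ return to $A$ at a rate independent of the degrees, and close with Lemma~\ref{lem:cubical-mixing}. However, the step that carries all the content, $\langle\one_A,W\one_A\rangle\ge\gamma|A|$ with $\gamma=\gamma(r,R,\kappa)$, is only asserted, and the local mechanism you sketch for it is incorrect.

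The claim that $\delta x=0$ makes $\delta_v\bigl(x|_{\X_{\ge v}(k)}\bigr)$ vanish is false. The cocycle condition on a $(k+1)$-face $\tau\ge v$ also involves the $k$-faces of $\tau$ that do not contain $v$. So $\delta_v\bigl(x|_{\X_{\ge v}(k)}\bigr)$ equals the contribution of these external faces and is generally nonzero. If it vanished, local exactness and local minimality would force $x|_{\X_{\ge v}(k)}=0$ at every vertex, i.e.\ $x=0$. The heaviness dichotomy you derive from it is therefore unsupported. It also cannot hold for $\ell=k$ once $\kappa\underline q>C_r$, in particular for $k=0$ where you invoke it, since the restricted block weight is then at most $1$.

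What the hypotheses actually give is the following. Local minimality at an $\ell$-face $v$ is inherited from any vertex $w\le v$, since $\X_{\ge v}(k-1)\subseteq\X_{\ge w}(k-1)$. Hence $\db\bigl(x|_{\X_{\ge v}(k)},\im\delta_v\bigr)=|A\cap\X_{\ge v}(k)|$. Local expansion then bounds the number of nonzero blocks of $\delta_v\bigl(x|_{\X_{\ge v}(k)}\bigr)$ below by $\kappa\underline q$ times this. Each such block must be cancelled by a face of $A$ outside the upper star, and by Lemma~\ref{lem:cubical-incidence} each such face lies in at most one $(k+1)$-coface of $v$. Therefore $\kappa\underline q\,|A\cap\X_{\ge v}(k)|\le|A\cap\operatorname{Nb}_v(k)|+|A\cap\operatorname{Op}_v(k)|$. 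Here $\operatorname{Nb}_v(k)$ consists of faces meeting $v$ in an $(\ell-1)$-face, and $\operatorname{Op}_v(k)$ of faces disjoint from $v$ that contain a parallel translate of it.

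The missing idea is how to turn this into a degree-independent return bound. Only the $\operatorname{Op}$ terms are moves of the walk $W_\ell$. A specified $W_\ell$-path has probability of order $\overline q^{\,-(k+1-\ell)}$: one of $q_j+1$ parallel neighbors, then one of about $\overline q^{\,k-\ell}$ $k$-cofaces of $v'$. Each application of local expansion gains only a single factor $\kappa\underline q$. Your remark that $\underline q$ normalizes against the choice among $q_j+1$ neighbors accounts for just one of these $k+1-\ell$ factors. The vertex inequality alone yields only $\gamma$ of order $\kappa\underline q/\overline q^{\,k+1}$. For $k\ge1$ this decays with the degrees and cannot dominate even the Ramanujan bound $\lambda\approx2/\sqrt q$, so $\lambda_0,\eta$ would not depend only on $(r,R,\kappa)$.

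The paper resolves this with a descending recursion. Start from $v=f$. Charge each $\operatorname{Nb}$ term to the lower-dimensional intersection face $v\cap f'\subseteq f$, with multiplicity bounded in terms of $r$, and apply local expansion again there. This yields $\one_A(f)\le c_r\sum_{\ell=0}^k(\kappa\underline q)^{-(k+1-\ell)}\sum_{v\subseteq f,\ \dim v=\ell}|A\cap\operatorname{Op}_v(k)|$. Its weights cancel the $W_\ell$-path probabilities up to powers of $R/\kappa$. Summing over $f\in A$ and applying \eqref{eq:walk-mixing} for every $\ell$, not for a single walk, gives $|A|\le C\lambda|A|+C|A|^2/F_k$, after which your global step goes through. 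Without this recursion, and with the neighbor terms unaccounted for, your argument only handles $k=0$, where $\ell=k$ and no neighbor terms arise.
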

\begin{proof}
Following~\cite[Claims~7.4--7.6]{DinurLinVidick2024}, take a nonzero
locally co-minimal cocycle $x$, with block support
$A=\supp_b x\subseteq \X(k)$.
For an $\ell$-face $v$, its restriction to the upper star is locally
minimum: a local modification there extends by zero, and its
coboundary remains in that upper star, which lies in the upper star
of any vertex of $v$. In the $(k+1)$-cofaces of $v$, the local
syndrome is canceled by external $k$-faces since $\delta x=0$.
Split these external faces into $\operatorname{Nb}_v(k)$, meeting
$v$ in an $(\ell-1)$-face, and $\operatorname{Op}_v(k)$, disjoint
from $v$ and containing a parallel translate of it. Each external
face $f'$ occurs in at most one such coface: the colors of $v\cup f'$
span that coface's coordinate cube, so
Lemma~\ref{lem:cubical-incidence} determines it uniquely.
Thus~\eqref{eq:local-coboundary-expansion} yields
\begin{equation*}
 \kappa\underline q\,|A\cap \X_{\ge v}(k)|
 \le |A\cap\operatorname{Nb}_v(k)|+
 |A\cap\operatorname{Op}_v(k)|.
\end{equation*}
Coefficient maps may kill a nonzero block, but cannot increase this
support bound.

Let $c_r>0$ denote an incidence-counting constant depending only on
$r$. For a fixed $k$-face $f$, sum the preceding local inequality over its
$(\ell+1)$-subfaces. Each neighbor term can be assigned to its
intersection $\ell$-face $v\subseteq f$. For fixed $v$ and an external
$k$-face $f'$, the possible larger subface of $f$ is obtained by
adding a direction in $\operatorname{type}(f)\minus
\operatorname{type}(f')$. There is at most one choice in each such
direction. The multiplicity is at most $k-\ell$, and also at most
$r-k$ because $f'$ has $k$ occupied directions. This gives
a multiplicity bounded in terms of $r$. At $\ell=0$
there are no neighbor terms. Starting with $v=f$ and descending
therefore proves
\begin{equation*}
 \one_A(f)\le c_r
 \sum_{\ell=0}^k\frac{1}{(\kappa\underline q)^{k+1-\ell}}
 \sum_{\substack{v\subseteq f\\\dim v=\ell}}
 |A\cap\operatorname{Op}_v(k)|.
\end{equation*}

Each pair $(v,f')$ in the last sum admits a path of the walk $W$:
choose $v$, move it parallel in one free direction, then choose
$f'$ as a coface. A specified such path has probability at least
\[
 \frac1{b_{k\ell}(r-\ell)\binom{r-\ell}{k-\ell}
 \overline q^{k+1-\ell}}.
\]
Here $b_{k\ell}=\binom k\ell2^{k-\ell}$ counts the $\ell$-faces of
a $k$-face, as in the proof of Lemma~\ref{lem:cubical-mixing}.
Sum this recursion over $f\in A$ and apply
\eqref{eq:walk-mixing}. The factors $\overline q^{k+1-\ell}$ cancel
against $\underline q^{k+1-\ell}$ up to powers of $R$.
All remaining incidence multiplicities depend only on $r$.
Thus, for some $C=C(r,R,\kappa)\ge1$,
\begin{equation*}
 |A|\le C\lambda|A|+C|A|^2/F_k.
\end{equation*}
Choose $\lambda_0$ so that $C\lambda_0<1$ and let
$\eta=(1-C\lambda_0)/C$.
For $A\ne\varnothing$ this proves~\eqref{eq:coloc-bound}.
\end{proof}

\begin{corollary}[Local-to-global criterion]\label{thm:sheaf-criterion}
Fix $r\ge4$, $R\ge1$ and $\kappa>0$. There is
$0<\lambda_0=\lambda_0(r,R,\kappa)<1$ with the following property.
Consider a family of connected decorated Cayley cubical complexes
$\X$, with sheaves $\cF$ generated by codimension-one local codes over a finite
characteristic-two field. Assume the field and directional degrees
are fixed as $|\X(0)|\to\infty$, and:
\begin{enumerate}[label=(\roman*),leftmargin=*]
\item $\overline q/\underline q\le R$, and $\X$ satisfies the spectral assumption of
Definition~\ref{def:directional-expansion} with parameter $\lambda_0$.
\item Both $\cF$ and $\cF^\perp$ have local coboundary expansion $\kappa$.
\end{enumerate}
Then, for every $2\le k\le r-2$ with $\dim_F C^k>0$,
\begin{equation}
 \mu_{\rm syst}(k),\mu_{\rm cosyst}(k)=\Omega(\dim_F C^k),\qquad
 \eps_{\rm cyc}(k),\eps_{\rm cocyc}(k)=\Omega(1).
 \label{eq:global-expansion}
\end{equation}
\end{corollary}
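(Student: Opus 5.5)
The plan is to combine the one-sided results already established: Lemma~\ref{lem:global} for the cosystolic side, Lemma~\ref{lem:cominimal-to-expansion} to pass from locally co-minimal distance to distance and cocycle expansion, and Proposition~\ref{prop:sheaf-duality} to transfer everything to the systolic and cycle side via $\cF^\perp$. First choose $\lambda_0$ as the minimum of the thresholds supplied by Lemma~\ref{lem:global} applied with parameters $(r,R,\kappa)$. By hypothesis~(ii), the same $\kappa$ works for both $\cF$ and $\cF^\perp$, and both are Tanner sheaves on the same complex. So this single $\lambda_0$ yields constants $\eta>0$ with
\[
 d_{\coloc}(j;\cF)\ge\eta F_j,\qquad d_{\coloc}(j;\cF^\perp)\ge\eta F_j\qquad(0\le j<r).
\]

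Second, derive the cohomological bounds for $\cF$ in degree $k$. Lemma~\ref{lem:cominimal-to-expansion} gives $\mu_{\rm cosyst}(k)\ge\eta F_k$. Since $k\le r-2$, it also gives
\[
 \eps_{\rm cocyc}(k)\ge\min\{1/\nu_k,\ \eta F_{k+1}/F_k\}.
\]
By Lemma~\ref{lem:cell-counts}, $F_{k+1}/F_k=\nu_{k+1}/(2\nu_k)$. With the directional degrees fixed, both this ratio and $1/\nu_k$ are constants independent of $|\X(0)|$. Moreover $\dim_F C^k\le F_k\max_\sigma\dim\cF(\sigma)$, and $\dim\cF(\sigma)\le|U_\sigma|\le\overline q^{\,r}$ is bounded. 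Hence $\mu_{\rm cosyst}(k)=\Omega(\dim_F C^k)$.

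Third, handle the homological side through the dual sheaf. Local coboundary expansion implies local acyclicity, so Proposition~\ref{prop:sheaf-duality} applies and identifies $H_k(\X;\cF)$ with $H^{r-k}(\X;\cF^\perp)$. It bounds $\mu_{\rm syst}(k;\cF)$ and $\min\{1,\eps_{\rm cyc}(k;\cF)\}$ below by the dual quantities $\mu_{\rm cosyst}(r-k;\cF^\perp)$ and $\min\{1,\eps_{\rm cocyc}(r-k;\cF^\perp)\}$, up to factors depending only on $r$ and $D\le r\overline q$. These factors are constant because the degrees are fixed. The range $2\le k\le r-2$ gives $2\le r-k\le r-2$, so $r-k\le r-2$ and the second step applies verbatim to $\cF^\perp$ in degree $r-k$. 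This yields $\mu_{\rm cosyst}(r-k;\cF^\perp)\ge\eta F_{r-k}$ and a constant lower bound on $\eps_{\rm cocyc}(r-k;\cF^\perp)$.

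It remains to compare $F_{r-k}$ with $\dim_F C^k$. Both are $\Theta(|G|)$ with constants depending on $r$ and the fixed degrees, since $F_j=|G|2^{r-j}\nu_j$. The requirement $\dim_F C^k>0$ makes the comparison nondegenerate: it forces some local space to be nonzero, and by the group symmetry of the construction this gives $\dim_F C^k=\Theta(|G|)$. One could also simply use the upper bound $\dim_F C^k=O(|G|)$, which suffices for $\Omega$.

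The main obstacle I expect is bookkeeping in the duality step. One must check that the quantitative form of Proposition~\ref{prop:sheaf-duality} compares block weights compatibly, since blocks of $C_k$ are the spaces $\cF(\sigma)^*$ while the blocks of the dual cochains are $\cF^\perp$-spaces on the same cells. One must also check that the $\min\{1,\cdot\}$ truncation is harmless, because only $\Omega(1)$ is claimed. The connectivity assumption and $r\ge4$ are used only to ensure the spectral hypothesis is meaningful and that the degree range is nonempty.
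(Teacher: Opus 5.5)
Your proposal is correct and follows essentially the same route as the paper. You apply Lemmas~\ref{lem:global} and~\ref{lem:cominimal-to-expansion} to both $\cF$ and $\cF^\perp$, in degrees $k$ and $r-k$ (both at most $r-2$), transfer the dual-sheaf bounds to the chain side via Proposition~\ref{prop:sheaf-duality}, and finish by cell counting; as in the paper, only $F_{r-k}=\Theta(F_k)$ and $\dim_F C^k=O(F_k)$ are needed, so your group-symmetry aside giving $\dim_F C^k=\Theta(|G|)$ is unnecessary and is not supported by the corollary's hypotheses for arbitrary local codes.
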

\begin{proof}
For either sheaf, Lemmas~\ref{lem:global}
and~\ref{lem:cominimal-to-expansion} give
$\mu_{\rm cosyst}(k)\ge\eta F_k$ and, for $k\le r-2$ and every $x\in C^k$,
\begin{equation}
 |\delta x|_b\ge
 \min\{1/\nu_k,\eta F_{k+1}/F_k\}\,
 \db(x,\ker\delta^k).
 \label{eq:all-word-global}
\end{equation}
The coefficient is constant because the directional degrees are fixed.
Local acyclicity and Proposition~\ref{prop:sheaf-duality} transfer the
dual-sheaf bounds in degree $r-k$ to the chain parameters.
Finally, cell counting and $\dim\cF(\sigma)=O(1)$ give
$F_{r-k}=\Theta(F_k)$ and $\dim C^k=O(F_k)$.
\end{proof}

Section~\ref{sec:realization} combines this criterion with positive
rate and Lemma~\ref{lem:css-parameters}.

\section{Sheaf codes on products of trees and their quotients}
\label{sec:rs-construction}
\label{sec:coefficients}\label{sec:descent-certificate}

In this section we construct the finite sheaf codes, in three steps.
The first is abstract: from arbitrary local codes on infinite regular trees
we build a tensor-product sheaf on a product of trees, and give conditions
under which such a sheaf descends to a finite quotient. To
instantiate the framework, we take the quotients from arithmetic lattices acting on products of
Bruhat-Tits trees, following Rungtanapirom, Stix, and
Vdovina~\cite[Sections~2 and~6]{RSV2019}:
$$
    \X=\Gamma\backslash\mathcal T,
    \qquad
    \mathcal T=\mathcal T_1\times\cdots\times\mathcal T_r,
$$
where $\mathcal T_i$ is the $(q_i+1)$-regular Bruhat-Tits tree of the $i$th
place and $\Gamma$ a quaternionic lattice acting simply transitively on the
vertices of $\mathcal T$.
Finally, we take the local codes to be (twisted) Reed-Solomon codes and check that
they are compatible with the arithmetic action.

\subsection{Tree codes, tensor products, and descent}
\label{sec:tree-code-construction}
\paragraph{Sheaves on infinite trees.}
Fix a finite field $F$ of characteristic two, and let $\mathcal T_i$
be an infinite $(q_i+1)$-regular tree. We view each tree as a
one-dimensional cell complex: vertices are $0$-cells, edges are
$1$-cells, and $v\le e$ means that $v$ is an endpoint of $e$. 
Write $E_i(v)$ for the set of edges incident to
$v\in\mathcal T_i(0)$. In this case, a sheaf code $\cF_i$ on $\mathcal T_i$ is always a Tanner sheaf code, and is hence specified
by arbitrary linear codes $C_i(v)\subseteq F^{E_i(v)}$, with
\[
 \cF_i(v)=C_i(v),\qquad \cF_i(e)=F,\qquad
 \rho^i_{v,e}(c)=c_e.
\]
Thus a vertex carries a codeword and each incident edge reads one
coordinate.
At a vertex $v$, the local complex is
$
 K_v(\cF_i)=[C_i(v)\hookrightarrow F^{E_i(v)}].
$
The two terms lie in degrees zero and one. At an edge $e$, the local complex $K_e(\cF_i)$ consists of
$F$ in degree zero.

\paragraph{Tensor products.}
The product $\mathcal T=\prod_{i=1}^r\mathcal T_i$ has a cubical complex structure: a cell $\sigma=\sigma_1\times\cdots\times\sigma_r$
has a vertex or an edge in each factor, and face inclusion is
coordinatewise. Its type is $I=\{i:\sigma_i\text{ is an edge}\}$
and its dimension is $|I|$. Writing $v_i=\sigma_i$ for $i\notin I$,
we identify its top-dimensional cofaces as
$U_\sigma\cong\prod_{i\notin I}E_i(v_i)$.

The product sheaf code $\mathcal G=\boxtimes_i\cF_i$ is defined by
\cite[Section~V]{PanteleevKalachev2024}
\[
 \begin{aligned}
 \mathcal G(\sigma):=\bigotimes_{i=1}^r\cF_i(\sigma_i)
       \cong\bigotimes_{i\notin I}\mathcal{F}_i(v_i)\subseteq F^{U_\sigma},\qquad
 \rho_{\sigma,\tau}:=\bigotimes_{i=1}^r\rho^i_{\sigma_i,\tau_i}:
 \mathcal G(\sigma)\longrightarrow\mathcal G(\tau)\qquad
(\sigma\le\tau).
 \end{aligned}
\]
Its local complex is
\begin{equation}
 K_\sigma(\mathcal G)\cong\bigotimes_{i\notin I}
 [C_i(v_i)\hookrightarrow F^{E_i(v_i)}].
 \label{eq:tree-code-local-model}
\end{equation}
The tensor product denotes the total cochain complex of
these vertex local complexes. Since each factor is exact in degree zero, the total complex is also exact below its top degree, and hence $\mathcal G$ is locally acyclic.

\paragraph{Passing to finite quotients.}
Let $\Gamma$ act on $\mathcal T$ by direction-preserving cubical
automorphisms. That is each $g\in\Gamma$ acts coordinatewise on $\mathcal T(0) = \prod_{i=1}^r \mathcal T_i(0)$ and extends to the whole cubical complex structure. In particular, a direction-$i$ edge remains a direction-$i$ edge. 
Let $\mathcal G$ be a sheaf code on $\mathcal T$.

Passing to the quotient identifies cells related by $\Gamma$.
Their coefficient spaces and restriction maps must also
be identified consistently. In particular, restricting
a value and then identifying must give the same result
as identifying first and then restricting.
The following definition expresses this requirement.

\begin{definition}[Group actions on sheaf codes]
\label{def:sheaf-action}
An \emph{action} of $\Gamma$ on $\mathcal G$ is a lift of the
given action on $\mathcal T$ to the local code spaces by
$F$-linear isomorphisms
$g:\mathcal G(\sigma)\to\mathcal G(g\sigma)$,
$c\mapsto g\cdot c$, commuting with restrictions:
\[
 \rho_{g\sigma,g\tau}(g\cdot c)
 =g\cdot\rho_{\sigma,\tau}(c)
 \qquad(g\in\Gamma,\ \sigma\le\tau,\ c\in\mathcal G(\sigma)).
\]
\end{definition}

For the product $\mathcal G=\boxtimes_i\cF_i$, let $\Gamma_i$ be
the image of $\Gamma$ in $\operatorname{Aut}(\mathcal T_i)$.
It suffices to lift the $\Gamma_i$-action on $\mathcal T_i$ to
$\cF_i$ in each direction. An element $g\in\Gamma$ then acts on
$\mathcal G$ by tensoring the sheaf actions of its components
$g_i\in\Gamma_i$. It is clear that compatibility with restrictions holds since each factor does.

\begin{proposition}[Descent of sheaf codes]
\label{prop:tree-code-descent}
Let $\Gamma$ act on a sheaf code $\mathcal G$ on $\mathcal T$, with its action
on $\mathcal T$ free on vertices and having finitely many vertex
orbits. If the quotient map $p:\mathcal T\to\X:=\Gamma\backslash\mathcal T$
is injective on each closed cube, then $\X$ is a finite $r$-dimensional cubical
complex and $\mathcal G$ descends to a sheaf code $\cF$ on $\X$.
For every cell $\sigma$ of $\mathcal T$, there is an isomorphism
of cochain complexes preserving block weight:
\[
 K_{p(\sigma)}(\cF)\cong K_\sigma(\mathcal G).
\]
In particular, if $\mathcal G$ is a Tanner sheaf, then so is $\cF$.
\end{proposition}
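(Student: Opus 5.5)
The plan has three parts: the quotient complex, the descended coefficient spaces, and the local complexes. Everything reduces to two facts. First, the stabilizer in $\Gamma$ of any cell is trivial. Second, $p$ restricted to the upper star of a cell is a bijection onto the upper star of its image.

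\emph{Quotient complex and trivial stabilizers.} Since $\Gamma$ preserves directions, it sends cells to cells of the same type. So $\X(j)=\Gamma\backslash\mathcal T(j)$, with $p(\sigma)\le p(\tau)$ whenever $\sigma\le\tau$. Suppose $g\sigma=\sigma$. Then $g$ permutes the $2^{\dim\sigma}$ vertices of $\sigma$ and preserves their relative positions in each coordinate. Injectivity of $p$ on the closed cube $\bar\sigma$ forces $gv=v$ for each vertex $v$ of $\sigma$: if $gv=v'\neq v$ with $v,v'\in\sigma$, then $p(v)=p(v')$, a contradiction. Freeness on vertices then gives $g=1$. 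Finiteness follows because there are finitely many vertex orbits and each vertex lies in finitely many cubes (the trees have finite degree). Regularity of each closed cell in $\X$ is exactly injectivity of $p$ on closed cubes.

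\emph{Upper stars.} Next I show that $p$ maps $\mathcal T_{\ge\sigma}$ bijectively onto $\X_{\ge p(\sigma)}$. For surjectivity, take a coface of $p(\sigma)$; it has a lift $\tau'$ containing some translate $g\sigma$, and then $g^{-1}\tau'\ge\sigma$. For injectivity, suppose $\tau,h\tau\ge\sigma$. Then both $\sigma$ and $h^{-1}\sigma$ are faces of $\tau$ with the same image. Injectivity of $p$ on $\bar\tau$ gives $h^{-1}\sigma=\sigma$, so $h=1$ by trivial stabilizers. This is the step I expect to need the most care: one has to use injectivity on the full closed cube $\bar\tau$, not just on vertices. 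In particular $U_{p(\sigma)}\cong U_\sigma$.

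\emph{Descending the sheaf and the local complexes.} Define $\cF(p(\sigma))$ by choosing a representative $\sigma$ and setting $\cF(p(\sigma))=\mathcal G(\sigma)$, viewed as a subspace of $F^{U_{p(\sigma)}}$ via the bijection $U_\sigma\cong U_{p(\sigma)}$. Since stabilizers are trivial, different representatives are related by a unique $g$. The action isomorphisms $g\colon\mathcal G(\sigma)\to\mathcal G(g\sigma)$ therefore identify the choices canonically, with no cocycle issue. For $p(\sigma)\le p(\tau)$, pick the unique lift $\tau\ge\sigma$ from the bijection above, and define the restriction of $\cF$ as $\rho_{\sigma,\tau}$. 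This is independent of the chosen representative because the action commutes with restrictions (Definition~\ref{def:sheaf-action}). Functoriality $\rho_{\tau,\omega}\circ\rho_{\sigma,\tau}=\rho_{\sigma,\omega}$ is inherited from $\mathcal G$. I will also note that the action on top cells is by scalars on $F$; this makes $\cF(\sigma)$ a genuine code in $F^{U}$, up to coordinatewise rescaling. If needed, I will absorb this rescaling by fixing representatives of top cells.

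Because upper stars correspond bijectively and all restrictions are transported, the map $\bigoplus_{\tau\ge\sigma}\mathcal G(\tau)\to\bigoplus_{p(\tau)\ge p(\sigma)}\cF(p(\tau))$ is blockwise an isomorphism. It commutes with $\delta_\sigma$, which gives an isomorphism of cochain complexes preserving block weight.

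\emph{Tanner property.} Suppose $\mathcal G$ is Tanner, meaning $\mathcal G(\sigma)$ is cut out by the codimension-one codes on the $(r-1)$-cells in its upper star. Transporting this description through the same bijection of upper stars shows that $\cF(p(\sigma))$ is the Tanner space generated by the codes $\cF(p(\tau))$. So $\cF$ is a Tanner sheaf.
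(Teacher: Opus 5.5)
Your proposal is correct and follows essentially the same route as the paper: trivial cell stabilizers from cube injectivity plus vertex freeness, a lift-uniqueness bijection of upper stars, transport of the spaces $\mathcal G(\sigma)$ along the unambiguous group identifications (with top-cell representatives fixed to absorb the scalar rescalings), and then the blockwise cochain isomorphism and preservation of the Tanner constraints. The one point you leave implicit, which the paper states outright, is that the upper-star bijection also reflects face relations: if $\tau,\tau'\ge\sigma$ and $p(\tau)\le p(\tau')$, then $\tau\le\tau'$. This follows from your argument by lifting $p(\tau')$ to a coface of $\tau$ and invoking uniqueness of lifts in $\mathcal T_{\ge\sigma}$, and it is exactly what makes your blockwise map commute with $\delta_\sigma$.
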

\begin{proof}
Cube injectivity gives embedded closed cubes in the quotient.
It also implies that a cell stabilizer fixes every vertex of that cell and is therefore
trivial. The finite number of vertex orbits
make $\X$ finite. We next identify the upper stars of $\sigma$ and
$p(\sigma)$. Every quotient coface lifts to one containing $\sigma$.
If $\tau,\tau'\ge\sigma$ and $p(\tau)\le p(\tau')$, choose $g$
with $g\tau\le\tau'$. For a vertex $v$ of $\sigma$, both $v$ and
$gv$ lie in $\tau'$ and have the same image under $p$. Cube
injectivity gives $gv=v$, and vertex freeness gives $g=1$.
Thus the coface lifts are unique and preserve face inclusions.

For a quotient cell $\bar\sigma$, identify the spaces
$\mathcal G(\sigma)$ on all its lifts using the group action, and
denote the resulting vector space by $\cF(\bar\sigma)$.
Trivial stabilizers make these identifications unambiguous, and
equivariance makes the induced restriction maps independent of the lifts.
Choose a representative of each top-cell orbit to identify its
descended space with $F$. Restrictions to these top-cell spaces
embed $\cF(\bar\sigma)$ into $F^{U_{\bar\sigma}}$: after choosing
a lift $\sigma$, they are the original coordinate restrictions of
$\mathcal G(\sigma)$ followed by isomorphisms on the one-dimensional
targets. Thus $\cF$ is a sheaf code in the sense of
Definition~\ref{def:sheaf-code}.

The upper-star bijection and these identifications give the claimed
cochain isomorphism. It preserves block weight because each coface
space maps isomorphically to its descended space.
These identifications also preserve the codimension-one constraints,
so the descent of a Tanner sheaf is Tanner.
\end{proof}
\paragraph{Tanner sheaves and duals.}
The tensor product sheaf $\mathcal G=\boxtimes_i\cF_i$ is itself a
Tanner sheaf. A codimension-one cell $\tau$ has a vertex factor $v$ in
a single direction $j$, with $U_\tau\cong E_j(v)$. We then assign the code $\mathcal C_\tau=C_j(v)$ to the cell. For a cell $\sigma$ of type $I$, the sets
$U_\tau\subseteq U_\sigma\cong\prod_{i\notin I}E_i(v_i)$ over its
codimension-one cofaces $\tau$ are exactly the coordinate lines, so the
generating constraints at $\sigma$ cut out the tensor product code
$\bigotimes_{i\notin I}C_i(v_i)=\mathcal G(\sigma)$.

Replacing each $C_i(v)$ by $C_i(v)^\perp$ gives the Tanner dual
$\mathcal G^\perp$, which descends to $\cF^\perp$ by
Proposition~\ref{prop:tree-code-descent}.

\subsection{Products of trees and cubical quotients}
\label{sec:finite-presentation}
We construct finite quotients of products of Bruhat-Tits trees.
A common quaternion algebra supplies a group acting in all directions.

\paragraph{A single tree and its neighbors.}
The lattice model identifies the edges at each vertex with projective
points, which will serve as evaluation points for the local codes.
Put $k_0=\F_2(y)$, and let $P\in\F_2[y]$ be a monic irreducible
of positive even degree $a$. Write the discrete valuation $\nu=v_P$, uniformizer $\pi=P$, and $q=2^a$. Then, the valuation ring $\mathcal O_\nu$ of $(k_0)_\nu$ has residue field $\F_q$.

We will model the tree via the Bruhat-Tits tree $\mathcal T_\nu$, which has as vertices the equivalence classes of rank-two
$\mathcal O_\nu$-lattices in $(k_0)_\nu^2$, up to multiplication
by a nonzero scalar. For a fixed representative $\mathcal L$ of
$v$, its neighbors have unique representative lattices $\mathcal M$ with
\[
 \pi\mathcal L\subsetneq\mathcal M\subsetneq\mathcal L.
\]
Reduction $\mathcal M\mapsto\mathcal M/\pi\mathcal L$ identifies
these neighbors with the lines in $\mathcal L/\pi\mathcal L$.
Choosing coordinates $\mathcal L/\pi\mathcal L\simeq\F_q^2$
therefore identifies $E(v)\cong\PP^1(\F_q)$, so the tree is
$(q+1)$-regular\footnote{For a proof that this graph is a tree, see
\cite[Chapter~II, \S1.1]{Serre1980}.}.
The group $\PGL_2((k_0)_\nu)$ acts by $g[\mathcal L]=[g\mathcal L]$.
We write $o_\nu=[\mathcal O_\nu^2]$ for the base vertex,
the standard lattice class.

\paragraph{Vertex colors.}
We compute determinants in the fixed ambient coordinates of $(k_0)_\nu^2$.
For a lattice $\mathcal L$ with an $\mathcal O_\nu$-basis $(b_1,b_2)$, the coloring is given by
$$
 \operatorname{color}([\mathcal L])
 \coloneqq \nu\bigl(\det(\mathcal L)\bigr)\bmod2,\qquad
 \det(\mathcal L)\coloneqq\det(b_1,b_2)\mathcal O_\nu.
$$
Changing basis multiplies the determinant by a unit, leaving its valuation
unchanged. Scaling a lattice changes this valuation by an even integer,
so the color is well defined. Adjacent vertices have opposite colors.
Note that $o_\nu=[\mathcal O_\nu^2]$ is of color $0$ and a linear automorphism $g
\in\mathrm{GL}_2((k_0)_\nu)$ changes the color by
$\nu(\det g)\bmod2$.

\paragraph{The product of trees.}
Let $P_1,\ldots,P_r\in\F_2[y]$ be distinct monic irreducibles
of positive even degrees $a_i$.
At each place $\nu_i=v_{P_i}$, take the corresponding tree
$\mathcal T_{\nu_i}$, with residue field
$\F_{q_i}$ of size $q_i=2^{a_i}$.
Put $\Sigma=\{\nu_1,\ldots,\nu_r\}$.
The product $\mathcal T_\Sigma=\prod_{\nu\in\Sigma}\mathcal T_\nu$
has one cubical direction for each chosen place. The base vertex is
$\mathbf o=([\mathcal O_\nu^2])_{\nu\in\Sigma}$.

\paragraph{The arithmetic action.}
Write $\F_4=\F_2(\theta)$ with $\theta^2+\theta+1=0$.
Set $k_1=k_0(\theta)$ and let $\varphi$ fix $k_0$ and send
$\theta$ to $\theta^2$. The quaternion algebra of
\cite[Section~2]{RSV2019} is $A(k_0)=k_1\oplus k_1z$, where
$z^2=1+y$ and $zc=\varphi(c)z$ for $c\in k_1$.
Its reduced norm is
$\Nrd(f+gz)=f\varphi(f)-(1+y)g\varphi(g)$.
Since each $a_i$ is even, choose an embedding
$\F_4\subseteq(k_0)_{\nu_i}$. The local splitting
$
 A((k_0)_{\nu_i}):=A(k_0)\otimes_{k_0}(k_0)_{\nu_i}
 \simeq M_2((k_0)_{\nu_i})
$
is given by
\[
 \iota_{\nu_i}(f+gz)=
 \begin{pmatrix}f&(1+y)g\\ \varphi(g)&\varphi(f)\end{pmatrix}.
\]
Its determinant is the reduced norm. Since $1+y$ is a unit at
each $\nu_i$, it also identifies the integral coefficient algebra
with $M_2(\mathcal O_{\nu_i})$.
Using the matrix action above in each factor,
$A(k_0)^\times/k_0^\times$ acts by
\[
 [a]\cdot([\mathcal L_i])_{i=1}^r
 =([\iota_{\nu_i}(a)\mathcal L_i])_{i=1}^r.
\]
The same global quaternion $a$ supplies the local matrix in every
direction. This action preserves directions. We next choose the
arithmetic subgroup used to form the quotients.

\paragraph{The generators.}
Define
\[
 S_{\nu_i}=\{[f+gz]:f,g\in\F_4[y],\ f\text{ monic},\quad
                         \Nrd(f+gz)=P_i\},\qquad
 \Gamma_\Sigma=\langle S_\nu:\nu\in\Sigma\rangle,
\]
where brackets denote classes modulo $k_0^\times$.
A generator in $S_{\nu_i}$ has integral matrices at all selected
places, with determinant $P_i$. It therefore moves $o_{\nu_i}$
to the neighbor given by its image line modulo $P_i$, and fixes
the base vertices in the other factors.
Lemma~\ref{lem:cubical-generators} shows that every line occurs
once, $S_{\nu_i}$ is inverse-closed with $|S_{\nu_i}|=q_i+1$,
and $\Gamma_\Sigma$ acts simply transitively on vertices.

\paragraph{The color-preserving subgroup.}
Use the vertex coloring defined above in each
factor. Since the determinant of each local action is the reduced norm,
the map
$\operatorname{color}([a])=(\operatorname{ord}_\nu\Nrd(a)\bmod2)_{\nu\in\Sigma}
\in\F_2^\Sigma$ records the color changes.
Set $\Gamma_\Sigma^0=\ker\operatorname{color}$.
The norm of a generator in $S_{\nu_i}$ is $P_i$, so
\begin{equation}
 \Gamma_\Sigma^0
 =\{[a]\in\Gamma_\Sigma:\Nrd(a)\in(k_0^\times)^2\}.
 \label{eq:type-norm-one}
\end{equation}
Dividing by a square root gives a norm-one representative.
In characteristic two this representative
is unique, so the local matrix realizations give a homomorphism
$\Gamma_\Sigma^0\to\prod_i\SL_2((k_0)_{\nu_i})$.

\paragraph{Congruence quotients.}
Choose a monic irreducible $\ell_s\in\F_2[y]$ coprime to
$(1+y)\prod_iP_i$, with degree $s>2r\max_i a_i$.
Put $I_s=(\ell_s)$ and $L_s=\F_2[y]/I_s\simeq\F_{2^s}$.
Reducing quaternion coefficients modulo $I_s$ and splitting the
resulting algebra as $M_2(L_s)$ gives
$\pi_{I_s}:\Gamma_\Sigma\to\PGL_2(L_s)$.
An explicit matrix realization is given in
Appendix~\ref{sec:quotient-injectivity}.
Define the color-preserving congruence subgroup and its quotient by
\[
 \Gamma_\Sigma(I_s)=\{g\in\Gamma_\Sigma^0:\pi_{I_s}(g)=1\},
 \qquad \X_s=\Gamma_\Sigma(I_s)\backslash\mathcal T_\Sigma.
\]
Put $G_s=\PSL_2(L_s)=\PGL_2(L_s)$ and
$B_{\nu_i,I_s}=\pi_{I_s}(S_{\nu_i})$.
Proposition~\ref{prop:finite-quotients} identifies the vertex set
with $G_s\times\F_2^r$, with direction-$i$ steps
$(h,\varepsilon)\mapsto(hb,\varepsilon+e_i)$ for
$b\in B_{\nu_i,I_s}$, as in Section~\ref{sec:cubical-complex}.

\begin{proposition}[Arithmetic cubical family]
\label{prop:cubical-input}\label{prop:face-spectrum}
The complexes $\X_s$ are connected, have directional degrees
$q_i+1$, and satisfy
\begin{equation}
 \X_s\simeq\operatorname{Cay}_\square
 (G_s;B_{\nu_1,I_s},\ldots,B_{\nu_r,I_s}).
 \label{eq:finite-cell-definition}
\end{equation}
The quotient map $p_s:\mathcal T_\Sigma\to\X_s$
is injective on each closed cube.
The spectral assumption of Definition~\ref{def:directional-expansion}
holds with
$\lambda\le\max_i 2\sqrt{q_i}/(q_i+1)$.

For fixed $P_1,\ldots,P_r$,
\begin{equation}
 |\X_s(0)|=2^{r+s}(2^{2s}-1)=2^{\Theta(s)},\qquad
 |\X_s(k)|=\Theta(|\X_s(0)|)\quad(0\le k\le r).
 \label{eq:quotient-size}
\end{equation}
The groups, cells and incidences are computable in time polynomial
in $|\X_s(0)|$.
\end{proposition}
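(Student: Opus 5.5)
The proof assembles standard facts about the Rungtanapirom--Stix--Vdovina lattice and its congruence quotients, in the order the statement lists them. The starting point is Lemma~\ref{lem:cubical-generators}: each $S_{\nu_i}$ is inverse-closed, has $q_i+1$ elements hitting every neighbor line of $o_{\nu_i}$ exactly once, and $\Gamma_\Sigma$ acts simply transitively on vertices of $\mathcal T_\Sigma$.

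\textbf{Cubical generating sets upstairs.} Since a generator in $S_{\nu_i}$ fixes the base vertex in every other factor, the set $S_{\nu_i}S_{\nu_j}$ consists of the elements carrying $\mathbf o$ to the vertices at distance one in both directions $i$ and $j$. The same holds for $S_{\nu_j}S_{\nu_i}$. By simple transitivity, both sets equal the set of group elements sending $\mathbf o$ to those $(q_i+1)(q_j+1)$ vertices, so $S_{\nu_i}S_{\nu_j}=S_{\nu_j}S_{\nu_i}$ and the product has full cardinality. Iterating gives the unique-factorization condition~\eqref{eq:cubical-generators} on $\Gamma_\Sigma$. This identifies $\mathcal T_\Sigma$ with an infinite decorated Cayley complex: the vertex $g\mathbf o$ corresponds to $g\in\Gamma_\Sigma$, with its color recorded separately via $\operatorname{color}$.

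\textbf{Descent to the finite quotient.} Next I would show that $\pi_{I_s}$ restricted to $\Gamma_\Sigma^0$ surjects onto $G_s=\PSL_2(L_s)$, using strong approximation or the explicit matrix realization in Appendix~\ref{sec:quotient-injectivity}. I would also show that $\pi_{I_s}$ is injective on the finite ball $B_{\nu_1}\cdots B_{\nu_r}\cdot(B_{\nu_1}\cdots B_{\nu_r})^{-1}$. For the injectivity, a nontrivial element of this ball has quaternion coefficients of degree at most $2r\max_i a_i<s$. If such an element were congruent to a scalar modulo $\ell_s$, the relevant coefficients would have to vanish, so the element would be trivial. With these two facts, the vertices of $\X_s$ become $\Gamma_\Sigma^0(I_s)$-orbits, indexed by $(\pi_{I_s}(g),\operatorname{color}(g))\in G_s\times\F_2^r$. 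Direction-$i$ steps become right multiplication by $B_{\nu_i,I_s}$, and injectivity on the ball transfers the cubical-generator axioms to $B_{\nu_i,I_s}$ and gives cube injectivity of $p_s$; together these yield~\eqref{eq:finite-cell-definition}. Connectivity follows because the steps $(b,e_i)$ generate $G_s\times\F_2^r$, by surjectivity together with the fact that the generators realize every color change. The size formula $|G_s|=2^s(2^{2s}-1)$, combined with Lemma~\ref{lem:cell-counts}, gives~\eqref{eq:quotient-size}. Polynomial-time computability is immediate: one enumerates $G_s$ as matrices and computes the $B_{\nu_i,I_s}$ by finding the norm solutions $f,g$ of bounded degree.

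\textbf{Spectral bound.} I expect this step to be the main obstacle. A direction-$j$ parallel-face graph is a quotient of the tree $\mathcal T_{\nu_j}$ by a lattice-related group: fixing an $I$-cell lifts to fixing the coordinates outside $j$, and the graph becomes a finite quotient of $\mathcal T_{\nu_j}$. Its nontrivial eigenvalues should then be controlled by the Ramanujan property. Concretely, I would invoke Drinfeld's proof of the Ramanujan conjecture for $\GL_2$ over function fields, via Jacquet--Langlands transfer from the quaternion algebra, as in~\cite{RSV2019}. This bounds the eigenvalues by $2\sqrt{q_j}$ apart from the trivial $\pm(q_j+1)$, and connectivity rules out further trivial eigenvalues. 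The delicate points are that the parallel graph with fixed other colors is a genuine $S$-arithmetic quotient, including the effect of the color-preserving restriction, and that no one-dimensional automorphic characters other than those giving $\pm1$ contribute. I would handle the latter by surjectivity onto the perfect group $\PSL_2(L_s)$.
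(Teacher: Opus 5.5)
Your handling of the group and covering claims is sound and close to the paper's. Deriving $S_{\nu_i}S_{\nu_j}=S_{\nu_j}S_{\nu_i}$ and unique factorization from simple transitivity is a tidy alternative to the paper's identity $b'=\bar c'bc/P_j$. Your degree bound on words of length at most $2r$ is the paper's separation estimate; read $S_{\nu_i}$ for $B_{\nu_i}$ in your ``ball''. Together with normality of $\Gamma_\Sigma(I_s)=\ker(\pi_{I_s},\operatorname{color})$, it gives cube injectivity and~\eqref{eq:finite-cell-definition}.

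The gap is in the spectral step, at connectivity of the parallel-face graphs. Connectivity is part of the spectral assumption, and it is exactly what fails for DLV's abelian lifts.

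You assert that the direction-$j$ graph ``becomes a finite quotient of $\mathcal T_{\nu_j}$'' and then rely on its connectivity. After lifting to $\mathcal T_\Sigma$, its vertices are $\Gamma_\Sigma(I_s)$-orbits of cells $\tau'\times v$, where $\tau'\subseteq\prod_{i\ne j}\mathcal T_{\nu_i}$ is a transverse face of the given type and colors. The graph is a single quotient $\Lambda\backslash\mathcal T_{\nu_j}$, with $\Lambda$ the pointwise stabilizer of a fixed $\tau$ in the deck group, only if $\Gamma_\Sigma(I_s)$ acts transitively on these $\tau'$. Otherwise it is a disjoint union of tree quotients, one per orbit.

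Nothing in your proposal gives this transitivity, and surjectivity of $\pi_{I_s}$ on all of $\Gamma_\Sigma^0$ does not suffice. Already for $I=\varnothing$, connectivity says that the $B_{\nu_j,I_s}$-steps alone generate $G_s\times\F_2$. That is a surjectivity statement for the single-place lattice $\langle S_{\nu_j}\rangle$. For general $I$ it concerns a subgroup with vertex and edge stabilizer conditions at the other places. Perfectness of $\PSL_2(L_s)$ cannot rule out a proper image of such a subgroup. Once connectivity holds, one-dimensional characters need no separate treatment: the Ramanujan bound covers everything except $\pm(q_j+1)$, and these have multiplicity at most one on a connected regular graph.

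The missing idea is strong approximation with $\nu_j$ as the unrestricted place (Lemma~\ref{lem:arithmetic-approximation}). Prescribe reduction $1$ at $\ell_s$ and small open sets at the $\nu_i$, $i\ne j$. Determinant-one matrices act transitively on transverse faces of fixed colors (Lemma~\ref{lem:tree-edge-transitivity}), and face stabilizers are open. So a close enough approximation moves $\tau$ as prescribed, and $\Gamma_\Sigma(I_s)$ is transitive.

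You also flag the color restriction but leave it open. The square-norm condition defining $\Lambda$ is not a congruence condition. The paper therefore applies the Ramanujan lemma to $\Lambda^+=\{g\in\Gamma_\Sigma:\pi_{I_s}(g)=1,\ g|_\tau=1\}$, using the displacement estimate to exclude vertex stabilizers and edge inversions. Then $\Lambda$ has index at most two in $\Lambda^+$. Hence $\Lambda\backslash\mathcal T_{\nu_j}$ is either that quotient or its bipartite double cover, whose spectrum only adds negatives of the original eigenvalues.
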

\begin{proof}
Appendix~\ref{sec:quotient-injectivity} proves the group and
covering claims using Lemma~\ref{lem:arithmetic-approximation}.
Appendix~\ref{sec:spectral-certificate} identifies each parallel-face
graph with a congruence tree quotient and applies
Lemma~\ref{lem:arithmetic-ramanujan}, with vertex or edge stabilizers
in the transverse directions.
\end{proof}

The parameters $q_i$ and $s$ have different roles.
The field sizes $q_i$ determine the local degrees and
the spectral bound, whereas $s$ controls the size of
the finite quotient. Once the local trees are fixed,
increasing $s$ produces larger complexes while preserving
their upper stars. This will allow the code length to grow
with the local codes held fixed.

\subsection{Reed-Solomon sheaves}
\label{sec:directional-codes}\label{sec:edge-matching}
\label{sec:rs-compatibility}\label{sec:rs-descent}
All spaces and codes below are over a finite field $F$ containing
$\F_{q_i^2}$ for every direction $i$.
Write $F[X,Y]_d$ for the space of homogeneous polynomials
of degree $d$ over $F$.

For a subfield $\F_q\subseteq F$ and $0\le d\le q-1$, define the
degree-$d$ \emph{projective Reed-Solomon code} by
\[
 \operatorname{PRS}_q(d)=
 \left\{\bigl((f(1,t))_{t\in\F_q},\,f(0,1)\bigr):
              f\in F[X,Y]_d\right\}\subseteq F^{q+1}.
\]
The evaluation vectors represent all points of $\PP^1(\F_q)$.
Changing their order or their nonzero representatives gives a
\emph{monomially equivalent} code: one obtained by permuting coordinates
and multiplying each coordinate by a nonzero scalar.
The edges incident to each vertex of $\mathcal T_i$ are indexed by
$\PP^1(\F_{q_i})$ (Section~\ref{sec:finite-presentation}), so we use
these codes as the directional codes.

\paragraph{Generating codes on a single tree.}
Fix a place $\nu\in\Sigma$, with uniformizer $\pi$ and residue-field
size $q$ as in Section~\ref{sec:finite-presentation}. Put $Q=\sqrt q$
and $G=\SL_2((k_0)_\nu)$.
Choose an integer $d$ satisfying
\begin{equation}
 0\le d\le q-1,\qquad Q-1\mid d.
 \label{eq:admissible-degrees}
\end{equation}
Fix the standard edge $e=\{o_\nu,o_\nu'\}$ on $\mathcal T_\nu$. Its endpoints
\[
 o_\nu=[\mathcal O_\nu^2] = [\mathcal O_\nu e_1\oplus\mathcal O_\nu e_2],\qquad
 o_\nu'=[\mathcal O_\nu e_1\oplus\pi\mathcal O_\nu e_2],
\]
have colors $0$ and $1$, respectively. Start with
\begin{equation}
 \cF(o_\nu)\xrightarrow{\ r_{o_\nu}\ }\cF(e)
 \xleftarrow{\ r_{o_\nu'}\ }\cF(o_\nu'),
 \label{eq:standard-edge-data}
\end{equation}
where $\cF(o_\nu)=\cF(o_\nu')=F[X,Y]_d$, $\cF(e)=F$,
$r_{o_\nu}(f)=f(1,0)$, and $r_{o_\nu'}(f)=f(0,1)$.
The two evaluation points are the residue directions of $e$ seen from
$o_\nu$ and $o_\nu'$, read in the respective lattice bases
$(e_1,e_2)$ and $(e_1,\pi e_2)$.

\begin{lemma}[Group action on a single tree sheaf]
\label{lem:matching-characters}\label{prop:tree-coefficients}
The standard-edge data~\eqref{eq:standard-edge-data} extend to a
$G$-equivariant Tanner sheaf $\cF$ on $\mathcal T_\nu$.
Moreover, at each vertex $v$, the restrictions identify $\cF(v)$ with
a code monomially equivalent to $\operatorname{PRS}_q(d)$.
\end{lemma}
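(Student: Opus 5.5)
The plan is to define the sheaf by transport of structure from the two endpoints of the standard edge, with twisted actions at the two vertex colors. For $g\in G=\SL_2((k_0)_\nu)$ and a vertex $v$ of color $0$, write $v=g\,o_\nu$. We set $\cF(v)=F[X,Y]_d$ and let $g$ act from $\cF(o_\nu)$ to $\cF(v)$. On color-$1$ vertices we use the ordinary action
\[
 (g\cdot f)(X,Y)=f\bigl((X,Y)\bar g\bigr),
\]
where $\bar g$ is the reduction of $g$ in the lattice basis $(e_1,\pi e_2)$. On color-$0$ vertices we use the Frobenius-twisted action $f\mapsto f((X,Y)\bar g^{(Q)})$, where $\bar g^{(Q)}$ raises each entry of $\bar g$ to its $Q$th power.

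$G$ preserves colors, since determinants are $1$, and it acts transitively on vertices of each color. It also acts transitively on edges, because the stabilizer $\SL_2(\cO_\nu)$ of $o_\nu$ surjects onto $\SL_2(\F_q)$, which is transitive on $\PP^1(\F_q)$. Hence every edge is $g e$ for some $g\in G$. For $c\in\cF(v)$ with $v=g o_\nu$ and an incident edge $g'e$, we define the restriction by transporting back: $c\mapsto r_{o_\nu}\bigl((g')^{-1}c\bigr)$, and similarly at color-$1$ endpoints.

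To make this well defined, it suffices to check the following: the standard-edge diagram \eqref{eq:standard-edge-data} is equivariant for the stabilizer $G_e$ of $e$, with $G_e$ acting by a common scalar on $\cF(e)=F$. We must also check that the vertex stabilizers act on $F[X,Y]_d$ compatibly with their permutation of incident edges. The second check is built into the definition, because the vertex action factors through the reduction $\SL_2(\cO_\nu)\to\SL_2(\F_q)$ (respectively, through the conjugate Iwahori-type stabilizer of $o_\nu'$). What remains is to define the scalar on $\cF(e)$ as the factor acquired at one endpoint and show that it matches the other endpoint.

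The key computation is for the stabilizer $G_e$, the Iwahori subgroup of matrices $\begin{pmatrix}\alpha&\beta\\ \gamma&\delta\end{pmatrix}$ with $\gamma\in\pi\cO_\nu$. Modulo $\pi$, in basis $(e_1,e_2)$ it reduces to an upper-triangular matrix with diagonal entries $(a,a^{-1})$, where $a=\bar\alpha\in\F_q^\times$. In basis $(e_1,\pi e_2)$ it is lower triangular with the same diagonal.

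We evaluate $g\cdot f$ at $(1,0)$ from $o_\nu$. Using the twisted action, this gives $f\bigl(a^{Q},\ast\bigr)$. A careful bookkeeping with the inverse-transpose convention should give $f(1,0)$ multiplied by $a^{-Qd}$, since $f$ is homogeneous of degree $d$ and the upper-triangular shape kills the other variable. At $o_\nu'$ we evaluate at $(0,1)$, and the lower-triangular shape gives $a^{d}f(0,1)$. The assumption $(Q-1)\mid d$ gives $a^{-Qd}=a^{-Qd}\cdot a^{(Q^2-1)d/(Q-1)\cdot 0}$, and more directly $a^{(Q+1)d}=(a^{q-1})^{d/(Q-1)}=1$, so $a^{-Qd}=a^{d}$. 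Thus both restrictions intertwine the $G_e$-action with the same character $a\mapsto a^d$ on $\cF(e)$.

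One also has to check that the twisted formula really is a group action. This holds because Frobenius is a field automorphism, so $g\mapsto\bar g^{(Q)}$ is a homomorphism. Equivariance of restrictions then holds by construction. The Tanner property is automatic on a tree, since every sheaf on a one-dimensional complex is Tanner.

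For the monomial equivalence, at a vertex $v=g o$ the incident edges correspond, via $g$, to the lines of $\F_q^2$. The restriction of $f$ to the edge $g h e$, with $h\in\SL_2(\F_q)$, is the evaluation of $f$ at a nonzero representative of the corresponding projective point. At color-$1$ vertices this is just the linear change of variables. At color-$0$ vertices it is the Frobenius image of that point, and $x\mapsto x^{Q}$ permutes $\PP^1(\F_q)$. So the code $\{(c_{e'})_{e'\in E(v)}\}$ is $\operatorname{PRS}_q(d)$ up to permuting coordinates and rescaling them by nonzero scalars. The scalars come from the choice of representatives and from the character factors.

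I expect the main obstacle to be the sign and convention bookkeeping in the Iwahori computation. We must choose the action on variables (row vector times $\bar g$, or $\bar g^{-1}$, or the transpose) consistently with how lattices transform, so that $r_{o_\nu}$ and $r_{o_\nu'}$ really read the residue direction of $e$. We must also verify that the exponents come out as $-Qd$ versus $d$ and not, say, $Qd$ versus $d$. If they came out as $Qd$ versus $d$, matching would require $(Q-1)\mid d$ via $a^{(Q-1)d}=1$ instead, and the same divisibility hypothesis handles it. I would first fix the convention by checking that the action on $\PP^1(\F_q)$ matches the action on neighboring lattices, then redo the diagonal computation.
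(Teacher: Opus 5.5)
Your route is the paper's: transport the standard-edge data along $G$, let vertex stabilizers act through reduction mod $\pi$ with a Frobenius twist at color $0$, and check equivariance under the edge stabilizer $K_e$ using its reciprocal residue scalars $a,a^{-1}$. You then read off $\operatorname{PRS}_q(d)$ because $x\mapsto x^Q$ permutes $\PP^1(\F_q)$.

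\textbf{The displayed action breaks the key step.} The action you actually write, $(g\cdot f)(X,Y)=f((X,Y)\bar g)$ and its twist, is the wrong one. Take $h\in K_e$ with $\bar h=\bigl(\begin{smallmatrix}a&b\\0&a^{-1}\end{smallmatrix}\bigr)$ in the basis $(e_1,e_2)$. Then $(1,0)\bar h^{(Q)}=(a^Q,b^Q)$, so the upper-triangular shape does \emph{not} kill the second variable. For $f=Y^d$ with $d\ge1$ and $h=\bigl(\begin{smallmatrix}1&1\\0&1\end{smallmatrix}\bigr)$, you get $r_{o_\nu}(f)=0$ but $r_{o_\nu}(h\cdot f)=1\ne0$. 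Likewise, with $\bar h=\bigl(\begin{smallmatrix}a&0\\c&a^{-1}\end{smallmatrix}\bigr)$ in the basis $(e_1,\pi e_2)$, you get $(0,1)\bar h=(c,a^{-1})$ at $o_\nu'$. Right multiplication on row vectors is the transpose action, which fixes $[0:1]$ rather than the residue direction $[1:0]$ of $e$. Hence $\ker r_{o_\nu}$ is not $K_e$-invariant, and no scalar action on $\cF(e)=F$ can restore equivariance.

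\textbf{The fix is the convention you guess at the end.} Use $(k\cdot f)(v)=f(v\,\bar k^{-\mathsf T})$ for row vectors $v$, i.e.\ the paper's $f(\bar k^{-1}(X,Y))$, with $\bar k^{(Q)}$ in place of $\bar k$ at color $0$. Then $(1,0)(\bar h^{(Q)})^{-\mathsf T}=(a^{-Q},0)$ and $(0,1)\bar h^{-\mathsf T}=(0,a)$ give the multipliers $a^{-Qd}$ and $a^d$. Your identity $a^{(Q+1)d}=(a^{q-1})^{d/(Q-1)}=1$ matches them. The restriction at $o_\nu$ to the edge $ke$ also becomes evaluation at $\bar k^{(Q)}\bar e_1$, which is exactly what your monomial-equivalence step needs.

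\textbf{Two further points.} Your fallback is wrong. Multipliers $a^{Qd}$ and $a^d$ would agree for all $a\in\F_q^\times$ only if $(q-1)\mid(Q-1)d$, i.e.\ $(Q+1)\mid d$, and $(Q-1)\mid d$ does not give this (take $d=Q-1$ and $a$ a generator). That case cannot occur anyway: $\det h=1$ makes the two residue scalars reciprocal, so any consistent convention yields a discrepancy $a^{\pm(Q+1)d}$. You should state this rather than hedge. Finally, say explicitly that evaluation $F[X,Y]_d\to F^{E(v)}$ is injective because $d\le q-1<q+1$. This is what identifies $\cF(v)$ with a subspace of $F^{E(v)}$, as the definition of a sheaf code requires.
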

\begin{proof}
For each $g\in G$, place a separate \emph{copy} of
$\cF(o_\nu)$, $\cF(e)=F$, and $\cF(o_\nu')$
at $go_\nu$, $ge$, and $go_\nu'$, respectively.
Write $f^{(g)}$ for the polynomial with the same coefficients in a
copied vertex space, and $1^{(g)}$ for the copied edge basis.
The transport maps are $f\mapsto f^{(g)}$ and $t\mapsto t\,1^{(g)}$.
Define the copied restrictions by
\[
 \begin{aligned}
 r_{go_\nu,ge}(f^{(g)})&=f(1,0)\,1^{(g)},\qquad
 r_{go_\nu',ge}(f^{(g)})=f(0,1)\,1^{(g)}.
 \end{aligned}
\]
By edge transitivity and color preservation
(Appendix~\ref{sec:tree-edge-transitivity}), these copies cover the tree.

To identify copies at the same cell, let
$K_{o_\nu}=\operatorname{Stab}_G(o_\nu)$ and
$K_{o_\nu'}=\operatorname{Stab}_G(o_\nu')$ act by
\[
 \begin{aligned}
 (k\cdot f)(X,Y)&=f\bigl((\bar k^{(Q)})^{-1}(X,Y)\bigr),
 &k\in K_{o_\nu},\quad f\in\cF(o_\nu),\\
 (k\cdot f)(X,Y)&=f(\bar k^{-1}(X,Y)),
 &k\in K_{o_\nu'},\quad f\in\cF(o_\nu').
 \end{aligned}
\]
Here $\bar k$ is the action of $k$ on lattice vectors modulo $\pi$,
written in the corresponding basis over $\F_q$ ($(e_1,e_2)$ at $o_\nu$ and $(e_1,\pi e_2)$ at $o_\nu'$). The superscript $(Q)$
raises the coefficients of this linear substitution to their $Q$th powers.
We then identify the vertex copies: fix one representative $g_v$ for each
vertex $v$, with $g_vv_0=v$ for the base vertex $v_0\in\{o_\nu,o_\nu'\}$ of
the same color, and set $f^{(g_vk)}=(k\cdot f)^{(g_v)}$ for $k\in K_{v_0}$.

For $h\in K_e:=K_{o_\nu}\cap K_{o_\nu'}$, the determinant-one condition
gives reciprocal scalar actions $a,a^{-1}$ on the two residue lines of $e$: $\langle\bar e_1\rangle$ seen from
$o_\nu$ and $\langle\overline{\pi e_2}\rangle$ seen from $o_\nu'$. Hence
\[
 r_{o_\nu}(h\cdot f)=f(a^{-Q},0)=a^{-Qd}r_{o_\nu}(f),\qquad
 r_{o_\nu'}(h\cdot f)=f(0,a)=a^d r_{o_\nu'}(f).
\]
Since $a^{q-1}=1$ and $q-1=(Q-1)(Q+1)$, the hypothesis $(Q-1)\mid d$
makes the two multipliers agree.

Set $h\cdot t=a^dt$ on $\cF(e)$ and identify the edge copies by
$1^{(gh)}=a^d1^{(g)}$. The stabilizer group laws make the identifications
consistent, leaving one copy at each cell. The $K_e$-equivariance makes the
restrictions well-defined. Relabelling copies by
$b\cdot f^{(g)}=f^{(bg)}$ and $b\cdot1^{(g)}=1^{(bg)}$ preserves these
identifications and defines a $G$-action commuting with the restrictions.
Identifying the standard edge data with the copies labelled $1$ gives
$g\cdot f=f^{(g)}$ and $g\cdot1=1^{(g)}$.
The edge basis $1^{(g)}$ depends on the representative of $ge$;
fix one translating element for each edge and use it in every degree (used for the later duality argument in Proposition~\ref{prop:rs-sheaf}).

The restrictions at a vertex $v$ then give
\[
 \cF(v)\longrightarrow F^{E(v)},\qquad
 f\longmapsto(r_{v,e'}(f))_{e'\in E(v)},
\]
where $E(v)$ is the set of incident edges. At $o_\nu$, write each
incident edge as $ke$ with $k\in K_{o_\nu}$. In the edge basis $k\cdot1$,
equivariance gives
\[
 r_{o_\nu,ke}(f)=r_{o_\nu}(k^{-1}\cdot f)
                 =f\bigl(\bar k^{(Q)}(1,0)\bigr).
\]
At $o_\nu'$ the same calculation gives $f\bigl(\bar k(0,1)\bigr)$.
As $ke$ ranges over the incident edges, these vectors represent every
projective direction once.
By transitivity, the same holds at every vertex.
Thus the image $C(v)\subseteq F^{E(v)}$ is monomially equivalent
to $\operatorname{PRS}_q(d)$.
Since $d<q+1$, the evaluation map is injective,
so it identifies $\cF(v)$ with $C(v)$ and the restrictions with coordinate
projections. Hence $\cF$ is a (Tanner) sheaf.
\end{proof}

\paragraph{Taking products and descending.}
Apply the single-tree construction in each direction $i$, with
$\nu=\nu_i$, $q=q_i$, $Q=Q_i:=\sqrt{q_i}$, and $d=d_i$
satisfying~\eqref{eq:admissible-degrees}.
Write $\cF_i$ for the resulting tree sheaf, and set
\[
 \mathcal G=\boxtimes_{i=1}^r\cF_i.
\]
Tensoring the actions from Lemma~\ref{lem:matching-characters} gives an
action of $\prod_i\SL_2((k_0)_{\nu_i})$ on $\mathcal G$. Restricting to
$\Gamma_\Sigma(I_s)\subseteq\Gamma_\Sigma^0$
gives the equivariant structure used for descent to $\X_s$.

Projective evaluations match the labels of the tree edges.
For the product-expansion argument, it is useful to express
the same local codes as ordinary Reed-Solomon codes on
norm-one evaluation sets.
The next lemma gives this description for both the codes
and their duals.

For a finite set $S\subseteq F$ and $0\le m\le|S|$, define the
ordinary \emph{Reed-Solomon code}
\[
 \RS(S,m)=\{(f(a))_{a\in S}:f\in F[X],\ \deg f<m\}.
\]
Set $\Sigma_i=\{z\in\F_{q_i^2}:z^{q_i+1}=1\}$.

\begin{lemma}[Projective and ordinary Reed-Solomon codes]
\label{lem:directional-evaluations}
For $0\le d\le q_i-1$, the code $\operatorname{PRS}_{q_i}(d)$ is
monomially equivalent to $\RS(\Sigma_i,d+1)$.
Its dual is $\operatorname{PRS}_{q_i}(q_i-1-d)$, hence monomially
equivalent to $\RS(\Sigma_i,q_i-d)$.
\end{lemma}
\begin{proof}
Choose $\xi_i\in\F_{q_i^2}\setminus\F_{q_i}$. The projective linear map
$[a:b]\mapsto[a+\xi_i b:a+\xi_i^{q_i}b]$ sends
$\PP^1(\F_{q_i})$ bijectively onto $\{[z:1]:z\in\Sigma_i\}$:
the ratio has $q_i$th power equal to its inverse, and both sets have
$q_i+1$ points. An invertible change of the two homogeneous variables
preserves $F[X,Y]_d$, while normalizing the evaluation vectors to
$(z,1)$ only rescales coordinates. The polynomials $f(X,1)$ then run
through all univariate polynomials of degree at most $d$, giving
$\RS(\Sigma_i,d+1)$.

Projective RS duality \cite[Corollary~4.11]{GimenezRuanoSanJose2023}
holds over $F$ by extension of scalars. Rescaling representatives preserves
this duality, since $c^{q_i-1-d}=c^{-d}$ for $c\in\F_{q_i}^\times$.
Applying the same coordinate change in degree $q_i-1-d$ gives
the RS description of the dual.
\end{proof}

\begin{proposition}[Reed-Solomon Tanner sheaves]
\label{prop:rs-sheaf}
For degrees satisfying~\eqref{eq:admissible-degrees}, the product
Tanner sheaf $\mathcal G$ descends to a Tanner sheaf $\cF$ on $\X_s$.
Its generating codes in direction $i$ are monomially equivalent
to $\RS(\Sigma_i,d_i+1)$.
Every cell $\sigma$ of type $I$ satisfies
$\dim\cF(\sigma)=\prod_{j\notin I}(d_j+1)$.
Replacing $d_i$ by $q_i-1-d_i$ gives the Tanner dual $\cF^\perp$.
Both $\cF$ and $\cF^\perp$ are locally acyclic.
\end{proposition}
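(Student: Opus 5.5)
The proof assembles results already in place; no new estimate is needed. I would proceed in four steps.

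First, the descent. By Lemma~\ref{lem:matching-characters}, for each direction $i$ with $d_i$ satisfying~\eqref{eq:admissible-degrees}, there is an $\SL_2((k_0)_{\nu_i})$-equivariant Tanner sheaf $\cF_i$ on $\mathcal T_{\nu_i}$. Its vertex codes are monomially equivalent to $\operatorname{PRS}_{q_i}(d_i)$. Tensoring these actions gives an action of $\prod_i\SL_2((k_0)_{\nu_i})$ on $\mathcal G=\boxtimes_i\cF_i$ in the sense of Definition~\ref{def:sheaf-action}; compatibility with restrictions holds factorwise. The norm-one representatives realize $\Gamma_\Sigma(I_s)\subseteq\Gamma_\Sigma^0$ inside this product, so $\Gamma_\Sigma(I_s)$ acts on $\mathcal G$.

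For the hypotheses of Proposition~\ref{prop:tree-code-descent}, I would use two facts:
\begin{itemize}
\item freeness on vertices, since $\Gamma_\Sigma$ acts simply transitively on vertices (Lemma~\ref{lem:cubical-generators});
\item finitely many vertex orbits and cube injectivity, both from Proposition~\ref{prop:cubical-input}.
\end{itemize}
Descent then yields a Tanner sheaf $\cF$ on $\X_s$, with upper-star local complexes isomorphic to those of $\mathcal G$.

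Second, the generating codes and dimensions. A codimension-one cell of type $[r]\minus\{i\}$ carries $C_i(v)\cong\operatorname{PRS}_{q_i}(d_i)$. By Lemma~\ref{lem:directional-evaluations} this is monomially equivalent to $\RS(\Sigma_i,d_i+1)$. For a cell $\sigma$ of type $I$, the model~\eqref{eq:tree-code-local-model} gives
\[
 \cF(\sigma)\cong\bigotimes_{j\notin I}C_j(v_j),
\]
and each factor has dimension $d_j+1$, because evaluation is injective for $d_j<q_j+1$. This gives the claimed dimension formula.

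Third, the dual. The degree $q_i-1-d_i$ still satisfies~\eqref{eq:admissible-degrees}, since $(Q_i-1)\mid q_i-1$. So the same construction applies, with the same fixed edge-translating elements chosen in Lemma~\ref{lem:matching-characters}. This is the main point to check, and I expect it to be the only real obstacle. One must show that, in these common edge bases, the vertex codes for degree $q_i-1-d_i$ are exactly the coordinatewise duals $C_i(v)^\perp$, not merely monomially equivalent to them. Monomial rescalings do not commute with taking duals unless the scalars match.

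My plan for this step is as follows. At the base vertices, the codes are $\operatorname{PRS}(d)$ and $\operatorname{PRS}(q-1-d)$ evaluated at the same representatives $\bar k^{(Q)}(1,0)$ or $\bar k(0,1)$, with edge bases $k\cdot 1$. Changing the edge basis by $a^d$ in degree $d$ and by $a^{q-1-d}=a^{-d}$ in degree $q-1-d$ preserves the pairing. So projective RS duality \cite[Corollary~4.11]{GimenezRuanoSanJose2023} gives exact duality at the base vertices. Transitivity, together with the common choice of translating elements, carries this duality to every vertex. Tensoring over directions and descending then identifies the descended degree-$(q_i-1-d_i)$ sheaf with $\cF^\perp$, as in Example~\ref{ex:tanner-sheaf}.

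Finally, local acyclicity. The product sheaf $\mathcal G$ is locally acyclic because each factor $[C_i(v_i)\hookrightarrow F^{E_i(v_i)}]$ is exact in degree zero and tensor products of such complexes are exact below the top degree. The isomorphism $K_{p(\sigma)}(\cF)\cong K_\sigma(\mathcal G)$ from Proposition~\ref{prop:tree-code-descent} transfers this to $\cF$, and the same argument applies to $\cF^\perp$.
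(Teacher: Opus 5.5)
Your proposal is correct and follows the paper's proof essentially step for step: descent via Proposition~\ref{prop:tree-code-descent}, using simple transitivity of $\Gamma_\Sigma$ and Proposition~\ref{prop:cubical-input}; the RS identification via Lemmas~\ref{lem:matching-characters} and~\ref{lem:directional-evaluations}; dimensions and local acyclicity from the tensor local model~\eqref{eq:tree-code-local-model}; and exact duality from the reciprocal edge-basis factors $a^{d_i}$ and $a^{-d_i}$ on the commonly fixed translating elements. You rightly single out that last point as the one needing care, and you resolve it the same way the paper does.
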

\begin{proof}
The product of the actions in Lemma~\ref{lem:matching-characters}
restricts to the deck group $\Gamma_\Sigma(I_s)$
by~\eqref{eq:type-norm-one}.
Simple transitivity of $\Gamma_\Sigma$ and
Proposition~\ref{prop:cubical-input} give the geometric hypotheses
of Proposition~\ref{prop:tree-code-descent}, hence descent.
The claimed RS generating codes follow from
Lemmas~\ref{lem:matching-characters} and~\ref{lem:directional-evaluations}.
The tensor local chain complex
\eqref{eq:tree-code-local-model}, preserved under descent, gives
the dimension formula.
Replacing $d_i$ by $q_i-1-d_i$ gives $C_i(v)^\perp$ by
Lemma~\ref{lem:directional-evaluations}. An element $h\in K_e$ acts on
$t\in\cF(e)$ by $t\mapsto a^{d_i}t$ in degree $d_i$ and by
$t\mapsto a^{q_i-1-d_i}t=a^{-d_i}t$ in the complementary degree.
The two scalar factors cancel on the transported edge bases, so
the complementary-degree construction descends to $\cF^\perp$ (different representative $gh$, $h\in K_e$, rescales the two edge bases by
$a^{d_i}$ and $a^{-d_i}$, leaving their pairing unchanged).
The complementary degrees still satisfy~\eqref{eq:admissible-degrees}.
The tensor local chain complex also gives local acyclicity for both sheaves.
\end{proof}

\section{Product expansion}
\label{sec:product-expansion}

We first recall the notion of product expansion. Consider a grid $G = \prod_{j \in J} S_j$, where each $S_j$ is a set of size $n_j = |S_j|$. We say that $\ell \subseteq G$ is a coordinate line in direction $i$ if it takes the form \[\ell = S_i \times \prod_{j \in J \setminus i} \{a_j\}\]
for some $a_j \in S_j$ for each $j\neq i$. The coordinate line has length $|\ell| = n_i$. Let $C_i\subseteq F^{S_i}$ be a linear code on $S_i$. For a coordinate line $\ell$ in direction $i$ on the grid $G$, we define $C_\ell \subseteq F^{G}$ as a copy of $C_i$ supported on $\ell \subseteq G$.
Then, we consider the linear code \[\mathcal{B}((C_j)_{j \in J}) = \left(\bigotimes_{j\in J} C_j^\perp \right)^\perp = \sum_\ell C_\ell,\]
where the summation is taken over coordinate lines in all directions $i \in J$. We sometimes denote it by $\mathcal{B}$ when $(C_j)_{j \in J}$ is clear from context. 
When defining product expansion, we allow different code lengths of $(C_j)_{j\in J}$, and each coordinate line will be weighted by its length.

\begin{definition}[Product expansion]\label{def:product-expansion}
A tuple of codes $(C_i)_{i\in J}$ is \emph{$\varrho$-product expanding} if every
codeword $x\in \mathcal{B} = \sum_\ell C_\ell$ has a decomposition
$x=\sum_\ell a_\ell$, with $a_\ell\in C_\ell$, such that
\begin{equation}
 \varrho\sum_{\ell:a_\ell\ne0}|\ell|\le |x|.
 \label{eq:product-expansion}
\end{equation}
\end{definition}
Throughout, $\varrho$ denotes a product-expansion constant and $\rho$
denotes the soundness of Definition~\ref{def:normalized-soundness};
the two are related only through Section~\ref{sec:realization}.
This is the weighted-length form of
\cite[Definition~1]{KalachevPanteleev2025}. For equal lengths,
it concides with DLV's product expansion~\cite[Definition~5.12]{DinurLinVidick2024} up to scaling,
which refers to cocycle expansion in degree $|J|-1$.
Permuting and rescaling coordinates by nonzero factors independently
in each direction preserves product expansion.

A useful way to show product expansion is to identify inner generated sets.

\begin{definition}[Inner generation,
{\cite[Definition~5]{KalachevPanteleev2025}}]
\label{def:inner-generated}
A set $Z\subseteq G$ is \emph{inner-generated} if every codeword of
$\cB$ supported in $Z$ is a sum of codewords from $C_\ell$ with
$\ell\subseteq Z$.
\end{definition}

\begin{lemma}[Product expansion,
adapted from {\cite[proof of Lemma~19]{KalachevPanteleev2025}}]
\label{lem:inner-to-product}
Let $K\ge1$. Suppose that the support of every nonzero
$x\in\cB$ is contained in an inner-generated set $Z$ with
$|Z|\le K|\supp x|$. Then $(C_i)_{i\in J}$ is
$\frac1{|J|K}$-product expanding.
\end{lemma}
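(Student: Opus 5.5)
Given a nonzero $x\in\cB$, the plan is to let $U=\supp x$ and use the hypothesis to choose an inner-generated set $Z\supseteq U$ with $|Z|\le K|U|$. Since $x$ is a codeword of $\cB$ supported in $Z$, Definition~\ref{def:inner-generated} gives a decomposition
\[
 x=\sum_{\ell\subseteq Z}a_\ell,\qquad a_\ell\in C_\ell .
\]
Every line with $a_\ell\ne0$ in this decomposition is contained in $Z$. For $x=0$ the empty decomposition works trivially.

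Next we bound the total weighted length. Fix a direction $i\in J$. Distinct coordinate lines in direction $i$ are pairwise disjoint, since each is determined by its coordinates outside $i$. Therefore
\[
 \sum_{\substack{\ell\subseteq Z\\ \ell\text{ in direction }i}}|\ell|\le|Z| .
\]
Summing over the $|J|$ directions gives
\[
 \sum_{\ell:a_\ell\ne0}|\ell|\le\sum_{\ell\subseteq Z}|\ell|\le|J|\,|Z|\le|J|K|x| .
\]
This is exactly \eqref{eq:product-expansion} with $\varrho=1/(|J|K)$. Note that lines of different lengths are handled automatically, because the weighting by $|\ell|$ is what makes the disjointness count work.

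There is no real obstacle here. The only point needing care is that the decomposition produced by inner generation may use a line $\ell$ more than once or in several directions. We can merge repeated copies of the same line into a single summand $a_\ell\in C_\ell$, since $C_\ell$ is a linear subspace. After merging, each line is counted at most once in the sum, so the bound above applies as stated.
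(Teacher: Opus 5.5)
Your proof is correct and matches the paper's argument: inner generation decomposes $x$ into line-codewords on lines inside $Z$, disjointness of same-direction lines bounds each direction's total length by $|Z|$, and summing over the $|J|$ directions gives $\sum_{\ell:a_\ell\ne0}|\ell|\le|J|K|x|$. Your merging remark is harmless extra care, but only repeated lines within one direction should be merged; identical point-sets occurring as lines in different directions carry different codes and are already counted separately in the per-direction sum.
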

\begin{proof}
Write $x=\sum_{\ell\subseteq Z}a_\ell$ with $a_\ell\in C_\ell$.
Lines in each direction are pairwise disjoint, so their total
length inside $Z$ is at most $|Z|$. Summing over directions gives
\[
 \sum_{\ell:a_\ell\ne0}|\ell|
 \le |J|\,|Z|\le |J| K |x|,
\]
as required by Definition \ref{def:product-expansion}. The zero codeword has
the empty decomposition.
\end{proof}

\subsection{Product expansion for Reed-Solomon codes}
\label{sec:local-chain}
This section proves $\varrho$-product expansion for the Reed-Solomon
local codes of Proposition~\ref{prop:rs-sheaf}, with a constant $\varrho$
depending only on the number of directions and a fixed lower
bound on the relative redundancies. 
In particular, the bound is independent of
the code lengths, allowing us to meet the
spectral requirement of Corollary~\ref{thm:sheaf-criterion} by choosing a sufficiently large $A$ in our construction (and applying Proposition~\ref{prop:cubical-input}) while keeping a fixed ratio $\overline q/\underline q\le4^{r-1}$ in the directional degrees.

\begin{theorem}[Reed-Solomon product expansion on norm-one grids]
\label{thm:rs-product-expansion}\label{lem:rs-product-expansion}
Fix $r\ge2$ and $0<\beta\le1$. Given $A \in \mathbb N$, we set
\begin{equation}
\begin{aligned}[t]
\label{eq:field-size-choice}
q_i&=2^{2(A+i)},\quad n_i = q_i + 1, \quad
\Sigma_i=\{z\in\F_{q_i^2}:z^{q_i+1}=1\},
\end{aligned}
\end{equation}
for $i\in [r]$. For any set of integers $\beta n_i\le h_i \le n_i$ and characteristic-two field $F$ containing every $\Sigma_i$, we consider the Reed-Solomon code $C_i = \RS(\Sigma_i,n_i-h_i)$ over $F$. Then every nonempty subtuple of $(C_{i})_{i \in [r]}$ is $\varrho$-product expanding,
where
\[
\varrho=\frac1r\left(\frac{\beta}{K_r4^{r-1}}\right)^r
      =c_r\beta^r
\]
for some $K_r\ge2$ that depends only on $r$.
In particular, the bound is independent of $A$, $h_i$, $F$.

\end{theorem}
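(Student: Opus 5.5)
The plan is to reduce everything to Lemma~\ref{lem:inner-to-product}. We will find, for each nonzero $x\in\cB$ with $U=\supp x$, an inner-generated set $Z\supseteq U$ with $|Z|\le K|U|$, where $K=(K_r4^{r-1}/\beta)^r$. The target constant $\varrho=\frac1r(\beta/(K_r4^{r-1}))^r$ shows exactly this shape: $1/(|J|K)$ with $|J|\le r$. A subtuple indexed by $J\subseteq[r]$ is handled identically, since the grid $\prod_{j\in J}\Sigma_j$ and codes $\RS(\Sigma_j,n_j-h_j)$ form a tuple of the same kind with fewer directions. So I fix $J=[r]$.

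\textbf{Step 1: dual description and the hull.} The dual of $C_i$ is a generalized RS code of dimension $h_i$ on $\Sigma_i$. Hence $\cB^\perp=\bigotimes_i C_i^\perp$ consists of evaluations, with a fixed nonzero multiplier per coordinate, of polynomials $g$ with $\deg_{X_i}g<h_i$. Thus $x\in\cB$ iff $\sum_{a\in G}x_a w(a)g(a)=0$ for all such $g$. Set $m_i=\lceil \beta n_i/(K_r4^{r-1})\rceil$, roughly a constant fraction of $h_i$. Define the polynomial hull $Z$ as the common zero set in $G=\prod\Sigma_i$ of all polynomials with $\deg_{X_i}<m_i$ vanishing on $U$.

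\textbf{Step 2: the size bound $|Z|\le K|U|$.} Polynomials with $\deg_{X_i}<m_i$ form a space of dimension $\prod m_i$. If $|U|<\prod m_i$, a nonzero such polynomial vanishes on $U$. I would bound $|Z|$ by a combinatorial Nullstellensatz / Schwartz--Zippel style count on the grid. Concretely, I would choose a set $T\subseteq G$ of size comparable to $|U|\prod(n_i/m_i)$ such that the evaluation map on degree-$<m$ polynomials restricted to $U$ has the same kernel as on $Z$. Equivalently, $\dim$ of the polynomial space modulo the ideal of $Z$ equals that of $U$. Then I would use the standard fact that a degree-$(<m_i)$ box-interpolation space for $Z$ forces $|Z|\le\prod(n_i/m_i)\cdot\#\{\text{independent conditions}\}\le\prod(n_i/m_i)|U|$, via a shifting/downward-closed (Gröbner staircase) argument in each coordinate. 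If $|U|\ge\prod m_i$ up to constants, I instead take $Z=G$, which is trivially inner-generated, and the bound $|G|\le K|U|$ is automatic. This case split yields $K=\prod_i O(n_i/m_i)=(K_r4^{r-1}/\beta)^r$, with the $4^{r-1}$ absorbing $n_i/n_1\le 4^{r-1}$.

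\textbf{Step 3: inner generation of $Z$ (main obstacle).} This step is the heart of the argument. I need every $y\in\cB$ with $\supp y\subseteq Z$ to be a sum of line codewords on lines inside $Z$. I plan two substeps. First, I show $\supp y$ lies in the union $Z^\ast$ of the lines contained in $Z$. For a point $a\in Z\setminus Z^\ast$, I would build a test polynomial $g$ with $\deg_{X_i}<h_i$ that vanishes on $Z\setminus\{a\}$ but not at $a$; orthogonality then forces $y_a=0$. The construction uses the norm-one identity $z^{q_i}=z^{-1}$: applying Frobenius maps $X_i\mapsto X_i^{2^t}$ to the defining polynomials of $Z$ keeps the zero set on $\Sigma_i$ while rotating gradients. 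A determinant of $r$ such modified polynomials (a Jacobian-type construction) then gives an isolating polynomial of coordinate degree $O_r(m_i)\le h_i$. The constant $K_r$ is chosen to make this degree budget fit, and at $a$ one uses that $a$ is not on a full line of $Z$. Second, with $\supp y\subseteq Z^\ast$, I decompose $y$ by induction on $r$, peeling one direction at a time. On each direction-$1$ line in $Z$ I interpolate the appropriate RS codeword, subtract it, and show the remainder lies in $\cB$ for the lower-dimensional slices, again contained in their hulls. This gluing needs care to stay inside lines of $Z$. I expect the isolation construction to be the main technical difficulty, and would develop it in Appendix~\ref{sec:algebra}. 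Once Steps 2 and 3 hold, Lemma~\ref{lem:inner-to-product} gives the stated $\varrho$, independent of $A,h_i,F$.
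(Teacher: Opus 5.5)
Your outer architecture is the paper's: enlarge $\supp x$ to a polynomial hull, bound its size, prove the hull is inner-generated, and apply Lemma~\ref{lem:inner-to-product}. The inequality you want in Step~2 is correct. The paper proves it by averaging rather than by a staircase argument. First, $z\in Z$ iff $\mathrm{ev}_z$ lies in the span $W$ of the evaluations on $U$. Second, for a uniformly random subgrid $T=\prod_iT_i$ with $|T_i|=m_i$, the evaluations at $T\cap Z$ are independent and lie in $W$. Hence $|Z|\prod_i m_i/n_i=\mathbb E|T\cap Z|\le|U|$ for every $U$, and no case split is needed.

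The genuine gap is Step~3. It carries essentially all of the difficulty, and the paper isolates it as Theorem~\ref{thm:inner-generation-main}. Your support-restriction mechanism fails as stated. A Jacobian determinant isolates a prescribed $a\in Z\setminus Z^\ast$ only if you have $|J|$ low-degree polynomials vanishing on $Z$ with independent gradients \emph{at $a$}. Knowing that $a$ lies on no full line of $Z$ does not provide this. Write a defining polynomial as $F=P((X_i^{2^{u_i}})_i)$. At zeros of $F$, its Frobenius twist $R_j$ has nonzero gradient only where $(\partial_jP)((X_i^{2^{u_i}})_i)\ne0$. So at a singular point of $F$ with all $u_i=0$, every twist has zero gradient. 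Note also that the twist is not the substitution $X_i\mapsto X_i^{2^t}$. It combines inverse Frobenius on coefficients with per-variable exponent shifts chosen by CRT modulo $a_i=2(A+i)$, using $x^{2^{a_i}}=x^{-1}$.

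The paper never isolates a prescribed point. It sets $N=P_0M$, where $P_0$ cuts out exactly the lines $U_E$, assumes $N\ne0$, and iterates:
\begin{itemize}
\item $F_t$ is a polynomial of minimal total degree, with coordinate degrees at most $d$, vanishing on $\supp(Q_tN)$. So the lower-degree derivative $b_{t+1}$ is not identically zero there.
\item The support shrinks to $\supp(Q_tb_{t+1}N)$.
\item $A_{t+1}=Q_tR_{t+1}$ vanishes on all of $\supp N$, while on the shrunken support its gradient is a nonzero multiple of $e_{j_{t+1}}$.
\end{itemize}
After $|J|$ steps, Lemma~\ref{lem:rank} gives a contradiction at some point that was not prescribed in advance. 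Induction on the number of directions is already needed here, to force each $F_t$ to involve every variable. A polynomial in fewer variables annihilating a subset of $Z_E\setminus U_E$ yields interpolation on that subset, which kills any nonzero codeword supported there.

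Your gluing step, peeling off direction-$1$ lines, is not a valid procedure. The restriction of $y$ to a line also picks up single-point contributions from transversal lines, so there is no canonical codeword of $C_1$ to subtract. Nothing then ensures that the remainder is spanned by lines of $Z$ in the other directions.

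The paper proves the decomposition dually, in Theorem~\ref{thm:cylinders}:
\begin{itemize}
\item $U_E$ is written as a union of cylinders $W_I\times\prod_{j\notin I}S_j$. By the induction hypothesis, each base $W_I$ admits interpolation with bound $O_r(d)$.
\item A functional annihilating the line codes gives compatible low-degree sections on the components $\{a\}\times(\PP^1)^{[m]\setminus I}$ of a union $Y\subseteq(\PP^1)^m$. These glue by a Chinese remainder argument for monomial ideals.
\item The glued section extends to $(\PP^1)^m$ because $H^1(\mathcal I_Y(t))=0$ for $t_i\ge2mR-1$. This vanishing is proved by regularity by approximation, via Taylor resolutions and a Koszul complex.
\item The extension annihilates $\cB_h$, so by duality the line codes span $\cB_h\cap\overline F^{U_E}$.
\end{itemize}

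Finally, you work over the finite field $F$. There, no single polynomial of coordinate degree at most $d$ need cut out $Z$. The paper passes to $\overline F$, where the hull is unchanged, and then descends inner generation back to $F$ by comparing ranks.
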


To prove the theorem, we use the inner-generation approach
of~\cite{KalachevPanteleev2025}.
For each codeword, we enlarge its support to a polynomial hull and use interpolation to show that the hull is not much larger
than the original support.
We then use Theorem~\ref{thm:inner-generation-main} and field descent to show that
the hull is inner-generated.
Finally, an application of Lemma~\ref{lem:inner-to-product} gives product expansion.
Let $F$ be a field, $S_i\subseteq F$ be finite nonempty subsets, and consider a grid $G=\prod_{i\in J}S_i$. For $d\in\mathbb Z_{\ge0}$, write
$\mathcal P_d(F)$ for the multivariate polynomials $F[(X_i)_{i\in J}]$ of
degree at most $d$ in each variable.

\begin{definition}[Polynomial hull]
\label{def:degree-hull}
For $U\subseteq G$, its \emph{degree-$d$ hull} is
\[
\operatorname{hull}_d(U)
 =\{z\in G:f(z)=0\text{ for every }f\in\mathcal P_d(F)
                  \text{ with }f|_U=0\}.
\]
\end{definition}

Thus, the hull contains $U$ and consists of the grid points that
cannot be separated from $U$ by a polynomial in $\mathcal P_d(F)$.
For instance, if $U$ contains $d+1$ points of a coordinate line, then
the whole line lies in $\operatorname{hull}_d(U)$: restricted to that
line, a polynomial in $\mathcal P_d(F)$ is a univariate polynomial of
degree at most $d$, so $d+1$ zeros force it to vanish on the whole
line. Thus, the hull fills in every coordinate line that $U$ meets in at least
$d+1$ points.

\begin{lemma}[Properties of polynomial hulls]
\label{lem:polynomial-hull}
Let $U\subseteq G$ and $Z=\operatorname{hull}_d(U)$.
\begin{enumerate}[label=(\roman*),leftmargin=*]
\item $Z$ is unchanged by extending the coefficient field $F$.
\item If $d+1\le |S_i|$ for every $i\in J$, then
\[
 |Z|\le |U|\prod_{i\in J}\frac{|S_i|}{d+1}.
\]
\item If $Z\ne G$, there is a nonzero
$E\in\mathcal P_d(\overline F)$ whose zero set on $G$ is exactly $Z$,
where $\overline F$ is an algebraic closure of $F$.
\end{enumerate}
\end{lemma}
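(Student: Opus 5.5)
The plan is to prove the three parts separately, all resting on one linear-algebra fact. The space $\mathcal P_d(F)$ has the monomial basis $\prod_i X_i^{e_i}$ with $0\le e_i\le d$. This basis is defined over the prime field, and the condition $f|_U=0$ is a system of linear equations whose coefficients are monomial values at points of $U$.

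\textbf{Part (i).} Let $E\supseteq F$ be an extension field. The solution space of $f|_U=0$ in $\mathcal P_d(E)$ is $V\otimes_F E$, where $V$ is the solution space over $F$. This holds because the rank of a linear system does not change under field extension. So any $f\in\mathcal P_d(E)$ vanishing on $U$ is an $E$-linear combination of polynomials in $V$. Consequently a point $z$ is killed by all of $V$ if and only if it is killed by all of $V\otimes E$. The hull is therefore the same over $F$ and over $E$.

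\textbf{Part (ii).} I would use an interpolation and counting argument.
\begin{itemize}
\item For each $i$, choose a subset $T_i\subseteq S_i$ of size $d+1$.
\item Interpolation on the subgrid $\prod_i T_i$ is bijective onto $\mathcal P_d(F)$: these are tensor products of univariate Lagrange bases. The same holds for every translate of the form $\prod_i T_i'$ with $|T_i'|=d+1$.
\item Partition each $S_i$ into blocks of size $d+1$, allowing overlapping blocks so that the blocks cover $S_i$. This needs at most $\lceil |S_i|/(d+1)\rceil\le 2|S_i|/(d+1)$ blocks. For a cleaner constant, average instead over all $(d+1)$-subsets.
\item Key claim: for any product subgrid $B=\prod_i T_i'$ of this type, $|Z\cap B|\le|U\cap B|$. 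Any values prescribed on $B\setminus (U\cap B)$ extend to a polynomial vanishing on $U\cap B$. However, the polynomial must vanish on all of $U$, not only on $U\cap B$, so the claim does not follow directly.
\end{itemize}
To handle this, I would use the averaging version. Pick uniformly random subsets $T_i\subseteq S_i$ of size $d+1$. Let $W=\{f\in\mathcal P_d: f|_U=0\}$. Then $Z$ is the common zero set of $W$, and $\dim W\ge (d+1)^{|J|}-|U|$.

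Restriction to $B=\prod T_i$ is injective on $\mathcal P_d$. Hence the restriction of $W$ to $B$ is a subspace of $F^B$ of dimension at least $|B|-|U|$, and its common zero set inside $B$ has size at most $|U|$. That zero set contains $Z\cap B$, so $|Z\cap B|\le |U|$. This bound is too weak, so I would sharpen it: the subspace vanishes on $U\cap B$ and has codimension at most $|U|$.

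A cleaner route is to count dimensions directly. Let $Z\cap B=\{z_1,\dots,z_m\}$. Every $f\in W$ vanishes on these points, so $W$ lies in the kernel of evaluation at $Z\cap B$. Those evaluations are linearly independent on $\mathcal P_d$ because they sit inside an interpolation grid. This gives $\dim W\le (d+1)^{|J|}-|Z\cap B|$. Combined with $\dim W\ge (d+1)^{|J|}-|U|$, we get $|Z\cap B|\le|U|$, which is still $|U|$ rather than $|U\cap B|$.

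So I would work with the full ambient bound instead. Taking expectations,
\[
\mathbb E|Z\cap B|=|Z|\prod_i\frac{d+1}{|S_i|},
\]
and $|Z\cap B|\le |U|$ holds for every $B$. This yields $|Z|\le|U|\prod_i|S_i|/(d+1)$, exactly the stated bound.

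\textbf{Part (iii).} I would argue over $\overline F$, where the hull is unchanged by part (i).
\begin{itemize}
\item For each $z\in G\setminus Z$, the subspace $W_z=\{f\in W\otimes\overline F: f(z)=0\}$ is a proper hyperplane of $W\otimes\overline F$.
\item An infinite field's vector space is not a finite union of proper subspaces. So there is a polynomial $E\in W\otimes\overline F$ avoiding every $W_z$.
\item This $E$ vanishes on $Z$ and on no point of $G\setminus Z$. It is nonzero because $Z\ne G$.
\end{itemize}

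\textbf{Main obstacle.} The only delicate step is part (ii). The crucial point is the dimension comparison $|Z\cap B|\le |U|$ for every interpolation subgrid $B$, which rests on the independence of the point evaluations on an interpolation grid, combined with the averaging over random subgrids.
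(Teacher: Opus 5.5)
Your proposal is correct and follows essentially the paper's proof: (i) uses invariance of rank under field extension, (ii) averages over a uniformly random product subgrid $B=\prod_i T_i$ with $|T_i|=d+1$, on which point evaluations are independent, to get $|Z\cap B|\le|U|$, and (iii) picks $E$ outside finitely many proper subspaces over $\overline F$. The only difference is that you phrase (i) and (ii) through the kernel $\{f:f|_U=0\}$ rather than the span of the functionals $\mathrm{ev}_u$, which is just the dual formulation. In (ii) you can drop the block-covering detour and the attempted bound $|Z\cap B|\le|U\cap B|$: the bound $|Z\cap B|\le|U|$ is not ``too weak'' but exactly what the averaging step needs.
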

\begin{proof}
Let $\mathrm{ev}_z\in\mathcal P_d(F)^*$ denote evaluation at $z$,
and put $W=\operatorname{span}_F\{\mathrm{ev}_u:u\in U\}$.
By definition, $z\in Z$ exactly when $\mathrm{ev}_z$ vanishes
on the common kernel of the functionals $\mathrm{ev}_u$, $u\in U$.
In finite dimension, the functionals vanishing on this common
kernel are precisely their linear combinations. Hence
\[
 z\in Z\quad\Longleftrightarrow\quad\mathrm{ev}_z\in W.
\]
Using the monomial basis of $\mathcal P_d(F)$, this says that
adding the evaluation row at $z$ to the evaluation matrix on $U$
does not increase its rank. These matrices have entries in $F$,
and their ranks are unchanged by field extension. Thus membership
in $Z$ is unchanged by field extension, proving (i).

For (ii), choose $T_i\subseteq S_i$ of size $d+1$ uniformly and
independently, and put $T=\prod_iT_i$.
Products of the univariate Lagrange polynomials on the $T_i$
form a basis dual to the evaluations at $T$.
Consequently, the evaluations at $T\cap Z$ are independent and
lie in $W$, so $|T\cap Z|\le\dim W\le|U|$.
Taking expectations gives
\[
|Z|\prod_{i\in J}\frac{d+1}{|S_i|}
 =\mathbb E|T\cap Z|\le |U|.
\]

For (iii), work over $\overline F$ and let
$V=\{f\in\mathcal P_d(\overline F):f|_U=0\}$.
By (i), each $z\in G\setminus Z$ gives a proper subspace
$\{f\in V:f(z)=0\}$ of $V$.
A vector space over an infinite field is not the union of
finitely many proper subspaces. Choose $E\in V$ outside their
union. It vanishes on $Z$ and nowhere else on $G$; since
$G\setminus Z\ne\varnothing$, it is nonzero.
\end{proof}

Next, we give a sufficient condition under which the zero set $Z_E = \{ x \in G : E(x) = 0 \}$ of a polynomial $E$ is inner-generated, which we prove in Appendix~\ref{sec:algebra}.

\begin{theorem}[Inner generation of polynomial zero sets]
\label{thm:inner-generation-main}
For every $r \ge 2$, there exists a constant $K_r \ge 2$
such that the following holds for every non-empty set $J \subseteq [r]$.
\begin{itemize}[itemsep=0pt]
    \item Let $F, \Sigma_i$ be chosen as in \eqref{eq:field-size-choice}, $S_i \subseteq \Sigma_i$ be any non-empty set, and $n_i = |S_i|$ for each $i \in J$.
    \item Let $C_i = \RS(S_i,n_i-h_i)$ be a code with redundancy $h_i$ defined over $\overline{F}$ for $i \in J$.
    \item Let $E \in \overline{F}[X_i : i\in J]$ be a nonzero polynomial of degree at most $d \in \mathbb N$ in each variable.
\end{itemize}
If $h_i \ge K_r d$ for every index $i\in J$, then every codeword of
$\mathcal{B}((C_j)_{j\in J})$ supported on the zero set $Z_E$
is a sum of line-codewords whose associated
coordinate lines are contained in $Z_E$.
\end{theorem}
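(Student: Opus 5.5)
The plan is to argue by induction on $|J|$, replacing the RS codes by the dual description. Since $\mathcal B=(\bigotimes_{j\in J}C_j^\perp)^\perp$ and $C_j^\perp$ is a generalized RS code on $S_j$ of dimension $h_j$, a word $x\in\overline F^{G}$ lies in $\mathcal B$ iff
\[
\sum_{z\in G}x_z\,w(z)\prod_{j\in J}g_j(z_j)=0
\]
for all univariate $g_j$ with $\deg g_j<h_j$. Here $w(z)=\prod_j w_j(z_j)$ is the product of the nonzero GRS column multipliers. Equivalently, $x\cdot w$ is orthogonal to every polynomial of degree $<h_j$ in each variable $X_j$.

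For $|J|=1$ the claim is immediate. If $E$ vanishes on all of $S_i$, the single line lies in $Z_E$. Otherwise $|Z_E|\le d<h_i+1$, which is at most the minimum distance of $C_i$, so $x=0$.

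For the inductive step I would split the argument into two parts.

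\emph{(a) Support reduction.} I would show that $\supp x\subseteq L$, where $L$ is the union of the coordinate lines contained in $Z_E$. Fix $z\in Z_E$ lying on no line inside $Z_E$. The aim is a test polynomial $P$ with three properties:
\begin{itemize}
\item its degree in each $X_j$ is less than $h_j$;
\item $P(z)\ne0$;
\item $P$ vanishes on $(Z_E\cap\supp x)\setminus\{z\}$, or at least on enough of $Z_E$ that pairing $x$ with $P$ isolates $x_z$.
\end{itemize}
Pairing would then give $x_z w(z)P(z)=0$.

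To build $P$ I would exploit the identity $X_j^{q_j}=X_j^{-1}$ on $\Sigma_j$. Raising $E$ to a suitable power of $2$ and rewriting $X_j^{q_j}$ as $X_j^{-1}$ (then clearing denominators by monomials) produces polynomials $E^{(j)}$ with the same zero set on $G$ and coordinate degrees $O(d)$. Their partial derivatives at $z$ can, however, be arranged to point in a prescribed direction. Because the degrees $q_j=2^{2(A+j)}$ are distinct powers of $2$, the Frobenius maps act differently in different coordinates, and this is what gives independent "gradients".

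Taking a determinant of an $|J|\times|J|$ matrix built from these twisted polynomials (a resultant/Jacobian-type construction) should produce $D$ of degree $O_r(d)$ per variable with $D(z)\ne0$, while vanishing on the common zeros that are not simple in the required sense. Multiplying by univariate Lagrange-type factors of degree $\le d$ removes the remaining points on the lines through $z$. Each such line meets $Z_E$ in at most $d$ points because it is not contained in $Z_E$. This step fixes $K_r$ as the constant governing the degree blow-up.

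\emph{(b) Gluing.} With $\supp x\subseteq L$, I would decompose direction by direction. Let $L_i$ be the union of the direction-$i$ lines in $Z_E$. For each fixed value $a$ of one coordinate $X_i$, the slice polynomial $E(X_i=a)$ has degree $\le d$ in the remaining variables. If it is nonzero, the induction hypothesis (for $J\setminus\{i\}$, same $K_r$) applies to the corresponding slice codeword. If it is identically zero, the whole slice lies in $Z_E$ and I use the tensor-code structure directly.

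To make slices into $\mathcal B$-codewords, I would first peel off direction-$i$ line codewords by interpolation. On each line $\ell\subseteq L_i$, choose $a_\ell\in C_\ell$ agreeing with $x$ at the points of $\ell$ outside the other $L_j$, $j\ne i$. This is possible when the number of constrained points is at most $\dim C_i$, or by a parity count using $h_i\ge K_rd$. Then subtract $\sum_\ell a_\ell$, and check that the remainder is still a codeword supported on lines in fewer directions.

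The main obstacle is step (a): producing the isolating polynomial with coordinate degrees $\le K_rd<h_j$. In particular, one must verify that the twisted determinant is nonzero at $z$ exactly when no coordinate line through $z$ lies in $Z_E$. This is where the norm-one structure of $\Sigma_j$ and the distinct $2$-power field sizes must be used essentially. I would first try the case $|J|=2$ explicitly, with $E$ and its twist $X^{a}Y^{b}E^{(2^{m})}$, to calibrate the construction before generalizing by determinants.
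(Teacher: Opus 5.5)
\textbf{Step (a) fails at singular points of $E$; this is the main gap.} Write $E=P((X_i^{2^{u_i}})_i)$ with maximal extraction, and put $b_i=(\partial_iP)((X_k^{2^{u_k}})_k)$. Take any twist of the kind you describe: raise $E$ to a power $2^e$, reduce exponents with $X_i^{q_i}=X_i^{-1}$, and clear monomial denominators. At a grid zero $z$ of $E$, its $i$th partial derivative is either $0$ or $b_i(z)^{2^e}$ times a nonzero factor. The norm-one trick can \emph{move} a nonzero derivative into a prescribed coordinate direction, which is exactly Lemma~\ref{lem:CRT}, but it cannot \emph{create} one.

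So at a point $z\in Z_E\setminus U_E$ with all $b_i(z)=0$, every determinant built from twists of $E$ vanishes, and your hoped-for equivalence is false. For example, take $m=2$, $E=(X_1-a)(X_2-b)+(X_1-a)^3+(X_2-b)^3$ and $z=(a,b)$. No grid line through $z$ lies in $Z_E$. Yet $u_1=u_2=0$, because the monomial $X_1X_2$ occurs, and in characteristic two $\partial_1E=(X_2-b)+(X_1-a)^2$ vanishes at $z$, as does $\partial_2E$. Your construction only handles points with all $b_i(z)\ne0$. At such points the telescoping determinant of the twists already vanishes on all of $Z_E\setminus\{z\}$, so the Lagrange factors are unnecessary.

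The missing idea is to stop isolating a prescribed point and let the polynomials adapt to the support rather than to $E$. The paper's argument runs as follows.
\begin{itemize}
\item Set $N=P_0M$, where $P_0$ has grid zero set exactly $U_E$, and suppose $N\ne0$.
\item At step $t$, take $F_t$ of minimal total degree, with coordinate degrees $\le d$, vanishing on $\supp(Q_tN)$. The induction hypothesis, through interpolation on $Z\setminus U$ sets in fewer variables, forces $F_t$ to involve every variable.
\item Twist $F_t$ in direction $j_{t+1}$. The polynomial $b_{t+1}$ has smaller total degree, so by minimality it cannot vanish on all of $\supp(Q_tN)$.
\item Multiply the \emph{codeword} by $b_{t+1}$. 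This passes to a nonempty locus where the new gradient is nonzero.
\item After $m$ steps, the polynomials $A_{t+1}=Q_tR_{t+1}$ vanish on $\supp N$ and have independent gradients at every point of $\supp(Q_mN)$. This contradicts Lemma~\ref{lem:rank}.
\end{itemize}

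\textbf{Step (b) is not yet an argument either.} On a line $\ell\subseteq L_i$, your rule imposes up to roughly $n_i$ conditions on the $(n_i-h_i)$-dimensional code $C_i$. It imposes none at the points of $\ell$ lying in some $L_j$, $j\ne i$. If $Z_E$ contains a coordinate plane, that is every point of the plane, and the choices made in different directions must still fit together. Showing they can is a genuine local-to-global extension problem. A direct parity count would need test polynomials vanishing on unions of cylinders whose bases admit no low-degree interpolation; for instance, the bases of all direction-$i$ lines may themselves contain whole lines.

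The paper handles this in two steps. First, it groups $U_E$ into maximal flats, whose bases $W_I=Z_{F_I}\setminus U_{F_I}$ admit interpolation with bound $K'_{m-1}d$ by induction. Second, it proves Theorem~\ref{thm:cylinders}, which gives $H^1(X,\mathcal I(t))=0$ on $X=(\PP^1)^m$ for $t_i\ge 2mR-1$, via Taylor resolutions, a Koszul complex and a regularity-by-approximation induction. Hence every functional killing the line codes inside $U_E$ extends to one killing $\cB_h$.
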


We now prove the product expansion of our Reed-Solomon construction (Theorem~\ref{thm:rs-product-expansion}).

\begin{proof}[Proof of Theorem~\ref{thm:rs-product-expansion}]
Fix a nonempty $J\subseteq[r]$, write
$G_J=\prod_{i\in J}\Sigma_i$ and $\cB_J=\mathcal{B}((C_j)_{j\in J}) = \sum_\ell C_\ell$, and
take a nonzero codeword $x\in\cB_J$. Notice that $n_i = q_i + 1 = |\Sigma_i|$. Put
$$U=\supp x,\qquad H=\min_{i\in J}h_i,\qquad
 d=\lfloor H/K_r\rfloor,\qquad Z=\operatorname{hull}_d(U).$$   By Lemma \ref{lem:inner-to-product}, it suffices to show that $Z$ is
inner-generated and $|Z|\le K|U|$, where
$K=(K_r4^{r-1}/\beta)^r$.

We first show that $|Z|\le K|U|$. Since $d\le H/K_r\le h_i/2\le n_i/2$ and $n_i\ge2$, we have $d+1\le n_i$. The formula for $n_i$ gives
$n_i/n_j\le4^{r-1}$ for all $i,j\in[r]$.
Thus $H\ge\beta\min_{j\in J}n_j$ and $d+1>H/K_r$ imply
\[
 \frac{n_i}{d+1}<\frac{K_r n_i}{H}
 \le\frac{K_r4^{r-1}}{\beta}\qquad(i\in J).
\]
Using $|J|\le r$, lemma \ref{lem:polynomial-hull}(ii) now gives
$|Z|\le K|U|$.

We next show that $Z$ is inner-generated over $\overline F$. If $Z=G_J$, it is inner-generated by definition.
This includes $d=0$, because $U\ne\varnothing$ and the only
constant polynomial vanishing on $U$ is zero.
Otherwise $d\ge1$, and Lemma \ref{lem:polynomial-hull} provides
a nonzero $E\in\mathcal P_d(\overline F)$ whose grid zero set is $Z$.
Since $H\ge K_rd$, Theorem \ref{thm:inner-generation-main} makes $Z$
inner-generated over $\overline F$.

Finally, we show that $Z$ is inner-generated over $F$. Consider the subspaces
\[
 L=\sum_{\ell\subseteq Z}C_\ell
 \subseteq
 V=\{y\in\cB_J:\supp y\subseteq Z\}
\]
over $F$. A matrix of generators defines $L$. Parity checks for
$\cB_J$, together with the conditions of vanishing outside $Z$,
define $V$ as a kernel. All these matrices are over $F$, so field
extension preserves their ranks and hence the dimensions of
$L$ and $V$. The equality over $\overline F$ therefore implies $L=V$
over $F$. Thus $Z$ is inner-generated over $F$.
By Lemma \ref{lem:inner-to-product}, the tuple indexed by $J$ is
$1/(|J|K)$-product expanding, and hence also
$1/(rK)=c_r\beta^r$-product expanding.
The same argument applies with $F$ replaced by any extension
field and to every word over that field, with the same constant.
\end{proof}
\subsection{From product expansion to local coboundary expansion}
\label{sec:product-to-local}

We now relate product expansion to the local coboundary expansion
condition in Definition~\ref{def:local-coboundary-expansion}.
The next proposition extends
\cite[Lemma~5.13]{DinurLinVidick2024}, which treats local codes of a
common length, to codes with different directional lengths.

\begin{proposition}[Product expansion implies robustness,
adapted from {\cite[Lemma~5.13]{DinurLinVidick2024}}]
\label{prop:product-robustness}
Fix $r\ge1$, $R\ge1$ and $\varrho>0$. Let $C_i\subseteq F^{S_i}$,
$i\in[r]$, be linear codes with positive lengths $n_i=|S_i|$.
Write $\underline n=\min_i n_i$ and $\overline n=\max_i n_i$, and assume
$\overline n/\underline n\le R$. Suppose every nonempty direction subtuple
is $\varrho$-product expanding. Then there is
$\kappa=\kappa(r,R,\varrho)>0$ such that, for every nonempty $J\subseteq[r]$,
the tensor complex
\[
 K_J=\bigotimes_{i\in J}[C_i\hookrightarrow F^{S_i}]
\]
satisfies, for $0\le j<|J|$ and $y\in K_J^j$,
\begin{equation}
 |\delta_J^j y|_b\ge
 \kappa\,\underline n\,\db(y,\im\delta_J^{j-1}).
 \label{eq:product-robustness}
\end{equation}
Here $\im\delta_J^{-1}=0$, and in degree $j$ a block is indexed by
$I\subseteq J$ with $|I|=j$ together with a point of
$\prod_{i\in I}S_i$, its value lying in
$\bigotimes_{i\in J\setminus I}C_i$.
\end{proposition}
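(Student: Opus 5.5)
We argue by induction on $|J|$, and within a fixed $J$ we treat the degrees $j$ in two regimes. The first regime is the top nontrivial degree $j=|J|-1$, where product expansion is used directly. The second is $j<|J|-1$, handled by a slicing and mapping-cone reduction to smaller direction sets. Throughout, local acyclicity of $K_J$ (each factor $[C_i\hookrightarrow F^{S_i}]$ is exact in degree zero, so the total complex is exact below the top degree, as noted after~\eqref{eq:tree-code-local-model}) lets us replace ``distance to $\im\delta_J^{j-1}$'' by ``distance to $\ker\delta_J^j$''.

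\emph{Top case $j=|J|-1$.} A block of $K_J^{|J|-1}$ is indexed by $I=J\setminus\{i\}$ and a point of $\prod_{k\ne i}S_k$, i.e.\ by a coordinate line $\ell$ in direction $i$, with value $a_\ell\in C_i=C_\ell$. The top term is $F^{\prod_{k\in J}S_k}$, whose blocks are single points, and $\delta_J y=\sum_\ell a_\ell=:x\in\mathcal B((C_k)_{k\in J})$. By $\varrho$-product expansion, $x=\sum_\ell a'_\ell$ with $\varrho\sum_{a'_\ell\ne0}|\ell|\le|x|$. Put $y'=(a'_\ell)_\ell$. Then $y-y'\in\ker\delta_J^{|J|-1}=\im\delta_J^{|J|-2}$, so
\[
 \db(y,\im\delta_J^{|J|-2})\le|y'|_b\le\frac{1}{\underline n}\sum_{a'_\ell\ne0}|\ell|\le\frac{|\delta_Jy|_b}{\varrho\,\underline n},
\]
which gives~\eqref{eq:product-robustness} with $\kappa=\varrho$. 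The case $|J|=1$ is exactly this one, with $j=0$, and it only needs $\varrho$-product expansion of the single code.

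\emph{Lower degrees.} For $j<|J|-1$, fix $i\in J$ and set $J'=J\setminus\{i\}$. We write $K_J=K_{J'}\otimes[C_i\hookrightarrow F^{S_i}]$ as a mapping cone: $y=(y_0,y_1)$ with $y_0\in K_{J'}^{j}\otimes C_i$ and $y_1\in K_{J'}^{j-1}\otimes F^{S_i}$. The latter splits into $|S_i|$ slices $y_1(s)\in K_{J'}^{j-1}$, $s\in S_i$, and each slice lives on its own disjoint set of blocks. The components of $\delta_Jy$ are $\delta_{J'}y_0$, which lies in $K_{J'}^{j+1}\otimes C_i$, and $\delta_{J'}y_1(s)+y_0(s)$ for each $s$.

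The plan, following DLV's proof of their Lemma~5.13, runs in four steps.
\begin{enumerate}
\item Use the induction hypothesis for $J'$ slice by slice to bound the defect of the slices. The slices where the second component of $\delta_Jy$ vanishes are cocycles up to a term coming from $y_0$.
\item Correct $y_1$ by a coboundary so that few slices remain nonzero.
\item Absorb the remaining $C_i$-valued part $y_0$. Here we use the inductive bound for the complex $K_{J'}\otimes C_i$, which has the same robustness as $K_{J'}$ because tensoring with a fixed code acts blockwise. In direction $i$ we use product expansion of the single code $C_i$ to control how many $s$-slices a $C_i$-codeword must touch.
\item Convert block counts between the different degrees using $\overline n/\underline n\le R$.
\end{enumerate}
Each step loses a factor depending only on $(r,R,\varrho)$, and the induction has depth at most $r$, so $\kappa(r,R,\varrho)>0$.

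The main obstacle is this lower-degree bookkeeping. A single block of $K_J^j$ indexed by $(I,p)$ with $i\notin I$ carries a $C_i$-tensor value and is spread over all $n_i$ slices. Blocks with $i\in I$, by contrast, sit in one slice. Comparing $|\cdot|_b$ across these two kinds of blocks is exactly where factors of $n_i/\underline n\le R$, and $\underline n$ versus $n_i$, enter. I would first try the simplest choice, taking $i$ to be a direction with $n_i=\underline n$. If the weights do not close up that way, I would instead average the slicing argument over all choices of $i\in J$, as DLV do, to balance the terms.
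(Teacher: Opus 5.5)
Your top-degree case is correct and self-contained. Identifying the blocks of $K_J^{|J|-1}$ with coordinate lines, applying product expansion to $x=\delta_J y$, and using exactness of $K_J$ below its top degree gives the bound with $\kappa=\varrho$, and unequal lengths cause no trouble there.

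\textbf{The gap: degrees $j<|J|-1$.} These carry the real content of DLV's Lemma~5.13. Your four steps are an outline, not an argument: you leave open how the block weights close up and even which direction to slice along. The one concrete ingredient you rely on, in Step~3, is unjustified. You claim that $K_{J'}\otimes C_i$ inherits the robustness of $K_{J'}$ ``because tensoring with a fixed code acts blockwise.'' But a block of $K_{J'}\otimes C_i$ is counted once if \emph{any} of its $\dim C_i$ components is nonzero. Suppose you correct the components separately, for instance the slices $y_0(s)$ for $s$ in an information set of $C_i$. Each correction is controlled by $|\delta_{J'}y_0|_b$, but the corrections can have different supports. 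The combined correction can therefore be up to $\dim C_i$ times larger. That factor is not bounded in terms of $(r,R,\varrho)$, and in the application it is of order $n_i$.

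What you need is a simultaneous, vector-valued form of robustness, and nothing in your outline supplies it. This is in fact the whole difficulty. Given it, the rest is routine: correct $y$ by $\delta(z_0,0)$ so that $y_0$ becomes small, then correct the slices of $y_1$ one at a time with the scalar bound for $K_{J'}$, paying a factor $n_i/\underline n\le R$. To keep the induction going, however, you would have to carry the vector-valued statement at every level. In the top degree that means a vector-valued form of product expansion, which is not among the hypotheses and which you would have to derive.

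\textbf{How the paper proceeds.} The paper does not re-prove DLV's lemma; it reduces the unequal-length case to it.
\begin{enumerate}
\item Pad each $C_i$ with zeros to the common length $N=\overline n$. Group the grid points by which coordinates lie in padding positions. A padded line-sum then splits into independent line-sums for original subtuples on disjoint subgrids.
\item Each padded line has length $N\le Rn_i$, so every padded subtuple is $(\varrho/R)$-product expanding.
\item DLV's Lemma~5.13, applied to these equal-length codes, gives coboundary expansion $\kappa_0N$ in all degrees.
\item Coordinate projection onto the original positions, acting as the identity on the code factors, is a cochain retraction of the padded complex onto $K_J$. It does not increase block weight.
\item So you can include $y$ into the padded complex, correct it there, and project back. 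Since $N\ge\underline n$, this gives the claimed bound.
\end{enumerate}
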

\begin{proof}
Put $N=\overline n$ and pad each $C_i$ with zeros to length $N$.
Slicing by the positions and values of new coordinates reduces
each padded line-sum to original subtuples. Thus all padded
subtuples are $(\varrho/R)$-product expanding, since $N\le Rn_i$.
Apply \cite[Lemma~5.13]{DinurLinVidick2024} to the padded codes,
all of length $N$, with product-expansion constant $\varrho/R$.
To match their notation, choose a basis of each padded code and use
its encoder as $h_i^{\mathsf T}$. This identifies their local complex
with the padded $K_J$ and preserves block weight.
Their bound applies to cochains of minimum block weight modulo
coboundaries, so applying it to such a representative of $y$ gives
the bound in distance form.
This gives coboundary expansion at least $\kappa_0N$, where
$\kappa_0>0$ depends only on $r$ and $\varrho/R$.
Projection onto the original coordinates, acting
identically on the code factors, is a cochain retraction that
does not increase block weight. Projecting padded coboundaries
therefore cannot increase their distance from an original cochain.
The same bound holds on $K_J$, and $N\ge\underline n$ proves
the claim with $\kappa=\kappa_0$.
\end{proof}

\begin{corollary}[Uniform local robustness]
\label{prop:local-expansion}\label{arb:product-expansion}
For the parameter choice~\eqref{eq:field-size-choice}, take $F$ containing
every $\F_{q_i^2}$ and fix $0<\beta\le1/2$.
If the degrees in
Proposition \ref{prop:rs-sheaf} satisfy
\[
 \beta\le\frac{d_i+1}{q_i+1}\le1-\beta\qquad(i\in[r]),
\]
then $\cF$ and $\cF^\perp$ have local coboundary expansion
$\kappa=\kappa(r,\beta)>0$, independent of $A$ and $s$.
\end{corollary}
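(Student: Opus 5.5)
The plan is to chain three earlier results: Proposition~\ref{prop:rs-sheaf} identifies the local complexes, Theorem~\ref{thm:rs-product-expansion} supplies uniform product expansion, and Proposition~\ref{prop:product-robustness} converts that into coboundary expansion.

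\textbf{Step 1: identify the local complexes.} For a face $\sigma$ of $\X_s$ of type $I$, choose a lift to $\mathcal T_\Sigma$. Proposition~\ref{prop:tree-code-descent} and \eqref{eq:tree-code-local-model} then give a block-weight-preserving isomorphism
\[
 K_\sigma(\cF)\cong\bigotimes_{i\notin I}[C_i(v_i)\hookrightarrow F^{E_i(v_i)}].
\]
By Lemma~\ref{lem:matching-characters} and Lemma~\ref{lem:directional-evaluations}, each $C_i(v_i)$ is monomially equivalent to $\RS(\Sigma_i,d_i+1)$. This code has redundancy $h_i=n_i-d_i-1=q_i-d_i$, where $n_i=q_i+1$.

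For $\cF^\perp$, the local code is $\RS(\Sigma_i,q_i-d_i)$, whose redundancy is $d_i+1$. The rate hypothesis $\beta\le (d_i+1)/n_i\le 1-\beta$ gives
\[
 h_i=n_i-(d_i+1)\ge\beta n_i \quad\text{and}\quad d_i+1\ge\beta n_i .
\]
So both the primal and dual tuples satisfy the redundancy hypothesis $\beta n_i\le h_i\le n_i$ of Theorem~\ref{thm:rs-product-expansion}.

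Monomial equivalence acts independently in each direction. The block weights in the tensor complex depend only on which blocks vanish, and the coboundary maps are conjugated blockwise by invertible diagonal-times-permutation maps. Hence coboundary expansion is invariant under this change of coordinates, and we may replace each $C_i(v_i)$ by the ordinary RS code.

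\textbf{Step 2: uniform product expansion.} Theorem~\ref{thm:rs-product-expansion} says every nonempty subtuple of either family is $\varrho$-product expanding, with $\varrho=c_r\beta^r$ independent of $A$ and $F$. The lengths $n_i=q_i+1$ satisfy $\overline n/\underline n\le 4^{r-1}$, so we take $R=4^{r-1}$.

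Proposition~\ref{prop:product-robustness}, applied to the tuple $(C_i)_{i\notin I}$, then gives for every $0\le j<r-|I|$:
\[
 |\delta^j_\sigma y|_b\ge \kappa_0\,\underline n_I\,\db(y,\im\delta_\sigma^{j-1}),
 \qquad \kappa_0=\kappa_0(r,4^{r-1},c_r\beta^r).
\]
Here $\underline n_I$ is the minimum length over the directions not in $I$. We need this to hold for all $J=[r]\setminus I$ nonempty, which covers every face of dimension less than $r$.

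\textbf{Step 3: normalization.} Definition~\ref{def:local-coboundary-expansion} uses the global lower bound $\underline q\le\min_i(q_i+1)$, and $\underline n_I\ge\min_i(q_i+1)$. We may therefore take $\underline q=\min_i(q_i+1)$ and set $\kappa=\kappa_0$, which depends only on $r$ and $\beta$. Neither $A$ nor $s$ enters, because the local complexes are those of the tree sheaves, independent of the quotient.

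\textbf{Main obstacle.} I expect the delicate point to be the bookkeeping in Step 1. The descended complex must match the tensor complex with the correct block structure, and the monomial rescalings, which differ between edge bases for $\cF$ and $\cF^\perp$, must not affect block weights. Both points hold because all identifications are blockwise isomorphisms. The substantive content lies in the already-established Theorem~\ref{thm:rs-product-expansion}.
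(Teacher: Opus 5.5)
Your proposal is correct and follows the paper's proof essentially step for step. You identify the descended local complexes with the tensor complexes \eqref{eq:tree-code-local-model}, check that both $\RS(\Sigma_i,d_i+1)$ and $\RS(\Sigma_i,q_i-d_i)$ have redundancy at least $\beta n_i$, take $\varrho=c_r\beta^r$ from Theorem~\ref{thm:rs-product-expansion}, and convert it via Proposition~\ref{prop:product-robustness} with $R=4^{r-1}$. The only cosmetic difference is that you prove monomial invariance directly on the tensor complex, whereas the paper applies it to product expansion before invoking Theorem~\ref{thm:rs-product-expansion}; both are valid.
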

\begin{proof}
By Proposition~\ref{prop:rs-sheaf}, the local codes are monomially
equivalent to $\RS(\Sigma_i,d_i+1)$, and their duals to
$\RS(\Sigma_i,q_i-d_i)$.
Both have redundancy at least $\beta n_i$ by the hypothesis.
Product expansion is preserved by monomial equivalence, as noted
after Definition \ref{def:product-expansion}. Hence
Theorem \ref{thm:rs-product-expansion} gives the common constant
$\varrho=c_r\beta^r$ for all nonempty direction subtuples of both tuples.
Since $\max_i n_i/\min_i n_i\le4^{r-1}$,
Proposition~\ref{prop:product-robustness}, applied to the local tensor
complexes~\eqref{eq:tree-code-local-model}, gives the claimed local
coboundary expansion with a constant depending only on $r$ and $\beta$.
\end{proof}

\section{Rate and parameter selection}
\label{sec:realization}
The local dimensions determine the degree of positive rate.

\subsection{A lower bound on the code rate}
\label{sec:rate}
Write $C=C^\bullet(\X;\cF)$ and
$P=C^\bullet(\X;\cF^\perp)$. By Lemma~\ref{lem:cell-counts},
the number of top cells is $F_r=|G|\prod_i(q_i+1)$.

\begin{proposition}[A dimension pattern for positive rate]\label{prop:rate}
Let $\cF$ be a sheaf from Proposition~\ref{prop:rs-sheaf}.
Fix $r$, $0\le k\le r$ and integers $1\le m_i\le q_i$.
At both endpoint colors, give $k$
directions local-code dimension $m_i$ and the others dimension
$q_i+1-m_i$. Put $u_i=2m_i/(q_i+1)$. If
\[
 0<u_i\le\theta_r:=\frac1{2(r+3^r)}\qquad(i\in[r]),
\]
then $\dim_F H^k(C)\ge F_r/2$ and
\[
 \frac{\dim H^k(C)}{\dim C^k}
 \ge\frac1{2\binom rk2^{r-k}}.
\]
\end{proposition}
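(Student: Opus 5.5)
The plan is an Euler-characteristic argument. The alternating sum of the dimensions $\dim_F C^j$ is a lower bound for $\dim H^k(C)$, and for tensor-product local dimensions this sum factors as a product over directions.

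\emph{Step 1 (Euler-type lower bound).} Since $\im\delta^{j-1}\subseteq\ker\delta^j$, we have $\rank\delta^{j-1}\le\dim C^j-\rank\delta^j$ for every $j$. Iterating upward from $j=k+1$ gives
\[
\rank\delta^k\le\sum_{j>k}(-1)^{j-k-1}\dim C^j ,
\]
and iterating downward gives
\[
\rank\delta^{k-1}\le\sum_{j<k}(-1)^{k-1-j}\dim C^j .
\]
The chain stops because $C^j=0$ outside $0\le j\le r$. Substituting both into $\dim H^k=\dim C^k-\rank\delta^k-\rank\delta^{k-1}$ yields
\[
\dim H^k(C)\ge\sum_{j=0}^r(-1)^{j-k}\dim C^j .
\]

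\emph{Step 2 (computing the sum).} By Proposition~\ref{prop:rs-sheaf} and Lemma~\ref{lem:cubical-incidence}, a cell of type $J$ has $\dim\cF(\sigma)=\prod_{i\notin J}\dim C_i$. By Lemma~\ref{lem:cell-counts}, there are $|G|2^{r-|J|}\prod_{i\in J}(q_i+1)$ cells of type $J$. Write $w_i=\dim C_i/(q_i+1)$. Then
\[
\dim C^j=F_r\sum_{|J|=j}\ \prod_{i\notin J}2w_i .
\]
Summing with signs $(-1)^{|J|-k}$ over all $J\subseteq[r]$ gives
\[
(-1)^kF_r\prod_{i=1}^r(2w_i-1).
\]
Let $S$ be the set of $k$ small directions. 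For $i\in S$ we have $2w_i=u_i$, and for $i\notin S$ we have $2w_i=2-u_i$. Hence the product equals $(-1)^k\prod_i(1-u_i)$, and the signs cancel. Therefore
\[
\dim H^k(C)\ge F_r\prod_i(1-u_i)\ge F_r(1-r\theta_r)\ge F_r/2 .
\]
Here $r\theta_r<1/2$ by the choice of $\theta_r$.

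\emph{Step 3 (rate).} Each $w_i\le1$, so every type $J$ with $|J|=k$ contributes at most $2^{r-k}F_r$ to $\dim C^k$. Summing over the $\binom rk$ types gives $\dim C^k\le\binom rk2^{r-k}F_r$. Dividing this into the bound from Step 2 gives the stated ratio. The $3^r$ in $\theta_r$ leaves slack, for instance if one prefers to bound the truncated sums term by term.

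\emph{Main obstacle.} The delicate point is Step 1. The naive bound $\dim C^k-\dim C^{k-1}-\dim C^{k+1}$ fails: for example, the $k+1$-types $J\supseteq S$ contribute roughly $(r-k)2^{r-k-1}F_r$, which can exceed the main term. So one really needs the full alternating sum and its exact factorization. I would also check that the prescribed dimensions are consistent with the sheaf. The dimensions $m_i$ and $q_i+1-m_i$ have to be realized by the degrees of Proposition~\ref{prop:rs-sheaf}, and the dimension formula there must be used with exactly these values. Taking ``both endpoint colors'' ensures that every cell of a given type has the same local dimension, which is what the counting in Step 2 requires.
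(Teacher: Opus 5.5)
Your Step~1 is false, and this is a genuine gap. The inequality $\rank\delta^{j-1}\le\dim C^j-\rank\delta^j$ is an \emph{upper} bound on a rank, and $\rank\delta^j$ enters it with a minus sign. Iterating it would require a \emph{lower} bound on $\rank\delta^j$, which you do not have. Tracking the defect exactly gives
\[
 \rank\delta^k=\sum_{j>k}(-1)^{j-k-1}\dim C^j+\sum_{j>k}(-1)^{j-k}\dim H^j(C),
\]
and similarly below $k$. Hence
\[
 \dim H^k(C)=\sum_{j}(-1)^{j-k}\dim C^j-\sum_{j\ne k}(-1)^{j-k}\dim H^j(C).
\]
The alternating sum is therefore a lower bound only when the cohomology in degrees $j\equiv k\pmod 2$, $j\ne k$, is dominated by the cohomology of the other parity. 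Nothing in your argument guarantees this. For instance, with $C^0=C^1=0$, $C^2=F$ and $k=0$, your inequality would give $\dim H^0\ge1$. So the bound $\dim H^k(C)\ge F_r\prod_i(1-u_i)$ is not established.

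The missing ingredient is an upper bound on $\dim H^j(C)$ for $j\equiv k\pmod 2$, $j\ne k$. The paper uses $\dim H^k(C)\ge F_r\prod_i(1-u_i)-\sum_{j<k,\,j\equiv k}\dim H^j(C)-\sum_{j>k,\,j\equiv k}\dim H^j(C)$ and bounds the two error sums separately.

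For $j<k$ the trivial bound $\dim H^j\le\dim C^j$ works. A type $J$ with $|J|=j<k$ has a complement meeting the small set $S$, so $\dim C^j/F_r\le\theta\binom rj2^{r-j-1}$ with $\theta=\max_iu_i$.

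For $j>k$ the trivial bound is useless: as you noticed yourself, the types $J\supseteq S$ contribute order-one amounts. The paper instead uses sheaf duality (Proposition~\ref{prop:sheaf-duality}). It applies because both $\cF$ and $\cF^\perp$ are locally acyclic by Proposition~\ref{prop:rs-sheaf}, and it gives $\dim H^j(C)=\dim H^{r-j}(P)\le\dim P^{r-j}$ with $P=C^\bullet(\X;\cF^\perp)$. In $P$ the small and large directions are swapped, so every cell type of dimension $r-j<r-k$ again carries a factor $u_i\le\theta$.

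Each error sum is then at most $\theta3^r/2$. Together with $\prod_i(1-u_i)\ge1-r\theta$, this yields $\dim H^k(C)/F_r\ge1-(r+3^r)\theta\ge1/2$. So the $3^r$ in $\theta_r$ is not slack: it is exactly what pays for these cohomology error terms. Your factorization of the alternating sum in Step~2 and your bound on $\dim C^k$ in Step~3 are correct and match the paper.
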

\begin{proof}
Number the low-dimension directions first. Cell counting gives
\begin{align*}
 p(z)&=\sum_j\frac{\dim C^j}{F_r}z^j
       =\prod_{i\le k}(u_i+z)\prod_{i>k}(2-u_i+z),\\
 p^\perp(z)&=\sum_j\frac{\dim P^j}{F_r}z^j
       =\prod_{i\le k}(2-u_i+z)\prod_{i>k}(u_i+z).
\end{align*}
Write $[z^j]$ for coefficient extraction. Duality and Euler
characteristic imply
\[
 \frac{\dim H^k(C)}{F_r}\ge\prod_i(1-u_i)
 -\sum_{\substack{j<k\\j\equiv k\pmod2}}[z^j]p(z)
 -\sum_{\substack{j>k\\j\equiv k\pmod2}}[z^{r-j}]p^\perp(z).
\]
Every coefficient below degree $k$ in $p$ has a low factor.
Set $\theta=\max_i u_i\le\theta_r$. Each such coefficient is at most
$\theta\binom rj2^{r-j-1}$. The same bound applies below degree
$r-k$ in $p^\perp$. Each of the two sums is thus at most
$\theta3^r/2$, while $\prod_i(1-u_i)\ge1-r\theta$.
The displayed lower bound is at least $1-(r+3^r)\theta$.
Finally, $\dim C^k/F_r\le\binom rk2^{r-k}$.
\end{proof}

\subsection{Realization in a prescribed degree}
\label{sec:prescribed-degree}

Uniform local expansion lets us choose the geometric degree after
fixing the local dimension ratios.

\begin{theorem}[Good codes in a prescribed degree]
\label{thm:prescribed-degree}
Fix $r\ge4$ and $2\le k\le r-2$. There is a polynomial-time computable
family of connected regular colored $r$-dimensional cubical complexes
$\X$ with locally acyclic sheaves $\cF$ over a
fixed finite characteristic-two field $F$ such that
\[
 0\longrightarrow C^0\longrightarrow\cdots
 \longrightarrow C^r\longrightarrow0,
 \qquad C^\bullet=C^\bullet(\X;\cF),
\]
has $r+1$ nonzero graded terms of dimension $\Theta(|\X(0)|)$
and differentials of bounded row and column weights, with
$\cF(\omega)=F$ for every top cell $\omega$.
The degree-$k$ CSS code has parameters $[[N,K,D,\rho]]$ with
\begin{equation}
 N=\Theta(|\X(0)|),\qquad
 \frac KN=\frac{\dim_F H^k(C)}{\dim_F C^k}
 \ge\frac1{2\binom rk2^{r-k}},\qquad
 D=\Omega(N),\qquad \rho=\Omega(1),
 \label{eq:prescribed-rate}
\end{equation}
where $D$ bounds both the cosystolic and the systolic distance and
$\rho$ bounds the soundness on both sides. The implied constants may
depend on $r,k$ and the fixed local parameters, but not on $s$. These
conclusions persist after binary expansion.
\end{theorem}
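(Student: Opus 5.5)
The plan is to assemble the earlier results in a fixed order: local parameters first, geometry second, quotient size last.

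\emph{Local parameters.} Set $\theta=\theta_r$ as in Proposition~\ref{prop:rate} and take $\beta=\theta/4$, so in particular $\beta\le1/2$. For a large integer $A$, define $q_i=2^{2(A+i)}$ by~\eqref{eq:field-size-choice}, and let $Q_i=\sqrt{q_i}=2^{A+i}$. In the $k$ low directions I want local dimension $m_i=d_i+1$, where $d_i$ is a multiple of $Q_i-1$ with $u_i=2m_i/(q_i+1)$ lying in $[\theta/2,\theta]$. This is possible once $A$ is large, because consecutive multiples of $Q_i-1$ differ by about $Q_i\ll\theta q_i$. In the remaining directions I take degree $q_i-1-d_i'$, again with $d_i'$ in that range. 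The complementary degree is automatically admissible: $(Q-1)\mid q-1$, so $(Q-1)\mid d$ implies $(Q-1)\mid(q-1-d)$.

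The colors need some care. The sheaf is built by a single degree per direction, identical at both endpoint colors. Proposition~\ref{prop:rate} asks for dimension $m_i$ at both colors, and that is exactly what this construction gives. With these choices, every local rate $(d_i+1)/(q_i+1)$ lies in $[\beta,1-\beta]$. Hence Corollary~\ref{prop:local-expansion} gives local coboundary expansion $\kappa=\kappa(r,\beta)$ for both $\cF$ and $\cF^\perp$, uniformly in $A$ and $s$. Proposition~\ref{prop:rate} gives the rate bound for every $s$.

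\emph{Geometry.} Let $R=4^{r-1}$, and take $\lambda_0(r,R,\kappa)$ from Corollary~\ref{thm:sheaf-criterion}. Enlarge $A$ until $\max_i2\sqrt{q_i}/(q_i+1)\le\lambda_0$. The only constraint is that $P_i$ has degree $a_i=2(A+i)$, so I need monic irreducibles of these even degrees in $\F_2[y]$; such polynomials exist for every degree and are distinct because their degrees differ. Fix $F=\F_{2^{M}}$ containing every $\F_{q_i^2}$. At this point all local data are fixed, and the family is $\X_s$ for $s\to\infty$. Proposition~\ref{prop:cubical-input} then gives connectivity, the spectral bound, the cell counts $\Theta(|\X_s(0)|)$, and polynomial-time computability. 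Proposition~\ref{prop:rs-sheaf} gives locally acyclic Tanner sheaves $\cF$ and $\cF^\perp$. Every cell dimension $\dim\cF(\sigma)=\prod_{j\notin I}(d_j+1)$ is positive, so all $r+1$ terms are nonzero, and the row and column weights are bounded by the fixed local degrees.

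\emph{Conclusion.} Corollary~\ref{thm:sheaf-criterion} now yields linear systolic and cosystolic distances and constant cycle and cocycle expansion in degree $k$. Lemma~\ref{lem:css-parameters} converts these into the CSS bounds. The soundness formula~\eqref{eq:css-parameter-soundness} stays $\Omega(1)$ because $\dim C^{k\pm1}=\Theta(\dim C^k)$ and $\max\dim V_\sigma=O(1)$. The step I expect to need the most care is binary expansion. Each $F$-coordinate becomes $M=[F:\F_2]$ bits, which is a constant. A block is nonzero exactly when some bit in it is nonzero, so block weights and bit weights differ by a factor of at most $\dim_{\F_2}V_\sigma=O(1)$. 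Lemma~\ref{lem:css-parameters} is already stated over $\F_2$ with block decompositions, so I would apply it to the $\F_2$-restriction of the complex. The homology over $\F_2$ has $M$ times the $F$-dimension, and $N$ scales by the same factor, so the ratio $K/N$ in~\eqref{eq:prescribed-rate} is unchanged. The row and column weights grow by at most a factor of $M$.
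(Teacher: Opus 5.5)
Your plan follows the paper's proof step for step. You choose local rates in $[\beta,1-\beta]$ compatible with Proposition~\ref{prop:rate}; you find admissible multiples of $Q_i-1$ by a spacing argument, where the paper writes down $m_i=\vartheta Q_i(Q_i-1)+1$ explicitly. You take $\kappa(r,\beta)$, independent of $A$, from Corollary~\ref{prop:local-expansion}, then $\lambda_0(r,4^{r-1},\kappa)$, then enlarge $A$, and finally let only $s$ grow. All of this is sound, including the admissibility of the complementary degrees and the equal dimensions at both endpoint colors.

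The one step not justified as written is the passage to $\F_2$, which you flagged as the delicate one. Corollary~\ref{thm:sheaf-criterion} controls $\mu_{\rm syst}(k)$ and $\eps_{\rm cyc}(k)$ for the $F$-linear chain complex $\Hom_F(C^\bullet,F)$, through sheaf duality. Lemma~\ref{lem:css-parameters}, applied to the restriction of scalars, instead uses the $\F_2$-linear dual $\Hom_{\F_2}(C^\bullet,\F_2)$. Your observation that a block is nonzero exactly when one of its bits is nonzero only covers the cochain side, i.e.\ $\mu_{\rm cosyst}$ and $\eps_{\rm cocyc}$. Your comparison of block weight with bit weight is not needed, since Lemma~\ref{lem:css-parameters} already absorbs it through the factor $\max_\sigma\dim V_\sigma$. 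What is missing is the following fact. The map $\phi\mapsto\operatorname{Tr}_{F/\F_2}\circ\phi$ is an isomorphism $\Hom_F(V_\sigma,F)\to\Hom_{\F_2}(V_\sigma,\F_2)$ on each block, injective by nondegeneracy of the trace form and then bijective by comparing dimensions. It also commutes with the dual differentials, since $\operatorname{Tr}\circ\psi\circ\delta=\operatorname{Tr}\circ(\delta^*\psi)$. Hence the binary chain complex is, block by block, the restriction of scalars of the $F$-linear one, and all four block parameters are unchanged. The paper's proof uses exactly this trace-pairing step; once you add it, your argument is complete.
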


\begin{proof}
Set $\theta_r=1/[2(r+3^r)]$ and choose a dyadic
$0<\vartheta\le\theta_r/4$. Use the field sizes
\eqref{eq:field-size-choice}, so $Q_i=2^{A+i}$ and $q_i=Q_i^2$.
For $A$ large enough that all $\vartheta Q_i$ are integers, choose
\begin{equation}
 m_i=\vartheta Q_i(Q_i-1)+1,\qquad
 d_i+1=\begin{cases}
 m_i,&i\le k,\\
 q_i+1-m_i,&i>k.
 \end{cases}
 \label{eq:local-dimensions}
\end{equation}
Both choices satisfy~\eqref{eq:admissible-degrees}. As $A$ grows,
$m_i/(q_i+1)=\vartheta+O_r(2^{-A})$; hence, for all sufficiently
large $A$, the dimension fractions lie between $\vartheta/2$
and $1-\vartheta/2$, and $2m_i/(q_i+1)\le\theta_r$.
Corollary~\ref{prop:local-expansion} supplies local coboundary
expansion $\kappa=\kappa(r,\vartheta/2)>0$ for both sheaves,
independently of $A$.

Corollary~\ref{thm:sheaf-criterion} gives
$\lambda_0(r,4^{r-1},\kappa)>0$. Now choose $A$ large enough for the
preceding dimension bounds and $2^{-A}\le\lambda_0$.
Proposition~\ref{prop:cubical-input} supplies the required geometry.
Fix the coefficient field and polynomial evaluations from
Proposition~\ref{prop:rs-sheaf}. Only the congruence-prime degree
$s=\deg\ell_s$ grows thereafter.
The size estimate~\eqref{eq:quotient-size} gives $|\X(0)|\to\infty$.

The geometric and local hypotheses of the criterion hold for
the sheaf and its dual. It gives both linear distances and
both cycle and cocycle expansion bounds, since $k$ and $r-k$ are in its
range. Proposition~\ref{prop:rate} gives
$\dim H^k(C)\ge |G|\prod_i(q_i+1)/2$ and
\eqref{eq:prescribed-rate}.

Cell counts give nonzero graded terms of size $\Theta(|\X(0)|)$.
Bounded incidences and the fixed field, local dimensions and restriction matrices
give bounded binary row and column weights. Explicit group operations
and polynomial evaluations give polynomial-time construction.

Apply restriction of scalars to $C=C^\bullet(\X;\cF)$.
The trace pairing identifies its binary dual complex with the
restriction of scalars of its $F$-linear dual complex and preserves
block support, so all four block parameters are unchanged.
Restriction of scalars preserves the rate and multiplies
dimensions by the fixed factor $[F:\F_2]$.
The dimensions of the binary spaces assigned to cells are bounded, and adjacent cochain dimensions
are comparable. Lemma~\ref{lem:css-parameters} gives the parameters
$[[N,K,D,\rho]]$ with
\[
 N=\dim_{\F_2}C^k,\qquad
 \frac KN=\frac{\dim_F H^k(C)}{\dim_F C^k},\qquad
 D=\Omega(N),\qquad \rho=\Omega(1),
\]
the last bound by~\eqref{eq:css-parameter-soundness}.
\end{proof}

\begin{proof}[Proof of Theorem~\ref{thm:main-original}]
Apply Theorem~\ref{thm:prescribed-degree} with $r=4$, $k=2$ and
the binary CSS construction in its proof.
Equation~\eqref{eq:quotient-size} gives
$N=\Theta(|\X(0)|)=2^{\Theta(s)}$.
\end{proof}

\begin{remark}[Finite direct sums]
\label{rem:all-degrees}
Realize the finitely many dimension patterns at common parameters
and take the direct sum of their cochain complexes.
Each summand has linear distances and
constant soundness in every internal degree. The summand indexed
by $k$ supplies positive rate in degree $k$. Cohomology commutes
with direct sums, while coordinate weights and distances to kernels
add. The resulting complex therefore gives good codes in all internal
degrees, with only constant losses and a block $F^{r-3}$ on each top cell.
\end{remark}

\appendix

\section{Arithmetic quotients and spectral bounds}
\label{sec:arithmetic}
We verify the arithmetic and spectral assertions of Proposition~\ref{prop:cubical-input}.
We keep the notation of Section~\ref{sec:finite-presentation} and
use strong approximation and the Ramanujan theorem as arithmetic inputs.

\subsection{Edge transitivity on a Bruhat-Tits tree}
\label{sec:tree-edge-transitivity}

Use the lattice model and coloring of
Section~\ref{sec:finite-presentation}, and write
$k_\nu=(k_0)_\nu$ and $\mathcal O=\mathcal O_\nu$.

\begin{lemma}
\label{lem:tree-edge-transitivity}
The group $\SL_2(k_\nu)$ preserves vertex colors and acts transitively
on the edges of $\mathcal T_\nu$ oriented from color $0$ to color $1$,
and hence on each vertex color class.
\end{lemma}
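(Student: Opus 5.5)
The argument has three parts: color preservation, transitivity on the color-$0$ vertices, and transitivity on the oriented edges out of the base vertex $o_\nu$. The key tools are the elementary-divisor (Cartan) decomposition and the reduction map from the stabilizer of $o_\nu$ to $\SL_2(\F_q)$.

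\emph{Color preservation.} For $g\in\SL_2(k_\nu)$ we have $\nu(\det g)=0$. By the remark in Section~\ref{sec:finite-presentation}, a linear automorphism changes color by $\nu(\det g)\bmod 2$, so every $g\in\SL_2(k_\nu)$ preserves colors.

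\emph{Transitivity on color $0$.} Let $[\mathcal L]$ have color $0$. By the elementary divisor theorem over the PID $\mathcal O$, there is a basis matrix $h\in\mathrm{GL}_2(k_\nu)$ with $h\mathcal O^2=\mathcal L$. Since $\nu(\det h)$ is even, rescale $h$ by a power of $\pi$ to reach $\nu(\det h)=0$. Then $\det h=u\in\mathcal O^\times$, and $h'=h\,\diag(u^{-1},1)$ lies in $\SL_2(k_\nu)$. Because $\diag(u^{-1},1)$ preserves $\mathcal O^2$, we get $h'o_\nu=[\mathcal L]$. So $\SL_2(k_\nu)$ is transitive on color-$0$ vertices.

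\emph{Transitivity on edges at $o_\nu$.} It now suffices to show that the stabilizer $\SL_2(\mathcal O)$ of $o_\nu$ is transitive on the $q+1$ edges at $o_\nu$. These edges correspond to lines in $\mathcal O^2/\pi\mathcal O^2\cong\F_q^2$, and the action factors through reduction $\SL_2(\mathcal O)\to\SL_2(\F_q)$. Two facts finish this step.
\begin{itemize}
\item $\SL_2(\F_q)$ is transitive on $\PP^1(\F_q)$: any nonzero vector extends to a basis, and scaling the second basis vector makes the determinant $1$.
\item The reduction map is surjective. This holds because $\SL_2(\F_q)$ is generated by elementary matrices, each of which lifts to an elementary matrix over $\mathcal O$.
\end{itemize}

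\emph{Combining.} Take an edge oriented from a color-$0$ vertex $v$ to a color-$1$ vertex. By the second step, some element of $\SL_2(k_\nu)$ moves $v$ to $o_\nu$, and by color preservation the image edge still goes from color $0$ to color $1$. By the third step, a further element of $\SL_2(\mathcal O)$ moves this edge to $e=\{o_\nu,o_\nu'\}$. This gives transitivity on oriented edges.

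For color $1$, every color-$1$ vertex is the endpoint of such an edge, so edge transitivity moves it to $o_\nu'$. This gives transitivity on the color-$1$ class.

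The only step needing real care is normalizing the determinant to lie in $\SL_2$ in the color-$0$ step, which the rescaling and the factor $\diag(u^{-1},1)$ handle. The remaining steps are standard.
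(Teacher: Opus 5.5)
Your proof is correct, but it is organized differently from the paper's. The paper treats an arbitrary edge $([\mathcal L],[\mathcal M])$ in a single construction:
\begin{itemize}
\item it lifts a generator of the line $\mathcal M/\pi\mathcal L$ to $b_1\in\mathcal M$ and a complementary vector to $b_2\in\mathcal L$;
\item it observes that $(b_1,b_2)$ is an $\mathcal O$-basis of $\mathcal L$ with $\mathcal M=\mathcal O b_1\oplus\mathcal O\pi b_2$;
\item it normalizes the determinant by a power of $\pi$ and a unit on $b_2$.
\end{itemize}
The resulting basis matrix lies in $\SL_2(k_\nu)$ and sends $(o_\nu,o_\nu')$ directly to the given edge. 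Vertex transitivity is then a corollary.

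You instead use the standard factorization into a vertex-transitive group whose stabilizer is transitive on the link. You first move the color-$0$ endpoint to $o_\nu$; your determinant normalization there is the same trick as the paper's. You then use that $\SL_2(\mathcal O)\to\SL_2(\F_q)$ is surjective and that $\SL_2(\F_q)$ is transitive on $\PP^1(\F_q)$.

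Your route is modular. It makes explicit that the action on the link of $o_\nu$ is the projective action of the residue group, which is how the stabilizer action is used later when the local codes are identified with projective Reed--Solomon codes. The cost is an appeal to the generation of $\SL_2(\F_q)$ by elementary matrices, which the paper's adapted-basis construction avoids. In fact you only need one element of $\SL_2(\mathcal O)$ sending $\langle\bar e_1\rangle$ to a prescribed line. Building it by lifting a vector and a complement and fixing the determinant by a unit is exactly the paper's argument specialized to $\mathcal L=\mathcal O^2$.

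A minor point: the elementary divisor theorem is more than you need in the color-$0$ step, since you only use that $\mathcal L$ is free of rank two.
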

\begin{proof}
Color preservation follows from the determinant formula in
Section~\ref{sec:finite-presentation}. Given an edge with color-$0$
endpoint $[\mathcal L]$, let $\mathcal M$ be the representative of the
other endpoint with $\pi\mathcal L\subsetneq\mathcal M\subsetneq\mathcal L$.
Lift a nonzero vector of the line $\mathcal M/\pi\mathcal L$ to
$b_1\in\mathcal M$, and a complementary one to $b_2\in\mathcal L$.
Their reductions span $\mathcal L/\pi\mathcal L$, so $(b_1,b_2)$ is an
$\mathcal O$-basis of $\mathcal L$. Moreover $\mathcal M$, being
the preimage of $\langle\bar b_1\rangle$, is $\mathcal Ob_1+\pi\mathcal L$:
\[
 \mathcal L=\mathcal O b_1\oplus\mathcal O b_2,\qquad
 \mathcal M=\mathcal O b_1\oplus\mathcal O\pi b_2.
\]
Since $\nu(\det(b_1,b_2))$ is even, a common scaling by a power
of $\pi$, followed by a unit rescaling of $b_2$, makes the
determinant $1$ without changing either lattice class.
The resulting basis matrix lies in $\SL_2(k_\nu)$ and sends the
standard edge $(o_\nu,o_\nu')$ to $([\mathcal L],[\mathcal M])$.
Transitivity on each color class follows since every vertex
is incident to an edge.
\end{proof}

\subsection{Useful arithmetic tools}
\label{sec:arithmetic-inputs}
We use the following arithmetic inputs for
Section~\ref{sec:finite-presentation}. Write
$H(K)=\{a\in A(K):\Nrd(a)=1\}$ for extensions $K/k_0$.

\begin{lemma}[Strong approximation,
{\cite[Theorem~3.1]{LSV2005}}]
\label{lem:arithmetic-approximation}
Fix $j\in[r]$. At finitely many places $\nu\ne\nu_j$, prescribe
nonempty open subsets $\Omega_\nu\subseteq H((k_0)_\nu)$.
There exists $a\in H(k_0)$ lying in every prescribed $\Omega_\nu$
and integral at all remaining places other than $\nu_j$.
Here integrality is taken in a fixed integral matrix model of $H$;
it means that the matrices of $a$ and $a^{-1}$ have entries in
$\mathcal O_\nu$.
\end{lemma}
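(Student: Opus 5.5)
The plan is to derive the lemma from the strong approximation theorem for simply connected groups, applied to $H=\mathrm{SL}_1(A)$, the norm-one group of the quaternion algebra $A(k_0)$. This is the form in which~\cite[Theorem~3.1]{LSV2005} is stated. The statement is: if $H$ is a simply connected, absolutely almost simple algebraic group over a global field $k_0$, and $S$ is a nonempty finite set of places with $\prod_{\nu\in S}H((k_0)_\nu)$ noncompact, then $H(k_0)$ is dense in the restricted adelic product $H(\mathbb A^{S})=\prod'_{\nu\notin S}H((k_0)_\nu)$. I would take $S=\{\nu_j\}$.

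\textbf{Step 1: check the hypotheses.} First, $H$ is an inner $k_0$-form of $\SL_2$, hence absolutely almost simple and simply connected: over a separable closure $A$ splits and $\Nrd$ becomes $\det$. Second, $H((k_0)_{\nu_j})$ is noncompact. The local splitting $\iota_{\nu_j}$ of Section~\ref{sec:finite-presentation} identifies $A((k_0)_{\nu_j})$ with $M_2((k_0)_{\nu_j})$. Hence $H((k_0)_{\nu_j})\cong\SL_2((k_0)_{\nu_j})$, which contains unbounded diagonal matrices $\diag(\pi^{n},\pi^{-n})$.

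\textbf{Step 2: build the open set.} Fix the integral matrix model of $H$ used in the statement. For every place $\nu\ne\nu_j$ outside the finite prescribed set, take the compact open subgroup $H(\mathcal O_\nu)$ of matrices $a$ with $a$ and $a^{-1}$ integral. Consider
\[
 \Omega=\prod_{\nu\text{ prescribed}}\Omega_\nu\times
 \prod_{\nu\ne\nu_j\text{ not prescribed}}H(\mathcal O_\nu).
\]
This is a basic open set of the restricted product topology on $H(\mathbb A^{\{\nu_j\}})$: finitely many factors are arbitrary nonempty open sets, and all remaining factors are the standard integral subgroups. It is nonempty because each factor is nonempty. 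This uses that $H(\mathcal O_\nu)$ is nonempty (it contains $1$), and that for almost all $\nu$ these are exactly the subgroups defining the restricted product. The two integral structures can differ only at finitely many places, and those places may be absorbed into the prescribed set by taking $\Omega_\nu=H(\mathcal O_\nu)$ there.

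\textbf{Step 3: conclude.} Density of the diagonally embedded $H(k_0)$ gives some $a\in H(k_0)\cap\Omega$. By construction $a$ lies in every $\Omega_\nu$ and is integral, with integral inverse, at every remaining place other than $\nu_j$.

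The only point needing care is Step~1, namely that $S=\{\nu_j\}$ is an isotropic place. I expect this to be the main obstacle. The algebra ramifies at some places, namely those dividing $1+y$ and possibly $\infty$, and these must not be $\nu_j$. This is guaranteed because $P_j$ is coprime to $1+y$ and has even degree, so $\F_4$ embeds in the residue field and $\iota_{\nu_j}$ is an explicit splitting. Ramified places may still appear among the prescribed ones; there $H((k_0)_\nu)$ is compact, but the open sets $\Omega_\nu$ are nonempty, so density still applies.
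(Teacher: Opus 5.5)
Your proposal is correct and takes the same approach as the paper. The paper gives no argument of its own; it simply invokes strong approximation for the simply connected group $H$ with $S=\{\nu_j\}$. You have unpacked that citation. You correctly verify that $\nu_j$ is a split place, so $H((k_0)_{\nu_j})\cong\SL_2((k_0)_{\nu_j})$ is noncompact. You also correctly check that the prescribed sets $\Omega_\nu$, together with the integral subgroups at the remaining places, form a nonempty basic open set of the restricted product, so density of $H(k_0)$ produces the required element $a$.
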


\begin{lemma}[Ramanujan bound for congruence tree quotients,
consequence of {\cite[Theorem~1.2, $d=2$]{LSV2005}}]
\label{lem:arithmetic-ramanujan}
Let $\Lambda$ be a congruence subgroup of a cocompact arithmetic
lattice in $\PGL_2((k_0)_{\nu_j})$ obtained from $A(k_0)$.
Here congruence means containing a principal congruence subgroup,
the kernel of reduction modulo a nonzero ideal in an integral model.
If $\Lambda$ acts freely on vertices and without edge inversions,
then $\Lambda\backslash\mathcal T_{\nu_j}$ is a finite connected
$(q_j+1)$-regular graph, and its adjacency eigenvalues satisfy
\[
 |\mu|\le 2\sqrt{q_j}
 \qquad\text{unless }\mu=\pm(q_j+1).
\]
\end{lemma}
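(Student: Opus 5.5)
The plan is to deduce the statement from the Ramanujan theorem of Lubotzky, Samuels and Vishne for $\PGL_d$ with $d=2$, after putting $\Lambda$ into their setting and translating representation-theoretic spectrum into adjacency eigenvalues.

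First I would settle the combinatorial claims. Since $\Lambda$ is a cocompact lattice in $\PGL_2((k_0)_{\nu_j})$, it has finitely many orbits on $\mathcal T_{\nu_j}$, so the quotient graph $\Lambda\backslash\mathcal T_{\nu_j}$ is finite. It is connected because it is the image of the connected tree. For $(q_j+1)$-regularity and the absence of loops, I would use that $\Lambda$ acts freely on vertices and without edge inversions.
\begin{itemize}
\item Suppose two distinct neighbours $w,w'$ of a vertex $v$ lie in one $\Lambda$-orbit, say $gw=w'$. Then $gv$ and $v$ are both neighbours of $w'$.
\item If $gv=v$, freeness forces $g=1$, contradicting $w\ne w'$.
\item If $gv\ne v$, we get a nontrivial element carrying a vertex to a vertex at distance $2$. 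This is the case to rule out, using freeness together with the no-inversion hypothesis and the fact that $\Lambda$ is torsion-free.
\end{itemize}
If this local argument becomes awkward, I would instead show that the quotient map is a covering of graphs by the standard argument for free actions without inversions. In either form, each vertex of the quotient has exactly $q_j+1$ neighbours counted with multiplicity, and the adjacency operator is the Hecke operator descended to $L^2(\Lambda\backslash\mathcal T_{\nu_j}(0))$.

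Next I would identify the spectrum with representation theory. Write $L^2(\Lambda\backslash \PGL_2(k_{\nu_j}))=\bigoplus\pi$. The functions on quotient vertices are the $K$-invariant vectors, where $K=\PGL_2(\mathcal O_{\nu_j})$ is the vertex stabilizer, and the adjacency operator acts on each spherical $\pi^K$ by a scalar $\mu$. The standard dictionary gives two cases.
\begin{itemize}
\item If $\pi$ is one-dimensional, it factors through $\det$ modulo squares and units. In $\PGL_2$ only the trivial character and the sign character $\mathrm{ord}_{\nu_j}\det \bmod 2$ remain unramified, and these give $\mu=\pm(q_j+1)$.
\item If $\pi$ is infinite-dimensional and tempered, it is unitary spherical principal series with Satake parameter of absolute value $1$, so $|\mu|\le 2\sqrt{q_j}$.
\end{itemize}
So the eigenvalue bound reduces to showing that every infinite-dimensional spherical constituent is tempered.

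That temperedness is exactly \cite[Theorem~1.2]{LSV2005} for $d=2$. Their proof transfers automorphic representations of the quaternion algebra $A(k_0)$ to $\mathrm{GL}_2$ via Jacquet–Langlands and uses Drinfeld's Ramanujan theorem over function fields. The hypotheses I need to verify are:
\begin{itemize}
\item $\Lambda$ is a congruence subgroup of an arithmetic lattice coming from $A(k_0)$;
\item $A(k_0)$ splits at $\nu_j$;
\item $A$ is ramified somewhere, which gives cocompactness and rules out residual spectrum other than characters.
\end{itemize}
Congruence is assumed in the statement, and splitting at $\nu_j$ is the local splitting $\iota_{\nu_j}$. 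For ramification I would check that $A$ ramifies at the place of $1+y$ (or at $\infty$) via the norm form $f\varphi(f)-(1+y)g\varphi(g)$. Lifting from $L^2$ of the $S$-arithmetic quotient to the single-place quotient needs strong approximation (Lemma~\ref{lem:arithmetic-approximation}), which I take as given.

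The main obstacle I expect is this translation step. The adelic statement in \cite{LSV2005} concerns representations of the adele group, and passing to the concrete graph $\Lambda\backslash\mathcal T_{\nu_j}$ requires care in two places. One must confirm that the components of automorphic representations at $\nu_j$ exhaust the spectrum of $\Lambda$, which uses strong approximation and the congruence hypothesis. One must also check that the only non-tempered components are the one-dimensional ones, so that the exceptions are precisely $\pm(q_j+1)$ and not other characters; this is where being in $\PGL_2$ rather than $\mathrm{GL}_2$ is used.
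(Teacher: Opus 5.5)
Your proposal is correct and follows the paper's route. The paper gives no argument beyond citing \cite[Theorem~1.2, $d=2$]{LSV2005}, together with \cite[Theorem~3.2]{BadulescuRoche2013} for the global Jacquet--Langlands correspondence in positive characteristic. Your sketch (spherical dictionary, the two unramified characters giving $\pm(q_j+1)$, temperedness via Jacquet--Langlands and Drinfeld, ramification of $A$ at $1+y$ and $\infty$) is exactly what that citation packages.

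One correction: your first route to regularity cannot succeed. A hyperbolic element of translation length $2$ moves vertices on its axis to distance $2$, and odd translation lengths, which $\PGL_2$ permits, even give loops. Freeness, absence of inversions and torsion-freeness do not exclude either case. You must therefore use your covering fallback and read the quotient as a multigraph whose adjacency operator is the descended Hecke operator.
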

This is the congruence-subgroup form of
\cite[Theorem~1.2, $d=2$]{LSV2005}, using
\cite[Theorem~3.2]{BadulescuRoche2013} for the global
Jacquet-Langlands correspondence in positive characteristic.

\subsection{The generators and their cubical relations}
\label{sec:cubical-generator-proof}
The cubical relations are
\begin{equation}
 \begin{gathered}
 S_{\nu_i}^{-1}=S_{\nu_i},\qquad
 S_{\nu_i}S_{\nu_j}=S_{\nu_j}S_{\nu_i}\quad(i\ne j),\\
 |S_{\nu_1}\cdots S_{\nu_r}|=
 \prod_{i=1}^r|S_{\nu_i}|=\prod_{i=1}^r(q_i+1).
 \end{gathered}
 \label{eq:unique-refactorization}
\end{equation}

\begin{lemma}[Vertex transitivity and cubical relations]
\label{lem:cubical-generators}
Reduction followed by taking the image line identifies
$S_{\nu_i}$ with $\PP^1(\F_{q_i})$. The group $\Gamma_\Sigma$
acts simply transitively on the vertices of $\mathcal T_\Sigma$.
The sets $S_\nu$ are inverse-closed, satisfy
\eqref{eq:unique-refactorization}, and have injective ordered
products in distinct directions.
\end{lemma}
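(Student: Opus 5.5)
We prove the three claims in order: the identification of $S_{\nu_i}$ with $\PP^1(\F_{q_i})$ by counting, then simple transitivity, and then the cubical relations, which we deduce from simple transitivity together with the product-of-trees geometry.

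\textbf{Counting and inverse-closure.} An element $[f+gz]\in S_{\nu_i}$ has integral local matrices with determinant $P_i$ at $\nu_i$ and unit determinant at the other selected places. Hence it maps $o_{\nu_i}$ to a neighbor, namely the class of the image lattice $\iota_{\nu_i}(f+gz)\mathcal O_{\nu_i}^2$, which lies strictly between $P_i\mathcal O^2$ and $\mathcal O^2$. It also fixes the base vertices in the other directions. This defines a map $S_{\nu_i}\to\PP^1(\F_{q_i})$, sending an element to its image line modulo $P_i$.

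We next count solutions of $f\varphi(f)+(1+y)g\varphi(g)=P_i$ with $f$ monic. As in \cite[Section~2]{RSV2019}, the maximal order $\F_4[y]\oplus\F_4[y]z$ is a "Hurwitz-type" order with class number one. Its elements of reduced norm $P_i$ correspond, modulo units $\F_4^\times\cong$ scalars (normalized by monicity), to the $q_i+1$ left ideals of index $P_i$. These are exactly the $q_i+1$ neighbors. This gives a bijection with $\PP^1(\F_{q_i})$, so $|S_{\nu_i}|=q_i+1$. I would cite RSV's computation rather than redo it.

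For inverse-closure, the conjugate $\bar a=\varphi(f)-gz$ satisfies $a\bar a=\Nrd(a)=P_i$. So $[a]^{-1}=[\bar a]$, and after rescaling to make the first coefficient monic, $\bar a$ again lies in $S_{\nu_i}$.

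\textbf{Simple transitivity.} For transitivity we induct on the product of tree distances from $\mathbf o$. Every neighbor of $\mathbf o$ in direction $i$ is reached by some element of $S_{\nu_i}$, since the map above is surjective. Suppose $g\mathbf o=w$ and $w'$ is adjacent to $w$. Then $g^{-1}w'$ is adjacent to $\mathbf o$, so $w'=gs\mathbf o$ for some generator $s$.

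Freeness is the main obstacle. The claim is that $\Gamma_\Sigma$ has trivial vertex stabilizers. An element fixing $\mathbf o$ is integral with unit reduced norm at every place in $\Sigma$. It is also integral away from $\Sigma\cup\{\infty\}$, since $\Gamma_\Sigma$ is generated by elements of the order. Hence it is a unit of the $\F_4[y]$-order modulo $k_0^\times$. At $\infty$ the algebra is ramified, because $1+y$ has odd degree, so it does not split there. The unit group is then finite, namely $\F_4^\times$ modulo scalars, and so it is trivial in $A^\times/k_0^\times$. This gives triviality of the stabilizer. I would take this from RSV's theorem that $\Gamma_\Sigma$ acts simply transitively.

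\textbf{Cubical relations.} Take $i\ne j$, $s\in S_{\nu_i}$ and $t\in S_{\nu_j}$. The vertex $st\mathbf o$ has the form $(\text{neighbor in }i,\ \text{neighbor in }j,\ o,\dots)$, because $s$ fixes the base vertex of direction $j$ and moves only coordinate $i$. By the bijection, there are unique $t'\in S_{\nu_j}$ and $s'\in S_{\nu_i}$ such that $t's'\mathbf o=st\mathbf o$: we choose $t'$ for the $j$-coordinate and then $s'$ for the $i$-coordinate. Simple transitivity then gives $st=t's'$, which proves $S_iS_j=S_jS_i$.

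For injectivity of ordered products, the map $(s_1,\dots,s_r)\mapsto s_1\cdots s_r\mathbf o$ is a bijection onto $\prod_i E_i(o_{\nu_i})$. This follows by moving one coordinate at a time, using that elements of $S_{\nu_i}$ fix the other coordinates up to the relevant lattice classes. Composing with freeness of the action gives $|S_{\nu_1}\cdots S_{\nu_r}|=\prod(q_i+1)$, and the same argument applies to any subset of directions.
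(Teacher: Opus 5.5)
\paragraph{Freeness: a false step.}
Your derivation of $S_{\nu_i}S_{\nu_j}=S_{\nu_j}S_{\nu_i}$ is sound. You read $t'$ off the $j$-coordinate of $st\mathbf o$, then $s'$ off the $i$-coordinate of $t'^{-1}st\mathbf o$, and injectivity of ordered products follows the same way. This is a reasonable geometric alternative to the paper's explicit construction $b'=\bar c'bc/P_j$. But the argument relies on freeness, and your freeness argument contains a false step.

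After rescaling, a stabilizer element is a unit of the order $\F_4[y]\oplus\F_4[y]z$, hence an element of $\F_4^\times$. You then assert that this is trivial in $A^\times/k_0^\times$. It is not, because $\F_4\not\subseteq k_0=\F_2(y)$, so $[\theta]$ has order $3$ in $A^\times/k_0^\times$. Moreover:
\begin{itemize}
\item $\Nrd(\theta)=\theta^3=1$ and $\theta$ lies in the order;
\item $\iota_{\nu_i}(\theta)=\diag(\theta,\theta^2)\in\mathrm{GL}_2(\mathcal O_{\nu_i})$ for every $i$.
\end{itemize}
So $[\theta]$ fixes $\mathbf o$, yet it moves every neighbor of $o_{\nu_i}$ other than the two coordinate lines. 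Your argument uses only that representatives lie in the order. It would therefore equally \emph{prove} that $\langle\Gamma_\Sigma,[\theta]\rangle$ acts freely, which is false.

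\paragraph{The missing idea: normalization.}
Every element of $\Gamma_\Sigma$ has a representative $f+gz$ with $f$ monic and $\deg g<\deg f$. This set of normalized elements has three properties:
\begin{itemize}
\item it contains the generators;
\item it is closed under conjugation $\overline{f+gz}=\varphi(f)-gz$;
\item it is closed under products, since in $(f_1+g_1z)(f_2+g_2z)$ the cross term $(1+y)g_1\varphi(g_2)$ has degree below $\deg f_1+\deg f_2$.
\end{itemize}
Take a stabilizer element and divide such a representative by a monic product of the $P_i$. This keeps polynomial coefficients and the normalization, and leaves constant norm. The degree formula $\deg\Nrd(f+gz)=\max\{2\deg f,2\deg g+1\}$ then forces $g=0$ and $f=1$. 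The element $[\theta]$ escapes this argument precisely because no $k_0^\times$-multiple of $\theta$ has monic first coefficient. This is how the paper proves freeness. If you prefer to cite RSV for simple transitivity, do so without the incorrect justification.

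\paragraph{Related slips in the counting step.}
The same confusion appears when you call $\F_4^\times$ ``scalars.'' It appears again when you rescale $\bar a$ to be monic: rescaling by an element of $\F_4^\times$ would change the class. That step is harmless only because $\varphi(f)$ is already monic.

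Neighbors also correspond to right ideals $a\mathcal O$, not left ideals, since $\iota(au)\mathcal O^2=\iota(a)\mathcal O^2$. The count $q_i+1$ survives because each $\F_4^\times$-orbit contains exactly one monic element.

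The paper avoids class numbers entirely. For a prescribed line, requiring both columns of $\iota_{\nu_i}(f+gz)\bmod P_i$ to lie in it gives an affine system of $2a_i$ equations in the $2a_i$ binary coefficients of $f-y^{a_i/2}$ and $g$. By the same degree formula, the homogeneous part has only the trivial solution, so each line has exactly one normalized generator.
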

\begin{proof}
For polynomial coefficients, the norm formula gives
\[
 \deg_y\Nrd(f+gz)=\max\{2\deg f,\,2\deg g+1\}.
\]
Thus norm $P_i$ forces $\deg f=a_i/2$ and $\deg g<a_i/2$.
Reduce the splitting $\iota_{\nu_i}$ from
Section~\ref{sec:finite-presentation} modulo $P_i$.
For a prescribed line $\ell$, requiring both matrix columns to lie in
$\ell$ gives $2a_i$ linear equations over $\F_2$ in the $2a_i$
coefficients of $f-y^{a_i/2}$ and $g$.
In the homogeneous system, the norm is divisible by $P_i$ and
has smaller degree, hence is zero; the degree formula then forces
$f=g=0$. Thus each line gives a unique normalized representative,
whose monic norm must be $P_i$.

Write $\overline{f+gz}=\varphi(f)-gz$. The conjugate of a
generator satisfies the same normalization and represents its
inverse, proving inverse closure. For generators $b,c$ in
distinct directions $\nu_i,\nu_j$, choose $c'$ in direction
$\nu_j$ with the same image line as $bc$ modulo $P_j$.
Since $\bar c'$ reduces to the adjugate of $c'$, every coefficient
of $\bar c'bc$ is divisible by $P_j$. Therefore
\[
 b'=\frac{\bar c'bc}{P_j}
\]
has polynomial coefficients, norm $P_i$, and the same monic
normalization, so $bc=c'b'$ is the required square relation.
The image line makes it unique. Successively recovering factors
gives injectivity in distinct directions and compatibility
of the cube relations.

The same operation factors any $f+gz$ with $f$ monic,
$\deg g<\deg f$, and norm a product of the $P_i$: remove a
generator with the same image line at a prime factor, or divide
by $P_i$ if the reduction is zero. The norm degree decreases
at every step. Hence
\begin{equation}
 \Gamma_\Sigma=\left\{[f+gz]:
 \begin{array}{l}
 f,g\in\F_4[y],\quad f\text{ monic},\quad\deg g<\deg f,\\
 \Nrd(f+gz)=\prod_iP_i^{e_i},\quad e_i\ge0
 \end{array}\right\}.
 \label{eq:arithmetic-membership}
\end{equation}
Multiplication of normalized generators gives the other inclusion.
An element fixing the standard vertex in every selected tree
can be scaled to have polynomial coefficients and constant norm;
the degree formula and normalization then make it $1$.
The image-line construction supplies every neighbor in every
direction, proving simple transitivity on $\mathcal T_\Sigma$.
\end{proof}

Put $t=1/y$. Equivalently, $\Gamma_\Sigma$ consists of projective
classes with integral-unit representatives at finite places outside
$\Sigma$, including $1+y$, and a representative at infinity satisfying
\begin{equation}
 f\in1+t\F_4[[t]],\qquad g\in t\F_4[[t]].
 \label{eq:infinity-normalization}
\end{equation}
Here an integral unit has integral coefficients and unit norm.
Indeed, clear scalar denominators and remove common factors in
$\F_2[y]$ from $f,g$. A prime $p\notin\Sigma$ dividing the norm
would force $v_p(\Nrd(f+gz))=2m>0$ and $p^m\mid f,g$, by the
local unit condition, a contradiction. Thus the norm has
only the factors $P_i$; the degree formula and
\eqref{eq:infinity-normalization} give $\deg g<\deg f$ and $f$ monic.
The converse follows by scaling a representative
in~\eqref{eq:arithmetic-membership} by $t^{\deg f}$ at infinity.

\subsection{Congruence images and injectivity}
\label{sec:quotient-injectivity}
Recall the color subgroup from~\eqref{eq:type-norm-one}.
Equivalently, $\Gamma_\Sigma^0$ consists of norm-one quaternions
with $f,g\in\F_4[y,P_1^{-1},\ldots,P_r^{-1}]$ satisfying
\eqref{eq:infinity-normalization}. These local conditions are open.
We apply Lemma~\ref{lem:arithmetic-approximation}, leaving
$\nu_j$ unrestricted. This gives elements of $\Gamma_\Sigma^0$
that approximate prescribed determinant-one matrices at the
other selected places and have any prescribed reduction
in $\SL_2(L_s)$.
Such reductions lift locally by lifting elementary matrices through
an integral splitting at $\ell_s$.

\begin{proposition}[Congruence images and injectivity]
\label{prop:finite-quotients}
For these levels $I_s$,
\[
 \pi_{I_s}(\Gamma_\Sigma^0)=G_s,\qquad
 \Gamma_\Sigma/\Gamma_\Sigma(I_s)
 \cong G_s\times\F_2^\Sigma.
\]
For $s>2r a_{\max}$, where $a_{\max}=\max_i a_i$,
the identification~\eqref{eq:finite-cell-definition} holds.
The deck group acts freely on vertices, and the quotient map
$p_s:\mathcal T_\Sigma\to\X_s$ is injective on vertices in every
ball of radius $\lfloor(s-1)/(2a_{\max})\rfloor$.
In particular, it is injective on each cube and preserves complete
upper stars.
\end{proposition}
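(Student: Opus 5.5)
The plan has three steps: surjectivity of the congruence image, the product decomposition of the quotient group, and the cube injectivity bound.

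\textbf{Surjectivity.} I will first prove $\pi_{I_s}(\Gamma_\Sigma^0)=G_s$ using strong approximation, Lemma~\ref{lem:arithmetic-approximation}, applied as described just before the statement.
\begin{itemize}
\item Given $\bar h\in\SL_2(L_s)$, lift it to an integral determinant-one matrix at the place $\ell_s$ by lifting elementary matrices through an integral splitting.
\item Impose the open normalization \eqref{eq:infinity-normalization} at infinity and integrality at the remaining finite places other than the $\nu_i$.
\item Leave one selected place $\nu_j$ unrestricted.
\end{itemize}
The resulting $a\in H(k_0)$ has norm one, so its class lies in $\Gamma_\Sigma^0$ by \eqref{eq:type-norm-one} and the characterization following Lemma~\ref{lem:cubical-generators}. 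Its reduction is $\bar h$. Since $L_s$ has characteristic two, $\PSL_2(L_s)=\PGL_2(L_s)$, which identifies the image with $G_s$.

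\textbf{Product decomposition.} Next consider the map $\Gamma_\Sigma\to G_s\times\F_2^\Sigma$ given by $g\mapsto(\pi_{I_s}(g),\operatorname{color}(g))$. Its kernel is $\Gamma_\Sigma(I_s)$ by definition.
\begin{itemize}
\item For surjectivity, the color map is onto, since the generators in $S_{\nu_i}$ realize $e_i$.
\item Combining this with the first step, for any target $(h,\varepsilon)$ pick $w\in\Gamma_\Sigma$ with color $\varepsilon$, then multiply by an element of $\Gamma_\Sigma^0$ whose reduction is $h\pi_{I_s}(w)^{-1}$.
\end{itemize}
Vertices of $\X_s$ are $\Gamma_\Sigma(I_s)$-orbits of $\Gamma_\Sigma\mathbf o$, and $\Gamma_\Sigma$ acts simply transitively by Lemma~\ref{lem:cubical-generators}. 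So vertices correspond to cosets $\Gamma_\Sigma(I_s)\backslash\Gamma_\Sigma$, which are identified with $G_s\times\F_2^r$. A direction-$i$ step multiplies on the right by a generator $b\in S_{\nu_i}$, and this becomes $(h,\varepsilon)\mapsto(h\pi(b),\varepsilon+e_i)$. Because $p_s$ will be shown to be injective on cubes, the cube relations \eqref{eq:unique-refactorization} descend to the images $B_{\nu_i,I_s}$. This gives \eqref{eq:finite-cell-definition}, provided the images $B_{\nu_i,I_s}$ still satisfy the cardinality condition; that also follows from the injectivity argument below.

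\textbf{Freeness and local injectivity.} Freeness on vertices follows from simple transitivity: a deck element fixing a vertex fixes $\mathbf o$ after conjugation within $\Gamma_\Sigma$, so it is trivial.

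For injectivity on balls, I will argue as follows.
\begin{itemize}
\item Suppose two vertices $u=g\mathbf o$ and $u'=g'\mathbf o$ at distance at most $2R$ in the ball have the same image. Then $g^{-1}g'$ is a nontrivial element of $\Gamma_\Sigma(I_s)$ that is a product of at most $2R$ generators.
\item By \eqref{eq:arithmetic-membership}, such an element has a representative $f+gz$ with $f$ monic, $\deg g<\deg f\le Ra_{\max}$. Here a word of length $m$ has norm degree at most $m a_{\max}$, so $\deg f\le ma_{\max}/2$.
\item Triviality in $\PGL_2(L_s)$ means the reduction mod $\ell_s$ is scalar. By the splitting, this forces $\ell_s\mid g$ and $\ell_s\mid(f-\varphi(f))$, or a similar condition.
\end{itemize}
Since $\deg\ell_s=s$ exceeds these degrees when $R\le\lfloor(s-1)/(2a_{\max})\rfloor$, we get $g=0$ and $f\in\F_2[y]$ up to scalar. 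Then $[f]$ is trivial, which is a contradiction.

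\textbf{Main obstacle.} The step I expect to be hardest is making the scalar-reduction criterion precise in the chosen explicit splitting at $\ell_s$, in particular handling the $\varphi$-twist when $\theta\in L_s$ versus not. I would first treat both cases of whether $\ell_s$ splits in $\F_4$, and check that "scalar mod $\ell_s$" implies $\ell_s$ divides $g$ and $f-\lambda$ for some $\lambda$.

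Closed cubes have diameter $r\le$ the radius, since $s>2ra_{\max}$. So injectivity on cubes and on complete upper stars follows.
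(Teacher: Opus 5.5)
Your proposal is correct and follows the paper's proof essentially step for step. You use strong approximation with one selected place left free to get $\pi_{I_s}(\Gamma_\Sigma^0)=G_s$, the kernel of $(\pi_{I_s},\operatorname{color})$ for the quotient, and a comparison of the coefficient degrees of a short word with $\deg\ell_s=s$ for injectivity on balls. Your bound $\deg f\le ma_{\max}/2$ is a factor of two sharper than the paper's, and deriving freeness directly from simple transitivity is a harmless shortcut.

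The step you flag as the main obstacle needs no case split. In the splitting $(a+bz)v=av+bc_s\varphi(v)$, testing a scalar reduction on $v=1$ and $v=\theta$ gives $gc_s(\theta^2+\theta)=gc_s=0$. Since $c_s\varphi(c_s)=\eta_s\in L_s^\times$, $c_s$ is a unit, so $\ell_s\mid g$ and then $\ell_s\mid f-\varphi(f)$, whether $E_s$ is a field or $L_s\times L_s$.
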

\begin{proof}
For the matrix realization of reduction, put
$R_\Sigma=\F_2[y,(1+y)^{-1},P_1^{-1},\ldots,P_r^{-1}]$ and
\[
 A(R_\Sigma)=(\F_4[y]\oplus\F_4[y]z)\otimes_{\F_2[y]}R_\Sigma
 \subseteq A(k_0).
\]
Each generator has norm $P_i\in R_\Sigma^\times$, so
$\Gamma_\Sigma\subseteq A(R_\Sigma)^\times/R_\Sigma^\times$.
To split the algebra modulo $I_s$, set
$E_s=L_s\otimes_{\F_2}\F_4$ and $\eta_s=(1+y)\bmod\ell_s$.
The algebra $E_s$ is either a quadratic field or $L_s\times L_s$;
in both cases its norm is onto $L_s^\times$. Choose $c_s$ with
$c_s\varphi(c_s)=\eta_s$. Then
\[
 (a+bz)v=av+bc_s\varphi(v)\qquad(v\in E_s)
\]
identifies $A(R_\Sigma)/I_sA(R_\Sigma)$ with
$\operatorname{End}_{L_s}(E_s)\simeq M_2(L_s)$ and reduced norm
with determinant.

The preceding approximation statement, with $\nu_1$ unrestricted,
gives every reduction in $\SL_2(L_s)$ from $\Gamma_\Sigma^0$.
Since $\SL_2(L_s)=\PSL_2(L_s)=\PGL_2(L_s)$ in characteristic two,
\[
 \pi_{I_s}(\Gamma_\Sigma^0)=G_s=\PSL_2(L_s).
\]
Since the color map is onto, $(\pi_{I_s},\operatorname{color})$ maps onto
$G_s\times\F_2^\Sigma$, with kernel $\Gamma_\Sigma(I_s)$.
The images of the
generators therefore connect the entire decorated complex.

For separation, multiply the representatives of a word of length
$\ell$, using $\bar b$ for an inverse, and write the product as
$f+gz$. Its coefficient degrees are at most $a_{\max}\ell$.
Scalar reduction implies
\[
 \ell_s\mid g,\qquad \ell_s\mid f-\varphi(f).
\]
Unless the word is projectively the identity, at least one of
these polynomials is nonzero, so $\ell\ge s/a_{\max}$.
The subgroup $\Gamma_\Sigma(I_s)$ is normal in
$\Gamma_\Sigma$, which acts simply transitively on vertices.
For $v=w\mathbf o$ and a nonidentity deck transformation $\gamma$,
\[
 d_{\mathcal T_\Sigma}(v,\gamma v)
 =d_{\mathcal T_\Sigma}(\mathbf o,w^{-1}\gamma w\mathbf o)
 \ge s/a_{\max},
\]
where $d_{\mathcal T_\Sigma}$ is graph distance in the one-skeleton.
The inequality holds because $w^{-1}\gamma w$ is again a nonidentity congruence element
and its shortest word length in the directional generators equals
this graph distance. In particular, the deck group acts freely on
vertices. The asserted vertex balls are injective since twice their
radius is less than $s/a_{\max}$. When $s>2ra_{\max}$, these balls
have radius at least $r$ and contain all vertices of any cube
through their center, so the quotient map is injective on each cube.
The vertex $w\mathbf o$ maps to
$(\pi_{I_s}(w),\operatorname{color}(w))$, taking a direction-$i$
step to right multiplication by $(\pi_{I_s}(b),e_i)$.
The product cubes therefore descend to the decorated cubes
of~\eqref{eq:finite-cell-definition}. This proves
\eqref{eq:finite-cell-definition}. The upper-star assertion now
follows from the lifting argument in
Proposition~\ref{prop:tree-code-descent}.
\end{proof}

\subsection{The spectral assertion in each direction}
\label{sec:spectral-certificate}

We apply Lemma~\ref{lem:arithmetic-ramanujan} after identifying
the parallel-face graph as a tree quotient. The proof below checks
that the level and transverse stabilizer conditions are congruence
conditions and handles the color-preserving subgroup.

\begin{proof}[Proof of the spectral assertion in
Proposition~\ref{prop:cubical-input}]
Fix occupied directions $I$ and a free direction $j\notin I$.
A transverse face $\tau\subseteq\prod_{i\ne j}\mathcal T_{\nu_i}$ consists of
an edge in each direction in $I$ and a vertex in every other
direction. Orient each edge from color $0$ to color $1$.
Determinant-one matrices act transitively on transverse faces of fixed
colors. Apply Lemma~\ref{lem:arithmetic-approximation} with $\nu_j$
unrestricted and with reduction $1$ prescribed at $\ell_s$. Since face
stabilizers are open, a sufficiently good approximation acts on $\tau$
in the same way as the prescribed matrices. These moves are therefore
realized by $\Gamma_\Sigma(I_s)$.

Let $\Lambda$ be the subgroup of $\Gamma_\Sigma(I_s)$ fixing $\tau$ pointwise.
The parallel-face graph is therefore
\[
 \Lambda\backslash\mathcal T_{\nu_j},
\]
with vertices $[\tau\times\{v\}]$ and edges $[\tau\times e]$.
For the spectral bound, put
$\Lambda^+=\{g\in\Gamma_\Sigma:\pi_{I_s}(g)=1,\ g|_\tau=1\}$.
Since $\Lambda^+$ fixes the transverse colors,
$\Lambda=\Lambda^+\cap\Gamma_\Sigma^0$ has index at most two in $\Lambda^+$.
The level condition, \eqref{eq:infinity-normalization}, and the
transverse stabilizers are open congruence conditions on the arithmetic
group described above. The congruence displacement estimate in the proof of
Proposition~\ref{prop:finite-quotients} excludes vertex stabilizers
and edge inversions. Thus Lemma~\ref{lem:arithmetic-ramanujan}
applies to $\Lambda^+\backslash\mathcal T_{\nu_j}$.
Since $\Lambda$ is the subgroup preserving color $j$, our graph
is this quotient or its bipartite double cover, whose
spectrum only adds negatives of the original eigenvalues.
It is connected, bipartite and $(q_j+1)$-regular, proving
the spectral bound in Proposition~\ref{prop:cubical-input}.
\end{proof}

\section{Inner generation for Reed-Solomon codes}
\label{arb:sec-local}\label{sec:algebra}

Here, we prove the inner generation of polynomial zero sets (Theorem~\ref{thm:inner-generation-main}). Fix $r\ge2$, $\varnothing \neq J \subseteq [r]$, and $A \in \mathbb N$. Recall that $\overline{F}$ is an algebraically closed field of characteristic two, and for $i\in J$,
\[
 q_i=2^{2(A+i)},\qquad
 S_i \subseteq  \Sigma_i=\{x\in\overline F:x^{q_i+1}=1\},\qquad n_i = |S_i| >0, \qquad C_i=\RS(S_i,n_i-h_i),
\]
where $C_i$ is defined over $\overline F$. We keep track of the redundancies $1\le h_i\le n_i$ by writing
\[
\cB_h = \cB((C_j)_{j \in J}).
\]
$\cB_h$ is the span of all coordinate-line codewords on the grid $\prod_{i \in J} S_i$. Let $H=\min_{i \in J} h_i$.
For a polynomial $E \in \overline F[X_i : i\in J]$, let $Z_E$ be its zero set on the grid $\prod_{i \in J} S_i$, and
$U_E$ be the union of all coordinate lines $\ell$ of the grid contained in $Z_E$.

\begin{theorem}[Theorem \ref{thm:inner-generation-main}, Restated]
\label{thm:inner-generation-restated}
There exists a constant $K_r\ge2$ depending only on $r$, such that if $E\ne0$ has coordinate degrees
at most $d\ge1$ and $H\ge K_rd$, then every codeword of $\cB_h$
supported on $Z_E$ is a sum of line-codewords whose associated coordinate lines are contained in $Z_E$.
\end{theorem}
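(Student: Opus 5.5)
We argue by induction on $|J|$. The base case $|J|=1$ is immediate. A codeword of $\RS(S_i,n_i-h_i)$ that vanishes off $Z_E$ and is nonzero has at least $h_i+1>d$ nonzero entries. On the single line these entries are zeros of the univariate polynomial $E$, which has degree at most $d$, so $E\equiv0$ on the line and the line lies in $Z_E$. For the inductive step, fix $x\in\cB_h$ supported on $Z_E$. Our aim is to show that $\supp x\subseteq U_E$. Once this holds, we decompose $x$ over the lines in $U_E$ by interpolation and gluing.

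\textbf{Step 1: forcing the support into $U_E$.} Suppose $z\in\supp x\setminus U_E$. Then no coordinate line through $z$ is contained in $Z_E$. In every direction $i$, the univariate restriction $E|_{\ell_i(z)}$ is therefore a nonzero polynomial of degree at most $d$. The plan is to build an auxiliary polynomial $P$ that isolates $z$ within $Z_E\setminus U_E$: $P(z)\neq0$, $P$ vanishes on the other points of $\supp x\setminus U_E$ up to a controllable set, and every coordinate degree satisfies $\deg_{X_i}P\le h_i-1$. For such $P$ we have $P\cdot(\cB_h)^\perp$-compatibility. Indeed, $\cB_h^\perp=\bigotimes_i C_i^\perp$, and $C_i^\perp$ is a generalized RS code of dimension $h_i$, so pairing $x$ against the product of the column multipliers with $P$ gives $\langle x, P\rangle=0$. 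If $P$ isolates $z$, this yields $x_z=0$, a contradiction. To construct $P$, we first use the norm-one identity $X_i^{q_i}=X_i^{-1}$ on $\Sigma_i$. Applying the Frobenius maps $u\mapsto u^{Q}$ with $Q=2^{A+i}$ or its powers, and clearing denominators, we replace $E$ by polynomials $E^{(i)}$ with the same grid zero set and coordinate degrees $O_r(d)$, whose partial derivative in direction $i$ is non-degenerate at the relevant points. Then, for $|J|$ such directional variants, the determinant of a Jacobian-type (or Wronskian-type) matrix of shifted copies gives a polynomial that vanishes on $Z_E$ away from the lines through $z$ but not at $z$. Its coordinate degrees are $O_r(d)$, and this is where $K_r$ is fixed.

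\textbf{Step 2: decomposition.} Once $\supp x\subseteq U_E$, we write $U_E=\bigcup_i U^{(i)}$, where $U^{(i)}$ is the union of the direction-$i$ lines contained in $Z_E$. We then peel off directions in order. Restricting $x$ to the hyperplane slices transverse to direction $i$, we interpolate a codeword supported on $U^{(i)}$ that agrees with $x$ on the points lying only in $U^{(i)}$. This uses that each slice of a line-code sum is a line-code sum in $|J|-1$ directions, to which the induction hypothesis applies on the zero set of $E$ with the $i$-th variable specialized. We subtract this codeword and glue the pieces. The remainder is supported on $\bigcup_{j\neq i}U^{(j)}$, and we repeat. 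Consistency at the intersections of the lines should follow from the redundancy bound $h_i\ge K_r d$, since the overlaps are cut out by polynomials of degree at most $d$.

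\textbf{Main obstacle.} The crux is Step 1: producing the isolating polynomial with coordinate degree only $O_r(d)$, uniformly in $A$. Generic separation arguments give degrees depending on the grid size. The Frobenius-twisted gradient trick is the only ingredient that is specific to the sets $\Sigma_i$, and it must handle points where several directional restrictions share multiple roots. I would first try a case split on the rank of the gradient of $E$ at $z$, combined with induction on the number of directions in which $z$ is singular.
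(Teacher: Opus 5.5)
\textbf{Step 1.} Your ingredients here are the right ones: the Frobenius variants coming from $x_i^{q_i}=x_i^{-1}$ (Lemma~\ref{lem:CRT}), a telescoped determinant, and the parity test \eqref{eq:tensor-parity}. At a point $z\in\supp x\setminus U_E$ where all the twisted partials $b_j$ of $E$ are nonzero, the determinant of the $m$ variants does isolate $z$ in $Z_E$, and the parity test kills $x_z$. But isolating a \emph{prescribed} $z$ fails exactly at the obstacle you flag, and your proposed case split cannot repair it. The variant $R_j$ has gradient $\partial_jR_j(z)e_j$ at a zero $z$, and this is nonzero only if $b_j(z)\ne0$. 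Take the node $z=(a,b)\notin U_E$ of $E=(X_1-a)(X_2-b)+(X_1-a)^3+(X_2-b)^3$: both $b_1=\partial_1E$ and $b_2=\partial_2E$ vanish at $z$, so the variants of $E$ give no independent gradients there. The missing idea is to stop fixing both $z$ and the set $Z_E$. The paper multiplies by $P_0$, which has coordinate degrees $\le(m-1)d$ and grid zero set exactly $U_E$, and argues by contradiction on $N=P_0x$. At stage $t$ it takes $F_t$ of \emph{minimal total degree} among polynomials of coordinate degrees $\le d$ vanishing on the current support $\supp(Q_tN)$. Since $b_{t+1}$ has smaller total degree, it cannot vanish on all of that support. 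So $Q_{t+1}=Q_tb_{t+1}$ discards the bad points while keeping the support nonempty, and $\nabla(Q_tR_{t+1})$ is a nonzero multiple of $e_{j_{t+1}}$ on what survives. After $m$ stages a surviving point carries $m$ independent gradients, contradicting Lemma~\ref{lem:rank}. Re-choosing $F_t$ brings a new requirement: it must involve every variable. This is where the induction hypothesis enters Step~1. A subset of $Z_E\setminus U_E$ on which a polynomial in fewer than $m$ variables vanishes admits interpolation with bound $2^mK'_{m-1}d$, and hence supports no nonzero word of the relevant $\cB$.

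\textbf{Step 2.} This step rests on a false claim: restricting a line-code sum to a hyperplane $\{X_i=a\}$ does not give a line-code sum in the other directions, because each direction-$i$ line codeword meets that hyperplane in one point. For $m=2$ and $x=c\otimes\delta_b$ with $0\ne c\in C_1$, the slice at $X_1=a$ is $c_a\delta_b$. When $c_a\ne0$ this is not in $C_2$, whose minimum distance is $h_2+1\ge2$. So induction cannot be applied to slices of $x$. More importantly, the ``consistency at the intersections'' is the real content of the step, not a consequence of $h_i\ge K_rd$. Take a direction-$i$ line $\ell\subseteq Z_E$ not contained in any other $U^{(j)}$. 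The points of $\ell$ lying only in $U^{(i)}$ are all but at most $(m-1)d$ of its points, far more than the $n_i-h_i$ free parameters of $C_i$. So you would need to know beforehand that $x|_\ell$ agrees with a $C_i$-codeword off the overlap, which is essentially what has to be proved. The paper proves it dually in Theorem~\ref{thm:cylinders}. It groups $U_E$ into \emph{maximal} cylinders $W_I\times\prod_{j\notin I}S_j$, since single-direction bases need not be interpolable when $Z_E$ contains higher-dimensional flats. Each base $W_I=Z_{F_I}\setminus U_{F_I}$ then admits interpolation with bound $K'_{m-1}d$, by induction applied to a generic combination $F_I$ of the coefficients of $E$. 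A functional annihilating the line codes inside $U_E$ yields compatible polynomial data of degree $<h_i$ on the cylinders. These glue on $Y\subseteq(\PP^1)^m$ and extend to the whole grid because $H^1\bigl((\PP^1)^m,\mathcal I_Y(t)\bigr)=0$ for $t_i\ge2mR-1$. This regularity bound is proved by local approximation and a Koszul-complex induction, and your peeling sketch has nothing that replaces it.
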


The proof proceeds in two steps: show that the word is supported on $U_E$,
and decompose it into line-codewords on a union of cylinders.
Both steps use induction on the number of directions.
In Section~\ref{sec:B1}, we give a criterion for membership in
$\mathcal{B}_h$ and explain its connection to polynomial interpolation
on sets. In Section~\ref{sec:B2}, we establish auxiliary lemmas that will be
used later to show that a codeword in $\mathcal{B}_h$ supported on
$Z_E$ has support contained in the union of coordinate lines lying
entirely in $Z_E$.
In Section~\ref{sec:B3}, we prove inner generation for a particular
class of sets. Finally, in Section~\ref{sec:B4}, we combine these
results to complete the proof of the inner generation theorem.

\subsection{Parity and interpolation}
\label{sec:B1}
Set $v_i(a)=\prod_{b\in S_i\setminus\{a\}}(a-b)^{-1}$.
Comparing the coefficient of $X^{n_i-1}$ in Lagrange interpolation
and using dimensions gives $C_i^\perp=\diag(v_i)\RS(S_i,h_i)$.
Taking tensor orthogonal complements gives the following test:
\begin{equation}
 M\in\cB_h\quad\Longleftrightarrow\quad
 \sum_x M(x)f(x)\prod_i v_i(x_i)=0
 \quad\text{for every }f\text{ with }\deg_i f<h_i.
 \label{eq:tensor-parity}
\end{equation}
In particular, multiplication by a polynomial of coordinate degrees
$e_i<h_i$ sends $\cB_h$ into $\cB_{h-e}$.

\begin{definition}[Interpolation with a bound]
    For a positive integer $w$, say that $W$ admits interpolation with bound $w$ if
every function on $W$ is the restriction of a polynomial with
$\deg_i f<w$.    
\end{definition}
Equivalently, for each $a\in W$ there is a polynomial
$L_a$ with these degree bounds such that $L_a(b)=\mathbf1_{a=b}$
for all $b\in W$. By~\eqref{eq:tensor-parity}, for $w\le\min_i n_i$ this is
equivalent to $\cB_{(w,\ldots,w)}\cap\overline F^W=0$.
Thus inner generation at redundancy $w$ implies interpolation
with bound $w$ on $Z_E\setminus U_E$. Interpolation also passes
to subsets.

\begin{lemma}[Unions and separation]\label{lem:finite-union}
If $W_\nu$ admits interpolation with bound $w_\nu$, then
$\bigcup_\nu W_\nu$ admits interpolation with bound $\sum_\nu w_\nu$.
If $z\notin W$ and $W$ admits interpolation with bound $w$, some
polynomial of coordinate degrees at most $w$ vanishes on $W$
but not at $z$.
\end{lemma}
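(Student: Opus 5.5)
The plan is to prove the two assertions of Lemma~\ref{lem:finite-union} separately. Both use the Lagrange-type characterization of interpolation noted just before the lemma: $W$ admits interpolation with bound $w$ exactly when every $a\in W$ has a polynomial $L_a$ with $\deg_i L_a<w$ and $L_a(b)=\one_{a=b}$ for $b\in W$.

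\emph{Unions.} First, for each $\nu$ pick a polynomial $P_\nu$ with $\deg_i P_\nu\le w_\nu$ that vanishes on $W_\nu$ and is nonzero off $W_\nu$ on the grid. A grid-independent choice that works is
\[
 P_\nu(X)=\prod_{i}\Bigl(\ \ \Bigr)\quad\text{(not needed)}.
\]
Instead, I will build indicator-type interpolants directly. Let $W=\bigcup_\nu W_\nu$ and fix $a\in W$, say $a\in W_\mu$. For each $\nu\ne\mu$ with $a\notin W_\nu$, use the separation statement for $W_\nu$ (proved below independently) to obtain $Q_\nu$ with coordinate degrees at most $w_\nu$, vanishing on $W_\nu$ and with $Q_\nu(a)\ne0$. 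For $\nu$ with $a\in W_\nu$, use $L^{(\nu)}_a$ from $W_\nu$, which has degrees $<w_\nu$.

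Take the product of one factor for each $\nu$:
\begin{itemize}
 \item $L^{(\mu)}_a$ for $\nu=\mu$;
 \item $L^{(\nu)}_a$ when $a\in W_\nu$;
 \item $Q_\nu/Q_\nu(a)$ otherwise.
\end{itemize}
Every factor equals $1$ at $a$. At any $b\in W\setminus\{a\}$ we have $b\in W_{\nu'}$ for some $\nu'$, and the $\nu'$ factor kills $b$: either $L^{(\nu')}_a(b)=0$ because $b\ne a$, or $Q_{\nu'}(b)=0$. The product has coordinate degree at most $(w_\mu-1)+\sum_{\nu\ne\mu}w_\nu<\sum_\nu w_\nu$, as required.

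One subtlety is the degree bound when $a$ lies in several $W_\nu$. Those factors each have degree $\le w_\nu-1$, so the total is still strictly below $\sum_\nu w_\nu$.

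\emph{Separation.} Given $z\notin W$, apply interpolation to $W':=W\cup\{z\}$? That is unavailable, since interpolation is only assumed on $W$. Instead, argue by linear algebra. Let $V$ be the space of polynomials with $\deg_i<w$; evaluation $V\to\overline F^W$ is surjective. Consider $f=\prod_i(X_i-z_i)$ times an interpolant? That product vanishes at $z$ rather than separating it, so use the opposite construction. Take $g\in V$ with $g|_W=0$ and $g(z)\ne0$ if such a $g$ exists; then $g$ works, with degrees $<w\le w$.

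Otherwise every $g\in V$ vanishing on $W$ also vanishes at $z$, so $\mathrm{ev}_z$ lies in the span of $\{\mathrm{ev}_b\}_{b\in W}$ on $V$. Write $\mathrm{ev}_z=\sum_b c_b\,\mathrm{ev}_b$. Choose a coordinate $i$ and set $g=(X_i-z_i)\,h$ for suitable $h\in V$ — but this vanishes at $z$, which is the wrong direction. Instead, choose $i$ and consider $g=(X_i-\alpha)\,h$ with $\alpha$ to be determined: $g\in$ degree $\le w$ in $X_i$, and its value is $(b_i-\alpha)h(b)$ on $W$ and $(z_i-\alpha)h(z)$ at $z$.

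By surjectivity, choose $h\in V$ with $h(b)=\delta_b\,(b_i-\alpha)^{-1}$ for prescribed data, where $\alpha$ avoids all $b_i$ and $\overline F$ is infinite. Then $g(z)=(z_i-\alpha)\sum_b c_b h(b)$, while $g$ on $W$ becomes $\delta$. Taking $\delta=0$ forces $h(b)=0$ on $W$, hence $h(z)=0$, which fails.

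So the genuinely working choice is this: since $z\ne b$ for every $b\in W$, for each $b$ pick a coordinate $i(b)$ with $b_{i(b)}\ne z_{i(b)}$. This avoids the linear-algebra case entirely, and I expect it to be the cleanest route, using only the interpolants $L_b$ of $W$ plus one extra degree. Put
\[
 g=\sum_{b\in W}L_b(X)\,\frac{X_{i(b)}-b_{i(b)}}{z_{i(b)}-b_{i(b)}}\cdot c_b ,
\]
which vanishes on $W$ termwise, since $L_b(b')=0$ for $b'\ne b$ and the linear factor is zero at $b'=b$. It has coordinate degrees at most $w$. Its value at $z$ is $\sum_b c_bL_b(z)$. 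If $L_b(z)\ne0$ for some $b$, choose $c_b=1$ for that $b$ and $c_{b'}=0$ otherwise, giving $g(z)\ne0$. If all $L_b(z)=0$, then $\sum_bL_b$ equals $1$ on $W$ and $0$ at $z$, so $g=1-\sum_bL_b$ vanishes on $W$ with $g(z)=1$, degrees $<w$. Either case gives the separating polynomial. I expect this case split to be the main point; the union bound then follows by the product construction above.
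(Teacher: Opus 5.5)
Your proof is correct and follows the paper's approach: the polynomials your separation argument ends with, $1-\sum_b L_b$ and $L_b\,(X_{i(b)}-b_{i(b)})$, are the family the paper uses, and your union interpolant is the paper's construction with the same degree count. That interpolant is the product of the $L_a^{(\nu)}$ over the sets containing $a$ and of normalized separators over the remaining sets, with at least one factor of degree below $w_\mu$. Delete the abandoned drafts, namely the empty display for $P_\nu$ and the linear-algebra detour in the separation part, since only the final case split is needed.
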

\begin{proof}
Choose $L_a$ as above for $a\in W$. The polynomials
$1-\sum_aL_a$ and $(X_i-a_i)L_a$ separate every point outside $W$:
if all vanish at $z$, some $L_a(z)\ne0$ forces $z=a$.
For interpolation on a union, multiply the polynomials $L_a$ for sets
containing $a$ and separating polynomials for the other sets.
At least one factor has degree strictly below its bound in each
coordinate, giving the asserted sum bound.
\end{proof}

\subsection{Two gradient lemmas}
\label{sec:B2}
The following lemma relies on our choice of $\Sigma_i$ as a norm-one set. It uses the identity $x_i^{2^{a_i}}=x_i^{-1}$ on
$S_i\subseteq\Sigma_i$ to isolate one partial derivative at a time.

\begin{lemma}[Isolating a derivative]\label{arb:axis-crt}\label{lem:CRT}
There exists a constant $K''_r \ge 1$ depending only on $r$ with the following property.
Suppose no variable divides $F$, and write
$F=P((X_i^{2^{u_i}})_i)$, extracting the largest possible power
of two from the exponents of each variable. Put $u_i=0$ for
variables absent from $F$.
For each variable $X_j$ on which $F$ depends, there is a polynomial
$R_j$ with the same zero set as $F$ on the grid $\prod_i S_i$ with $\deg_iR_j\le K''_r\deg_iF$.
At every grid point $x \in \prod_i S_i$ with $F(x)=0$, we have $\partial_iR_j(x)=0$ for $i\ne j$, and
$\partial_j R_j(x)$ is non-zero if and only if
\[
 b_j=(\partial_jP)((X_i^{2^{u_i}})_i)
\]
is nonzero on $x$. The polynomial $b_j$ is nonzero, has smaller total
degree than $F$, and has no larger coordinate degrees.
\end{lemma}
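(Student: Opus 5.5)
The plan is to build $R_j$ from $F$ by a Frobenius twist, and to use the norm-one identity to turn the twisted variables back into bounded powers on the grid. I will reduce the lemma to an arithmetic choice of the twist exponent, and I expect that choice to be the main obstacle.

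\emph{Step 1 (Frobenius twists preserve zero sets).} Since $\overline F$ is perfect, for every integer $t$ the map $c\mapsto c^{2^t}$ is a field automorphism $\sigma^t$. For a grid point $x$ put $F^{[t]}(x):=F(x)^{2^t}$. Then
\[
 F^{[t]}(x)=(\sigma^tP)\bigl((x_i^{2^{u_i+t}})_i\bigr),
\]
and $F^{[t]}$ has the same zero set as $F$. On $S_i\subseteq\Sigma_i$ we have $x_i^{2^{a_i}}=x_i^{-1}$, where $a_i=2(A+i)$, and hence $x_i^{2^{2a_i}}=x_i$. So $x_i^{2^{u_i+t}}$ equals $x_i^{\pm2^{c_i}}$, where $c_i$ is the residue of $u_i+t$ modulo $a_i$ and the sign records the parity of the quotient. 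Replace each such power by the corresponding monomial. Then multiply by $\prod_i X_i^{2^{c_i}\deg_iP}$ to clear negative exponents; this factor is nonzero on the grid. The result is a genuine polynomial $R_j$ with the same zero set on $\prod_iS_i$. Its coordinate degrees are at most $2\cdot2^{c_i}\deg_iP=2\cdot2^{c_i-u_i}\deg_iF$.

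\emph{Step 2 (choosing $t$).} We need three things: $c_j=0$, so that $X_j$ enters only as $X_j^{\pm1}$; $c_i\ge1$ for every $i\ne j$ on which $F$ depends; and $c_i-u_i\le O_r(1)$, which gives the bound $K''_r$. Only the residues of $t$ matter. Take $t\equiv-u_j\pmod{a_j}$, so $t=-u_j+a_jm$. Then $c_i\equiv u_i-u_j+2(j-i)m\pmod{a_i}$, because $a_j-a_i=2(j-i)$. We must therefore pick $m$ so that each of the at most $r-1$ residues $u_i-u_j+2(j-i)m$ modulo $a_i$ lands in a window $[1,u_i+K]$ with $K=K(r)$. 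This is the CRT-type step and the main obstacle. The moduli $a_i=2(A+i)$ are not coprime, and $u_j$ may be comparable to $a_i$, so naive choices of $m$ wrap around to huge residues. My first attempt is to exploit the freedom in $m$ together with the freedom to use either sign branch. If that fails, I would allow the extra reduction $x_i^{2^{a_i}}=x_i^{-1}$ at the cost of a multiplicative constant per variable. In that approach I would do a case split on the ordering of the $u_i$, bounding the number of wrap-arounds by $r$.

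\emph{Step 3 (derivatives).} Write $R_j=M\cdot T$, where $M$ is the clearing monomial and $T=(\sigma^tP)(Y)$ with $Y_j=X_j^{\pm1}$ and $Y_i=X_i^{\pm2^{c_i}}$ for $i\ne j$. At a zero $x$ of $F$ we have $T(x)=0$, so $\partial_iR_j(x)=M(x)\,\partial_iT(x)$. For $i\ne j$ the substitution $X_i^{\pm2^{c_i}}$ with $c_i\ge1$ has vanishing derivative in characteristic two, so $\partial_iT=0$. For $i=j$ the chain rule gives
\[
 \partial_jT(x)=(\partial_j\sigma^tP)(Y(x))\cdot(\pm x_j^{\pm1-1}).
\]
By Step 1 the first factor equals $\bigl((\partial_jP)((x_i^{2^{u_i}})_i)\bigr)^{2^t}=b_j(x)^{2^t}$, and the remaining factors are nonzero on the grid. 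Hence $\partial_jR_j(x)\ne0$ exactly when $b_j(x)\ne0$.

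\emph{Step 4 (properties of $b_j$).} By maximality of $u_j$, the polynomial $P$ contains a monomial with odd $X_j$-exponent, so $\partial_jP\ne0$ and therefore $b_j\ne0$. Differentiation lowers the $X_j$-degree and does not raise any other degree. Substituting $X_i\mapsto X_i^{2^{u_i}}$ then gives $\deg_ib_j\le\deg_iF$ and total degree strictly smaller than that of $F$.
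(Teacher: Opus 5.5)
Your Steps 1, 3 and 4 are correct and match the paper's construction. These are the Frobenius twist of $F$, the grid identity $x_i^{2^{a_i}}=x_i^{-1}$ that folds $2^{u_i+t}$ back to $\pm2^{c_i}$, a clearing monomial, and the chain rule at zeros of $F$, where $\partial_j(\sigma^tP)=\sigma^t(\partial_jP)$ yields $b_j(x)^{2^t}$.

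The gap is Step 2. You flag it as the main obstacle but do not prove it. You list strategies to try (the sign branch, a case split on the ordering of the $u_i$, counting wrap-arounds), but give no argument that a suitable $t$ exists. That existence is the whole content of the lemma. For a generic $t$ the residues $c_i$ have size $\Theta(a_i)=\Theta(A)$, so $\deg_iR_j$ can be $2^{\Theta(A)}\deg_iF$. The lemma needs $K''_r$ independent of $A$.

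The missing idea is a generalized Chinese remainder argument that uses how close together the moduli $a_i=2(A+i)$ are. Since $\gcd(a_i,a_k)=2\gcd(A+i,k-i)$ divides $2|i-k|$, it divides $D:=2\operatorname{lcm}(1,\ldots,r-1)$. You should also aim for exponents that are bounded absolutely, not relative to $u_i$. Because $\deg_iF=2^{u_i}\deg_iP\ge\deg_iP$, it suffices to have $c_i\le D$.

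So put $v_j=0$, and for $i\ne j$ pick $v_i\in\{1,\ldots,D\}$ with $v_i\equiv u_i-u_j\pmod D$. Then $(v_i-u_i)-(v_k-u_k)\equiv0\pmod{\gcd(a_i,a_k)}$ for all $i,k$. This is exactly the solvability condition for the system $t\equiv v_i-u_i\pmod{a_i}$, $i\in[r]$.

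For such a $t$ you may take $c_i:=v_i$ in Step 1. The folding identity holds for any representative of $u_i+t$ modulo $a_i$, not only the least residue; this matters when $a_i\le D$. You then get $c_j=0$, $c_i\ge1$ for $i\ne j$, and $\deg_iR_j\le2\cdot2^{D}\deg_iP\le2^{D+1}\deg_iF$. Thus $K''_r=2^{D+1}$ works uniformly in $A$ and in the direction subset, with no case analysis on the $u_i$. The paper writes $e=-t$ and clears only the variables with negative sign, which gives $K''_r=2^D$.
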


\begin{proof}
Write $q_i = 2^{a_i}$ with $a_i = 2(A+i)$. Put $D=2\operatorname{lcm}(1,\ldots,r-1)$ and $K''_r=2^D$.
Choose $v_j=0$ and $1\le v_i\le D$ with
$v_i\equiv u_i-u_j\pmod D$ for $i\ne j$.
Since $\gcd(a_i,a_k)\mid2|i-k|\mid D$, the Chinese remainder
theorem (CRT) gives $e\ge0$ with $e\equiv u_i-v_i\pmod{a_i}$ for every $i$.
Set
\[
 \epsilon_i=(-1)^{(u_i-v_i-e)/a_i},\qquad
 \sigma(c)=c^{2^{-e}},\qquad
 R_j=\left(\prod_{\epsilon_i=-1}X_i^{2^{v_i}\deg_iP}\right)
             P^\sigma((X_i^{\epsilon_i2^{v_i}})_i),
\]
where $\sigma$ is inverse Frobenius and $P^\sigma$ applies it to
coefficients. The prefactor clears negative exponents. Distinct
monomials remain distinct, giving the stated degree bound.
Writing $m_j(X)=\prod_{\epsilon_i=-1}X_i^{2^{v_i}\deg_iP}$, the grid identities $x_i^{2^{a_i}}=x_i^{-1}$ give $$R_j(x)=m_j(x)\sigma(F(x)),\qquad
 \partial_jR_j(x)
=m_j(x)x_j^{\epsilon_j-1}\sigma(b_j(x))\text{ when }F(x)=0.$$ For $i\ne j$, we have $\partial_i R_j \equiv 0$ since $v_i \ge 1$ and all corresponding exponents are even. The displayed monomial factors are nonzero on the grid.
Maximal power extraction gives $\partial_jP\ne0$, and differentiation
lowers total degree after substitution as well. Unchosen directions
may be included as absent variables, so the same constant works
for all direction subsets.
\end{proof}

\begin{lemma}\label{arb:rank}\label{lem:rank}
Let $M\in\cB_h$ on an $m$-dimensional grid, and let $S=\supp M$.
If $H\ge mB$, no $m$ polynomials of coordinate degrees at most $B$
vanishing on $S$ have independent gradients at a point of $S$.
\end{lemma}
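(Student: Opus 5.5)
Suppose, for contradiction, that polynomials $g_1,\ldots,g_m$ with $\deg_i g_k\le B$ vanish on $S=\supp M$ and have linearly independent gradients at some $p\in S$. Since the grid is $m$-dimensional, independence means the Jacobian matrix $\bigl(\partial_i g_k(p)\bigr)_{k,i}$ is invertible. The plan is to use the parity test \eqref{eq:tensor-parity}, choosing the test function $f$ as a product of the $g_k$ multiplied by a suitable localizing factor.

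\textbf{Step 1 (reduce to a single point).} We want a test polynomial $f$ with $\deg_i f<h_i$ such that $Mf\prod_i v_i$ is supported at $p$ alone and has a nonzero value there. The degree budget is $mB$ from the $g_k$, plus whatever the localizing factor costs. I would not localize with Lagrange polynomials, since those cost degree about $n_i$. Instead, use that multiplying $M$ by a polynomial $g$ of coordinate degrees at most $B$ maps $\cB_h$ into $\cB_{h-B}$, as noted after \eqref{eq:tensor-parity}. Multiplying by $g_1\cdots g_{m-1}$ would kill $M$ on $S$, which is useless. So the argument should run in the dual direction: consider the linear functional $\phi(f)=\sum_x M(x)f(x)\prod_i v_i(x_i)$ on polynomials with $\deg_i f<h_i$. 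This functional is zero. Its restriction to the ideal generated by the $g_k$, truncated to degree $<h_i$, must also vanish, and one wants to derive a contradiction from a residue or Jacobian-type identity.

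\textbf{Step 2 (the key identity).} Let $Z$ be the common grid zero set of the $g_k$. Near $p$ the $g_k$ form local coordinates because the Jacobian is invertible, so $p$ is a simple isolated common zero. The plan is to use an ideal-membership statement with controlled degree: the polynomials $X_i-p_i$ lie in the ideal $(g_1,\ldots,g_m)$ localized at $p$. More concretely, let $J=\det(\partial_i g_k)$. Then for any $f$, the combination $fJ$ reduced modulo the $g_k$ is governed by the residue at $p$. I would try to show that $M$ restricted to $S\cap Z=S$ pairs with polynomials as a sum of point evaluations $\sum_{x\in S}c_x\,\mathrm{ev}_x$ with $c_x=M(x)\prod v_i(x_i)\neq0$. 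Taking $f$ ranging over polynomials of coordinate degree $<h_i-mB$, multiplied by $J$ (which has coordinate degrees at most $mB$), gives $\sum_{x\in S}c_xJ(x)f(x)=0$ for all such $f$. Hence $MJ\in\cB_{h-mB}$ with $H-mB\ge0$.

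\textbf{Step 3 (contradiction), the main obstacle.} From Step 2 we only know $MJ\in\cB_{h-mB}$ with $MJ(p)\ne0$; when $H=mB$ exactly this is vacuous. So the real content must instead come from a Euler–Jacobi/Bezout-type argument. The sum $\sum_{x\in Z}f(x)/J(x)$, over the simple zeros of a complete-intersection system, vanishes when $\deg f$ is small. The constraint from $M$ has the same shape, and the two structures must be shown to be incompatible at $p$. I expect this to be the hardest step. My first attempt would be induction on $m$: restrict to the hyperplane slice through $p$ in direction $X_m$. Use the implicit function (nonvanishing $\partial_m$ of a suitable combination) to eliminate one $g_k$ by taking a partial parity sum along direction $m$. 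This produces a codeword of $\cB$ on the $(m-1)$-dimensional grid with redundancy reduced by $B$, supported on a set where the remaining $m-1$ polynomials vanish with independent gradients. This reduces eventually to $m=1$. There, a univariate nonzero $g$ of degree at most $B\le H$ vanishing on $\supp M$ with $g'(p)\ne0$ contradicts \eqref{eq:tensor-parity}. To see this, test against $f=g(X)/(X-p)$, which has degree $<B\le h$. It is nonzero only at $p$ on $S$, with value $g'(p)\ne0$, giving $M(p)v(p)g'(p)=0$, which is a contradiction.
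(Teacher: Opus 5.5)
\textbf{There is a genuine gap for $m\ge2$.} Your base case $m=1$ is correct. However, Steps~1--2 only yield $MJ\in\cB_{h-mB}$, which, as you note yourself, carries no information, and the inductive slicing in Step~3 does not work as described. A partial parity sum $N(y)=\sum_{x_m}M(y,x_m)\phi(y,x_m)v_m(x_m)$ does give a word of $\cB$ on the $(m-1)$-dimensional grid. But its support lies only in the \emph{projection} of $S$. There the polynomials $g_k(y,p_m)$ have no reason to vanish, because the $g_k$ vanish on $S$ and depend on $X_m$.

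To force the support into the slice $\{x_m=p_m\}$, you would need a low-degree $\phi$ that vanishes on $S\cap\{x_m\ne p_m\}$ but not at $p$. Nothing in your sketch supplies one. The natural candidate $g_m(y,X_m)/(X_m-p_m)$ is a polynomial only on fibres where $g_m(y,p_m)=0$. There is no implicit function theorem on the finite grid to fill this in, and eliminating a variable by resultants would blow the degree budget $mB$. The Euler--Jacobi route would need the $g_k$ to cut out a complete intersection with only simple zeros, and it is not connected to the grid weights $v_i$ in \eqref{eq:tensor-parity}.

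\textbf{The missing idea, which is the paper's proof.} You need the multivariate analogue of your own base-case test function $g(X)/(X-p)$, built in one shot rather than by induction. Telescope coordinate by coordinate, using $g_k(p)=0$:
\[
 g_k(X)=\sum_{i=1}^m(X_i-p_i)A_{ki}(X),\qquad
 A_{ki}=\frac{g_k(p_1,\ldots,p_{i-1},X_i,\ldots,X_m)-g_k(p_1,\ldots,p_i,X_{i+1},\ldots,X_m)}{X_i-p_i}.
\]
\begin{itemize}
\item Each $A_{ki}$ has degree at most $B-1$ in $X_i$ and at most $B$ in every other variable. Hence $D=\det(A_{ki})$ satisfies $\deg_iD\le mB-1<h_i$.
\item At $p$ the divided differences are the formal partial derivatives, valid in any characteristic. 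So $D(p)=\det(\partial_ig_k(p))\ne0$.
\item At $x\in S\setminus\{p\}$, all $g_k(x)=0$, which gives $A(x)(x-p)=0$ with $x-p\ne0$. Hence $D(x)=0$.
\end{itemize}
Testing \eqref{eq:tensor-parity} with $f=D$ therefore leaves only the term $M(p)D(p)\prod_iv_i(p_i)$, which is nonzero. This is a contradiction, and no induction on $m$ is needed. For $m=1$ this construction reduces exactly to your $f=g(X)/(X-p)$.
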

\begin{proof}
Suppose $F_1,\ldots,F_m$ have independent gradients at $z\in S$.
Telescoping coordinates gives
\begin{align*}
 F_j(X)&=\sum_{i=1}^m(X_i-z_i)A_{ji}(X),\\
 A_{ji}(X)&=
 \frac{F_j(z_1,\ldots,z_{i-1},X_i,\ldots,X_m)
       -F_j(z_1,\ldots,z_i,X_{i+1},\ldots,X_m)}{X_i-z_i}.
\end{align*}
Each quotient has degree at most $B-1$ in $X_i$ and $B$ in the
other variables. Hence $D=\det(A_{ji})$ has $\deg_iD\le mB-1<h_i$.
At $z$, $D(z)=\det(\partial_iF_j(z))\ne0$. At $x\in S\setminus\{z\}$,
the identity $A(x)(x-z)=0$ gives $D(x)=0$. Applying the parity
test~\eqref{eq:tensor-parity} to $D$ yields
\[
 0=\sum_x M(x)D(x)\prod_i v_i(x_i)
   =M(z)D(z)\prod_i v_i(z_i),
\]
a contradiction. 
\end{proof}

\subsection{Cylinder gluing}
\label{sec:cylinders}\label{sec:B3}

A cylinder lets some coordinates vary freely and restricts the others
to a finite base. The following result glues line-codewords on their union.

\begin{theorem}[Cylinder unions]
\label{arb:cylinders}\label{thm:cylinders}
For each nonempty $I\subseteq[m]$, let
$W_I\subseteq\prod_{i\in I}S_i$ admit interpolation with bound $w_I$.
Omit empty bases and put
\[
 Z=\bigcup_I\left(W_I\times\prod_{j\notin I}S_j\right),
 \qquad R=\max\{2,\sum_Iw_I\}.
\]
If $H\ge2mR$, then $\cB_h\cap\overline F^Z$ is generated by the
free-direction line codes in these cylinders.
\end{theorem}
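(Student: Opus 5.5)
The argument is an induction on the number of directions $m$, with an inner induction that removes cylinders one at a time. Throughout, the parity test \eqref{eq:tensor-parity} is the only description of $\cB_h$ we use.

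Fix $M\in\cB_h$ with $\supp M\subseteq Z$. Choose a cylinder index $I$ that is maximal under inclusion among those with nonempty $W_I$. Since $W_I$ admits interpolation with bound $w_I$, each $a\in W_I$ has a polynomial $L_a$ in the variables $(X_i)_{i\in I}$ with $\deg_i L_a<w_I$ and $L_a(b)=\one_{a=b}$ on $W_I$. By Lemma~\ref{lem:finite-union}, the union of the other bases, lifted to $\prod_{i\in I}S_i$ by projection, admits interpolation with bound at most $R-w_I$. That lemma then gives, for each $a$, a polynomial $Q_a$ of coordinate degrees below $R$ that equals $\one_{a}$ on the base $W_I$ and vanishes on the parts of the other cylinders whose $I$-projection avoids $a$.

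This needs care. A cylinder $W_{I'}\times\prod_{j\notin I'}S_j$ with $I'\not\supseteq I$ can contain points lying over $a$ in its free $I$-coordinates. For such $I'$ we multiply instead by separating polynomials in the coordinates of $I'\setminus I$, which are constrained to $W_{I'}$. These exist by the separation part of Lemma~\ref{lem:finite-union}. The total degree cost stays $\le R$ per coordinate, or $2R$ after combining.

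Multiplication by $Q_a$ sends $\cB_h$ into $\cB_{h-2R}$. Consider the slice $M_a:=(Q_aM)$ restricted to the fiber $\{a\}\times\prod_{j\notin I}S_j$. We want $M_a$ to be a codeword of the $(m-|I|)$-dimensional code $\cB$ on the free directions, with redundancies $h_j-2R$ at least $h_j/2$ or so. To get this, apply the parity test with test functions $f(x)=g(x_{\notin I})\,\delta_a(x_I)$. Here $\delta_a$ is a Lagrange indicator on $S_I$ of degree $<n_i$, which is too large, so the plan is rather to use the product structure directly. Write $\prod_{i\in I}v_i$ weights and note that $Q_aM$ supported on the single fiber over $a$ satisfies parity against $g(x_{\notin I})\cdot p(x_I)$ with $\deg p<h_i-2R$. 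Choosing $p=1$ gives that the fiber word lies in the free-direction code $\cB$ with redundancies $h_j-2R$.

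The lower-dimensional code then decomposes into line codewords along free directions, with the reduced redundancy. We need these at the original redundancy $h_j$. So the plan is to instead show that $M$ agrees with $\sum_a Q_aM$ on the interior part of the cylinder, and that $M-\sum_a(\text{fiber of }M\text{ over }a)$ is supported on the remaining cylinders. The fiber of $M$ itself over $a$ is shown to be in the free-direction code at full redundancy: the parity test applied to $f=g\cdot Q_a$ with $\deg g<h_j-2R$ handles low degrees, and a dimension count (fibers over $a\in W_I$ whose remaining support is also in other cylinders) handles the rest. This is where $H\ge 2mR$ enters.

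Subtracting, we obtain a word in $\cB_h$ supported on the union of the remaining cylinders, and induction on the number of nonempty bases finishes the argument. The base case is $Z=\varnothing$ or $I=[m]$. For $I=[m]$ the cylinder is a finite set admitting interpolation with bound $<H$, so $\cB_h\cap\overline F^{W}=0$ by the equivalence after the definition of interpolation.

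The main obstacle is the full-redundancy claim: showing that the fiber of $M$ over $a\in W_I$ is a free-direction codeword at redundancy $h_j$, not merely $h_j-2R$. I expect the fix is to order the cylinders by decreasing $|I|$. The fiber is then contaminated only by lower cylinders, which are themselves unions of free lines already in the code. I would first try this, plus Lemma~\ref{lem:rank}-style degree bookkeeping, before resorting to a more global interpolation argument.
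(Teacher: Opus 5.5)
There is a genuine gap, and it sits exactly at the step you flag as the main obstacle. The claim you make there is false as stated. Take $m=2$, $W_{\{1\}}=\{a\}$ and $W_{\{2\}}=\{b\}$. Each interpolates with bound $1$, so $R=2$ and $Z=(\{a\}\times S_2)\cup(S_1\times\{b\})$. Let $M$ be a codeword of $C_1$ (with $h_1<n_1$) placed on the row $S_1\times\{b\}$, with $M(a,b)\ne0$. Then $M\in\cB_h\cap\overline F^Z$. But the fiber of $M$ over $a\in W_{\{1\}}$ is the weight-one word $M(a,b)\one_{\{(a,b)\}}$ on $\{a\}\times S_2$, and this is not in $C_2$. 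Both bases have $|I|=1$, and the symmetric example (a column codeword) defeats whichever cylinder you peel first, so ordering by $|I|$ does not help.

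What a peeling step actually needs is some $c_a$ in the full-redundancy free-direction code on the fiber that agrees with $M$ only on the part of the fiber outside the other cylinders. By \eqref{eq:tensor-parity} and duality, the existence of $c_a$ is an extension problem. Every $g$ on the fiber with $\deg_j g<h_j$ that vanishes on the other cylinders must extend, together with $0$ on the rest of $Z$, to a polynomial on the whole grid with $\deg_i<h_i$.

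Your multipliers $Q_a$ cannot deliver this. Each use costs up to $2R$ in every degree, which is why you only reach redundancy $h_j-2R$. The suggested repairs do not close the gap. The phrase ``contamination only by lower cylinders, which are already unions of lines'' presupposes the decomposition being proved, and the dimension count is never specified. Separately, your assertion that the other bases, projected to $\prod_{i\in I}S_i$, interpolate with bound $R-w_I$ is also false: for $I'\ne I$ these projections contain full lines in the directions $I\setminus I'$.

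The missing ingredient is a full-degree extension theorem, and that is what the paper proves. Let $\mathcal I$ be the ideal sheaf of $Y=\bigcup_I W_I\times(\PP^1)^{[m]\setminus I}$ in $X=(\PP^1)^m$. The paper shows $H^1(X,\mathcal I(t))=0$ for $t_i\ge2mR-1$, so every section of $\cO_Y(t)$ extends to $X$.

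Degree-$R$ sections $s_z$ much like your $Q_a$ do appear, with one factor $L_{z_I}$ or a separating polynomial for each $I$. But they are used pointwise, and only to show that multiplication by $s_z$ kills higher cohomology. That multiplication factors through the ideal $\mathcal J_z$ of the components through $z$, which has no higher cohomology in twists $t\ge0$ by the Taylor resolution of its squarefree monomial generators. A Koszul complex on $m+1$ generic combinations of the $s_z$, with descending induction on the cohomological degree, then gives the vanishing. So the $O(mR)$ loss is paid by the hypothesis $H\ge2mR$, not by the redundancy.

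Generation then follows by duality. A functional annihilating the line codes gives degree-$(h-\mathbf 1)$ sections on the fibers. These glue on $Y$ by a Chinese remainder theorem for the distributive lattice of local monomial ideals, then extend to $X$, so the functional annihilates $\cB_h$. A primal peeling proof would have to establish the fiberwise extension statement above, which is essentially the same theorem.
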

\begin{proof}
We first prove a polynomial extension statement, using local
component ideals as in~\cite{DerksenSidman2004}.
Set $X=(\PP^1)^m$ and write
$\cO_X(t)=\boxtimes_i\cO_{\PP^1}(t_i)$.
Consider cylinders $Y_I=W_I\times(\PP^1)^{[m]\setminus I}$ and set $Y=\bigcup_I Y_I \subseteq X$. Let $\mathcal I$ be the ideal sheaf of $Y$. 

\emph{Local approximation.}
For each $z\in Y$, let $\mathcal J_z$ be the ideal sheaf of the union of cylinder
components containing $z$, i.e., 
$\bigcup_{I:\,z_I\in W_I}
\left(\{z_I\}\times(\PP^1)^{[m]\setminus I}\right)$.
Choose homogeneous coordinates $(u_i:v_i)$ such that
$u_i(z_i)=0$. The ideal sheaf $\mathcal J_z$
is generated by squarefree monomials in the $u_i$.
Since all least common multiples of these monomials are
squarefree, the Taylor resolution \cite[Section~2]{BPS98}
and sheafification gives a finite resolution of $\mathcal J_z$ by finite direct
sums of $\cO_X(-\varepsilon)$, with $\varepsilon\in\{0,1\}^m$.
After twisting by $t$ with $t_i\ge0$, each resolving term
is a finite direct sum of $\cO_X(t-\varepsilon)$,
with $t_i-\varepsilon_i\ge-1$.
These terms have no higher cohomology by
Lemma~\ref{lem:cyl-cohomology}.
Applying Lemma~\ref{lem:cyl-exact-complex}
shows that $\mathcal J_z(t)$ also has no higher cohomology.
Thus
\begin{equation}
    H^q(X,\mathcal J_z(t))=0
\qquad(q>0,\ t_i\ge0).
    \label{eq:cylinder-local-vanish}
\end{equation}
For each $z \notin Y$, set $\mathcal J_z=\cO_X$,
and the vanishing conclusion follows directly from
Lemma~\ref{lem:cyl-cohomology}.

Each $W_I$ admits interpolation with bound $w_I$,
also in affine coordinates $x_i=u_i/v_i$ with $v_i$
chosen nonzero on $S_i\cup\{z_i\}$.
Indeed, after homogenization, the corresponding evaluation
maps on $W_I$ differ only by nonzero weights.
For each $W_I$, take $L_{z_I}$ if $z_I\in W_I$,
and otherwise a polynomial vanishing on $W_I$
but not at $z_I$ (Lemma~\ref{lem:finite-union}).
Their product has degree at most $\sum_I w_I\le R$
in each variable.
Homogenizing to multidegree $R\mathbf1$ using the $v_i$,
which are nonzero at $z_i$, gives a section
$s_z\in H^0(X,\cO_X(R\mathbf1))$ nonzero at $z$
and vanishing on every cylinder component not containing $z$. Multiplication by $s_z$ factors thus through
\[
 \mathcal I(t)\hookrightarrow\mathcal J_z(t)
       \xrightarrow{\ \times s_z\ }\mathcal I(t+R\mathbf1),
\]
so its map on every higher cohomology group is zero for $t_i\ge0$ by factoring through (\ref{eq:cylinder-local-vanish}).

\emph{Global extension.}
Let $V$ be the span of the sections $s_z$ above,
which have no common zero. The ground field is infinite and
$\dim X=m$, so we may choose $s_0,\ldots,s_m\in V$ with no
common zero. By Lemma~\ref{lem:cyl-koszul}, their Koszul complex remains
exact after tensoring with $\mathcal I(t)$, giving
\[
0\longrightarrow \mathcal F_{m+1}
\longrightarrow\cdots
\longrightarrow\mathcal F_1
\xrightarrow{\ (s_0,\ldots,s_m)\ }
\mathcal F_0
\longrightarrow0,
\qquad
\mathcal F_j
=
\mathcal I(t-jR\mathbf1)^{\oplus\binom{m+1}{j}}.
\]
We prove by descending induction on $q=m,\ldots,1$ that
\[
H^q(X,\mathcal I(t))=0
\quad\text{whenever every }t_i\ge a_q,
\qquad
a_q=(2(m-q)+1)R.
\]
Fix $q$ and such a twist $t$, and assume the claim has
been proved in every degree $p$ with $q<p\le m$.
We first check that
$H^{q+j-1}\bigl(X,\mathcal I(t-jR\mathbf1)\bigr)=0$ for all $2\le j\le m+1$.
Write $p=q+j-1$. If $p>m$, this follows from $\dim X=m$. Otherwise, $p>q$ and $t_i-jR
\ge a_q-jR
= a_p+(j-2)R
\ge a_p$, 
so the vanishing follows from the induction hypothesis. In the base case $q=m$, we always have $p>m$, so it follows from $\dim X=m$.
Applying Lemma~\ref{lem:cohomological-surjectivity} gives
surjectivity of
\[
\Phi_q:
H^q(X,\mathcal F_1)
\xrightarrow{\ (s_0,\ldots,s_m)\ }
H^q(X,\mathcal F_0).
\]
On the other hand, since $t_i-R\ge a_q-R\ge0$, the local approximation argument above shows that the multiplication map
$H^q\bigl(X,\mathcal I(t-R\mathbf1)\bigr)
\xrightarrow{\ \times s_z\ }
H^q(X,\mathcal I(t))$ is zero on positive cohomology for every $s_z$, and thus also for every $s_\nu\in V$. Thus $\Phi_q$ is both zero and
surjective, giving $H^q(X,\mathcal I(t))=0$ and completing the induction. 
In particular, \(H^1(X,\mathcal I(t))=0\) for \(t_i\ge 2mR-1\), and every global section of \(\mathcal O_Y(t)\) extends to a global section of \(\mathcal O_X(t)\).

\emph{Line generation.}
Let a functional on \(\overline F^Z\) annihilate the free-direction
line codes, and divide its coefficients by the parity weights
\(\prod_i v_i(x_i)\).
By \eqref{eq:tensor-parity}, on each grid
\(\{a\}\times\prod_{i\notin I}S_i\subseteq Z\), with \(a\in W_I\),
the resulting function is the evaluation of a polynomial
of degree \(<h_i\) in each free variable \(x_i\).
Homogenizing this polynomial gives an \((h-\mathbf1)\)-twisted
section \(s_\alpha\) on
\(C_\alpha=\{a\}\times(\PP^1)^{[m]\setminus I}\),
where \(\alpha=(a,I)\).
These sections agree on pairwise intersections:
their polynomial representatives agree on the intersection
grids and \(h_i\le |S_i|\).
To glue these sections on \(Y=\bigcup_\alpha C_\alpha\),
work on sufficiently small affine neighborhoods on which
\(\mathcal O_X(h-\mathbf1)\) is trivial. The component ideals
\(I_\alpha\) are localizations of coordinate monomial ideals
in common coordinates, with \(I_\alpha=(1)\) for components
not containing the point.
Monomial ideals form a distributive lattice under sum and
intersection, and localization preserves these operations.
The generalized Chinese remainder theorem for distributive
families of ideals therefore applies: pairwise agreement modulo
\(I_\alpha+I_\beta\) gives a unique lift modulo
\(\bigcap_\alpha I_\alpha\).
These local lifts agree on overlaps by uniqueness and hence
glue to a section of \(\mathcal O_Y(h-\mathbf1)\).
Since \(h_i-1\ge 2mR-1\) for every \(i\), this section then extends to a global section of
\(\mathcal O_X(h-\mathbf1)\).
Dehomogenizing and restoring the parity weights gives an
extension of the original functional to the full grid,
which annihilates \(\cB_h\) by \eqref{eq:tensor-parity}.
Thus every functional annihilating the line-code span also
annihilates \(\cB_h\cap\overline F^Z\), proving equality by linear duality.
\end{proof}

\subsection{Proof of inner generation}
\label{sec:expansion}
\label{sec:B4}

\begin{proof}[Proof of Theorem~\ref{thm:inner-generation-main}]
Fix $r \ge 2$. We induct on the number $m = |J|$ of chosen directions, simultaneously
over all direction subsets and all $S_i\subseteq\Sigma_i$.
Write $J=\{j_1,\ldots,j_m\}$ and use the same constant $K''_r$
from Lemma~\ref{lem:CRT} for every subset $J$.
For $m=1$, a nonzero polynomial of degree $d$ has at most $d$
roots, and the parity Vandermonde matrix is injective on them
when $H\ge2d$. Take $K'_1=2$.
Suppose the result is known in fewer than $m$ directions with
a common constant $K'_{m-1}$. We choose $K'_m = m^2 2^{m+1}(K'_{m-1}+K''_r+1)$, and prove the result for $m$ directions as follows.

\paragraph{Interpolation on bases.}
Fix a nonzero polynomial \(F\) of coordinate degrees at most \(d\).
Group the flats contained in \(Z_F\) that are maximal under
freeing coordinates by their fixed-coordinate sets \(I\), and let
\(W_I\) denote their sets of base points. Thus
$Z_F=\bigcup_I(W_I\times\prod_{j \in J \setminus I}S_j)$.
Fix a nonempty set \(I\subseteq J\), and write $F$ as a polynomial \(F(X_I,X_{J \setminus I})=\sum_\beta f_\beta(X_I)X_{J \setminus I}^{\beta}\) with free variables $X_{J \setminus I}$. 
Since $|S_i|\ge h_i\ge H\ge K'_m d>d$, a flat lies in $Z_F$
exactly when all these coefficients vanish at its base.
Choose a linear combination \(F_I=\sum_\beta\lambda_\beta f_\beta\) whose zero set $Z_{F_I}$ concides with the common zero set of $\{f_\beta\}_\beta$ on the grid \(S_I = \prod_{i \in I} S_i\). Such a choice exists over the infinite field $\overline{F}$ by avoiding the finitely many evaluation hyperplanes associated with points outside the common zero set.
By maximality, we have  $W_I=Z_{F_I}\setminus U_{F_I}$.
When $|I|<m$, since $F_I$ has coordinate degrees at most $d$, the induction hypothesis gives inner generation of $Z_{F_I}$ at redundancy $K'_{m-1}d$, which then implies that $W_I=Z_{F_I}\setminus U_{F_I}$ admits interpolation with bound $K'_{m-1}d$ by \eqref{eq:tensor-parity}.

We also need the following consequence: If such an $F$ has fewer than
$m$ variables, then any set
$T\subseteq Z_E\setminus U_E$ annihilated by $F$ admits interpolation
with bound $2^mK'_{m-1}d$.
It suffices to consider non-constant $F$. Use the
preceding base families in the variables appearing in $F$,
leaving all other directions free. Fix one such family
$W_I\times \prod_{j\in J \setminus I} S_j$, where $1\le|I|<m$. The preceding argument gives delta
interpolants $L_a(X_I)$ on $W_I$, with coordinate degrees
less than $K'_{m-1}d$.
For each $a\in W_I$, let
$T_a=\{y\in \prod_{j \in J\setminus I} S_j :(a,y)\in T\}$.
If $E(a,X_{J\setminus I})\equiv0$, the whole flat $\{a\}\times \prod_{j \in J\setminus I}S_{j}$
lies in $U_E$, so $T_a=\varnothing$. Otherwise, $E(a,X_{J\setminus I})$ is nonzero and has
coordinate degrees at most $d$, and $T_a\subseteq Z_{E(a,\cdot)}\setminus U_{E(a,\cdot)}.$
Induction and \eqref{eq:tensor-parity}, together with
interpolation passing to subsets, give delta interpolants
$L'_{a,b}(X_{J\setminus I})$ on each nonempty $T_a$, again with coordinate
degrees less than $K'_{m-1}d$.
The product $L_a(X_I)L'_{a,b}(X_{J\setminus I})$ is then a delta interpolant
for $(a,b)$ on $T\cap(W_I\times \prod_{j \in J\setminus I} S_j)$. Since the two factors
use disjoint variable sets, the coordinate-degree bound
remains unchanged. These sets cover $T$, so
Lemma~\ref{lem:finite-union} combines the fewer than $2^m$
families to give interpolation with bound $2^mK'_{m-1}d$.

\paragraph{Restricting the support to $U_E$.}
Let $M\in\mathcal{B}_h$ be supported on $Z_E$, with $H=\min_i h_i\ge K'_m d$. We also have $M\in\mathcal B_H$, where a scalar subscript means the same redundancy in every direction. We want to show that $\supp M\subseteq U_E$.
For each direction $i$, write $E$ as a polynomial in $X_i$ and take a
linear combination of its coefficients that vanishes exactly on the bases of
the direction-$i$ lines contained in $Z_E$. Let $P_0$ be the product of these $m$ polynomials. Then $P_0$ has grid zero set exactly $U_E$ and coordinate degrees at most $(m-1)d$. Put $N=P_0M$. It suffices to prove $N=0$.

Suppose otherwise. Note that  $N = P_0 M \in \mathcal B_{H-(m-1)d}$ with \(H-(m-1)d\ge m(K''_r+m)d\) by our choice of $K'_m$. We will reach a contradiction using  Lemma~\ref{lem:rank}. Starting with $Q_0=1$, we will recursively construct polynomials $Q_t$ with each coordinate degree at most $td$ and $\supp(Q_t N)\ne\varnothing$.
Choose $F_t$ as a polynomial with minimum total degree among \[\{G \neq 0: G|_{\supp(Q_t N)} = 0\; \text{  and } \deg_i G \le d \text{ for all } i\} \supseteq \{E\}.\]
No variable can divide $F_t$ because of its minimality and that all grid coordinates are nonzero. By the preceding
interpolation argument, $F_t$ uses all $m$ variables, provided
$H-2(m-1)d\ge2^mK'_{m-1}d$.
Applying Lemma~\ref{lem:CRT} to $F_t$ in direction $j_{t+1}$, we obtain $R_{t+1}, b_{t+1}$ with the following properties:
\begin{itemize}
    \item $R_{t+1} \neq 0$ has coordinate degrees at most $K''_r d$ and $R_{t+1}|_{\supp (Q_t N)} = 0$.
    \item $b_{t+1} \neq 0$ has coordinate degrees at most $d$ and total degree smaller than that of $F_t$.
    \item For every $x \in \supp(Q_t N)$, $\partial_j R_{t+1}(x) \neq 0$ if and only if $j = j_{t+1}$ and $b_{t+1}(x) \neq 0$.
\end{itemize}
Set $Q_{t+1}=Q_tb_{t+1}$. Minimality of $F_t$ implies that
$b_{t+1}$ does not vanish entirely on $\supp(Q_t N)$, and therefore $\supp(Q_{t+1} N) \neq \varnothing$.
Construct $A_{t+1} = Q_t R_{t+1}$, which satisfies the following:
\begin{itemize}
    \item $A_{t+1}$ has coordinate degrees at most
$(K''_r+t)d$.
    \item $A_{t+1}|_{\supp(N)} = 0$, since $R_{t+1}|_{\supp(Q_t N)} = 0$ and $Q_t|_{\supp(N)\setminus \supp(Q_t N)} = 0$.
    \item On $\supp(Q_{t+1} N) \subseteq \supp(Q_{t} N)$, both $Q_t$ and $b_{t+1}$ are nowhere zero, and $R_{t+1} = 0$. Hence $\nabla A_{t+1}=Q_t\nabla R_{t+1}$ has exactly one
nonzero component, in direction $j_{t+1}$.
\end{itemize}
After $m$ steps the gradients of $A_1,\ldots,A_m$ are linearly independent
on $\supp(Q_{m} N)\ne\varnothing$, contradicting Lemma~\ref{lem:rank}.
Hence $N = 0$, and therefore $\supp M\subseteq U_E$.

\paragraph{Gluing the lines.}
Group the positive-dimensional cylinders in \(Z_E\) that are
maximal under freeing coordinates by their fixed-coordinate sets
\(\varnothing\ne I\subsetneq J\), and denote their base sets
by \(W_I\). Then
$U_E=\bigcup_{\varnothing\ne I\subsetneq J}
\left(W_I\times\prod_{j\in J \setminus I}S_j\right)$.
We know that \(M\) has support on \(U_E\). Each $W_I$ involved admit interpolation with bound $K'_{m-1}d$ by the
first paragraph. Since there are fewer than \(2^m\) indices $I$ and we have chosen
$H\ge K'_md\ge 2m\,2^mK'_{m-1}d$, Theorem~\ref{thm:cylinders} decomposes \(M\) into line-codewords
on complete coordinate lines contained in \(U_E \subseteq Z_E\). This completes the induction.

Finally, setting $K_r = \max(K'_1,\dots,K'_r)$ completes the proof.

\end{proof}


\section{Some cohomological facts and useful lemma}
\label{app:cylinder-tools}
The following are some standard consequences of the \cite{Stacks}
in the form needed for cylinder gluing. Let $k$ be a field,
$X=(\mathbb P^1_k)^m$, and
$\mathcal O_X(t)=\boxtimes_i\mathcal O_{\mathbb P^1_k}(t_i)$.
Here $H^q$ denotes ordinary sheaf cohomology.

\begin{lemma}[Line-bundle vanishing]
\label{lem:cyl-cohomology}
If $t_i\ge-1$ for every $i$, then
\[
 H^q(X,\mathcal O_X(t))=0\qquad(q>0).
\]
This follows from \cite[Lemmas~30.8.1 and~33.29.1,
Tags~01XT and~0BED]{Stacks}.
\end{lemma}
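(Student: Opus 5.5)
The vanishing follows from the Künneth formula on $X=(\mathbb P^1_k)^m$ together with the one-variable computation on $\mathbb P^1_k$.

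\emph{One factor.} On $\mathbb P^1_k$ I will use the \v{C}ech complex for the standard affine cover $\{v\ne0\}$, $\{u\ne0\}$. Then $H^1(\mathbb P^1,\cO(n))$ is identified with the span of the Laurent monomials $u^{a}v^{b}$ with $a+b=n$ and $a,b<0$. This span is nonzero only when $n\le-2$, so $H^1(\mathbb P^1,\cO(n))=0$ for $n\ge-1$. Also $H^q=0$ for $q\ge2$, since the cover has only two opens. This is the content of \cite[Tag~01XT]{Stacks}.

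\emph{Künneth.} Since $\cO_X(t)=\boxtimes_i\cO_{\mathbb P^1}(t_i)$, and everything is over a field and each factor is proper, the Künneth formula \cite[Tag~0BED]{Stacks} gives
\[
 H^q(X,\cO_X(t))\cong\bigoplus_{q_1+\cdots+q_m=q}\ \bigotimes_{i=1}^m H^{q_i}(\mathbb P^1,\cO(t_i)).
\]
The flatness hypotheses needed there hold automatically over a field. Alternatively, the same decomposition can be seen directly: the product \v{C}ech complex for the product cover $\prod_i\{u_i\ne0\},\{v_i\ne0\}$ is the tensor product of the one-variable \v{C}ech complexes, and the algebraic Künneth formula over a field then splits its cohomology as above.

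\emph{Conclusion.} If $q>0$, every summand has some $q_i\ge1$. By the one-factor step and $t_i\ge-1$, that factor $H^{q_i}(\mathbb P^1,\cO(t_i))$ vanishes, so every summand is zero and $H^q(X,\cO_X(t))=0$.

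The only step requiring care is citing Künneth in the form for coherent sheaves on a product of proper schemes over a field. Verifying the hypotheses is routine, so I expect no real obstacle.
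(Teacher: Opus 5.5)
Your proposal is correct and follows the same route as the paper, whose proof is just the citation of the cohomology of $\mathcal O(n)$ on $\mathbb P^1$ (Tag 01XT) together with the Künneth formula over a field (Tag 0BED). Your \v{C}ech computation showing $H^1(\mathbb P^1,\mathcal O(n))=0$ for $n\ge-1$, and your observation that every Künneth summand with $q>0$ contains a vanishing factor, fill in exactly what that citation leaves implicit. One small point of precision, which your argument does not depend on: in your alternative remark, the tensor product of the one-variable \v{C}ech complexes is a multi-\v{C}ech complex. It is quasi-isomorphic to the \v{C}ech complex of the $2^m$-set product cover, but it is not literally that complex.
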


\begin{lemma}[Exactness of the Koszul complex]
\label{lem:cyl-koszul}
Let $L$ be an invertible sheaf on $X$, and let
$s_0,\ldots,s_N\in H^0(X,L)$ have no common zero.
Their Koszul complex is locally split exact and remains exact
after tensoring with any quasi-coherent sheaf.
This is the unit case of
\cite[Lemma~15.29.6, Tag~0663]{Stacks}.
\end{lemma}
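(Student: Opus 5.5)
The plan is to reduce to a local statement and then use the standard fact that the Koszul complex on a sequence generating the unit ideal is contractible. Exactness of a complex of sheaves is local, and so is local splitness. So I will first cover $X$ by affine opens $V=\operatorname{Spec}R$ on which $L$ is trivialized by a generator $e$. On such an open each section becomes $s_\nu=f_\nu e$ with $f_\nu\in R$.

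Over $V$, the Koszul complex of $(s_0,\ldots,s_N)$ is identified with the ordinary Koszul complex $K_\bullet(f_0,\ldots,f_N;R)$. Its terms are $\bigwedge^j R^{N+1}$, twisted by powers of $L$, and these twists become trivial after choosing $e$. The hypothesis that the $s_\nu$ have no common zero means that at every point of $V$ some $f_\nu$ is a unit in the local ring. Hence the $f_\nu$ generate the unit ideal of $R$, and we may write $1=\sum_\nu g_\nu f_\nu$ with $g_\nu\in R$.

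The key step is to build an explicit contracting homotopy. Let $h$ be wedge product with $g=\sum_\nu g_\nu e_\nu$. The Koszul differential $d$ is contraction against $f=\sum_\nu f_\nu e_\nu^*$, and it acts as a graded derivation. The standard identity $dh+hd=\langle f,g\rangle\cdot\mathrm{id}=\mathrm{id}$ then shows that the complex is contractible on $V$. Contractibility means the complex is split exact: each kernel is a direct summand and each short exact piece splits. Since this holds on every member of an open cover, the Koszul complex is locally split exact, including at the augmentation term $\mathcal F_0$, where $(s_0,\ldots,s_N)$ is surjective.

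Finally, let $\mathcal M$ be a quasi-coherent sheaf and tensor the complex with it. A contracting homotopy is preserved by any additive functor, and $-\otimes\mathcal M$ is additive. So on each $V$ the maps $h\otimes\mathrm{id}$ still satisfy $d h+hd=\mathrm{id}$, and the tensored complex is exact on $V$. Exactness is local, so it is exact on $X$. Quasi-coherence is not really needed for this argument. It is the form in which the result is applied, with $\mathcal M=\mathcal I(t)$, matching the unit case of \cite[Lemma~15.29.6]{Stacks}.

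The only point requiring care is checking that the local trivializations of $L$ are compatible with the identification of the Koszul differentials. This is routine: changing $e$ to $ue$ rescales the $f_\nu$ by the unit $u$ and gives an isomorphic complex. I expect no real obstacle; the homotopy identity is the heart of the argument.
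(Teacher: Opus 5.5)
Your proposal is correct and is essentially the argument behind the cited Stacks lemma, which the paper invokes without further proof: on an affine open trivializing $L$, wedging with $e_\nu$ is a homotopy from multiplication by $f_\nu$ to zero, so $\sum_\nu g_\nu f_\nu=1$ makes the augmented Koszul complex contractible, and contractibility survives $-\otimes\mathcal M$ for any $\mathcal O_X$-module $\mathcal M$. The compatibility of trivializations you flag at the end is not needed, because both conclusions are local and the local homotopies never have to glue.
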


We use the following standard dimension-shifting criterion.

\begin{lemma}[Cohomological surjectivity]
\label{lem:cyl-exact-complex}\label{lem:cohomological-surjectivity}
Let $N\ge1$ and let
\[
 0\longrightarrow\mathcal E_N\longrightarrow\cdots
 \longrightarrow\mathcal E_1\xrightarrow{\ d_1\ }\mathcal E_0
 \longrightarrow0
\]
be an exact complex of quasi-coherent sheaves on $X$.
For $q\ge0$, the map
$H^q(X,\mathcal E_1)\to H^q(X,\mathcal E_0)$ is surjective if
\[
 H^{q+j-1}(X,\mathcal E_j)=0\qquad(2\le j\le N).
\]
In particular, a sheaf admitting a finite resolution by sheaves
with no positive-degree cohomology has no positive-degree cohomology.
These are consequences of the hypercohomology spectral sequence
\cite[Lemma~13.21.3, Tag~015J]{Stacks}, placing
$\mathcal E_j$ in degree $-j$.
\end{lemma}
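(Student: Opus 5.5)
The statement to prove is the cohomological surjectivity lemma: given an exact complex $0\to\mathcal E_N\to\cdots\to\mathcal E_1\xrightarrow{d_1}\mathcal E_0\to0$ of quasi-coherent sheaves on $X$, the map $H^q(X,\mathcal E_1)\to H^q(X,\mathcal E_0)$ is surjective provided $H^{q+j-1}(X,\mathcal E_j)=0$ for $2\le j\le N$.

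I would argue by dimension shifting along the short exact sequences cut out of the resolution, rather than invoking the hypercohomology spectral sequence directly. Let $\mathcal K_j=\ker(\mathcal E_j\to\mathcal E_{j-1})$ for $j\ge1$. These are quasi-coherent, since kernels of maps of quasi-coherent sheaves are quasi-coherent. Exactness gives short exact sequences
\[
0\to\mathcal K_1\to\mathcal E_1\to\mathcal E_0\to0,\qquad 0\to\mathcal K_j\to\mathcal E_j\to\mathcal K_{j-1}\to0\quad(2\le j\le N),
\]
with $\mathcal K_N=0$ because $\mathcal E_N\to\mathcal E_{N-1}$ is injective. From the first sequence, the long exact cohomology sequence
\[
H^q(\mathcal E_1)\to H^q(\mathcal E_0)\to H^{q+1}(\mathcal K_1)
\]
shows that surjectivity follows once the image of $H^q(\mathcal E_0)$ in $H^{q+1}(\mathcal K_1)$ is zero. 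It therefore suffices to show $H^{q+1}(X,\mathcal K_1)=0$.

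\textbf{Main step.} I would prove $H^{q+j-1}(X,\mathcal K_{j-1})=0$ for $j=N+1,N,\dots,2$ by descending induction on $j$. The base case $j=N+1$ is immediate, since $\mathcal K_N=0$. For the inductive step, the sequence $0\to\mathcal K_j\to\mathcal E_j\to\mathcal K_{j-1}\to0$ gives the exact segment
\[
H^{q+j-1}(\mathcal E_j)\to H^{q+j-1}(\mathcal K_{j-1})\to H^{q+j}(\mathcal K_j).
\]
The left term vanishes by hypothesis. The right term vanishes by the induction hypothesis, since it is the statement for index $j+1$. Hence the middle term vanishes. Stopping at $j=2$ yields $H^{q+1}(\mathcal K_1)=0$, which completes the argument. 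The only care needed is in this index bookkeeping: the degrees must line up exactly as $q+j-1$ against $\mathcal E_j$, and they do.

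\textbf{The ``in particular'' clause.} Suppose $\mathcal G$ has a finite resolution $0\to\mathcal P_N\to\cdots\to\mathcal P_1\to\mathcal G\to0$ whose terms have vanishing positive cohomology. I would apply the lemma with $\mathcal E_0=\mathcal G$ and $\mathcal E_j=\mathcal P_j$. For $q\ge1$, every degree $q+j-1$ is positive, so the hypotheses hold. The lemma then makes $H^q(\mathcal P_1)\to H^q(\mathcal G)$ surjective, and since $H^q(\mathcal P_1)=0$, we get $H^q(X,\mathcal G)=0$ for $q>0$.

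I do not expect any step to be a real obstacle. Everything uses only the long exact sequence in sheaf cohomology and quasi-coherence of kernels; this is the same content the paper phrases via the hypercohomology spectral sequence.
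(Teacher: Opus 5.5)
Your proof is correct and follows the paper's argument: the paper likewise takes the kernels $\mathcal B_j$ (your $\mathcal K_j$) with $\mathcal B_N=0$ and uses the hypothesis to get injections $H^{q+j-1}(X,\mathcal B_{j-1})\hookrightarrow H^{q+j}(X,\mathcal B_j)$, which yield $H^{q+1}(X,\mathcal B_1)=0$, and then concludes from the $j=1$ sequence. The only difference is that your descending induction, with base case $\mathcal K_N=0$, also covers $N=1$ without the separate case the paper uses.
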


\begin{proof}
If $N=1$, exactness implies that $d_1$ is an isomorphism, so the
assertion is immediate. Let $N\ge2$.
Put $\mathcal B_0=\mathcal E_0$ and
$\mathcal B_j=\ker(\mathcal E_j\to\mathcal E_{j-1})$
for $1\le j\le N$, so that $\mathcal B_N=0$.
Exactness gives short exact sequences
\[
0\longrightarrow \mathcal B_j
 \longrightarrow \mathcal E_j
 \longrightarrow \mathcal B_{j-1}
 \longrightarrow 0.
\]
For $2\le j\le N$, the assumed vanishing and the long exact
cohomology sequence give injections
\[
H^{q+j-1}(X,\mathcal B_{j-1})
\hookrightarrow H^{q+j}(X,\mathcal B_j).
\]
Following these injections to $\mathcal B_N=0$ shows that
$H^{q+1}(X,\mathcal B_1)=0$.
The long exact sequence for $j=1$ now gives the desired
surjectivity. Applying this criterion in each positive degree
gives the final assertion about resolutions by acyclic sheaves.
\end{proof}

\phantomsection
\addcontentsline{toc}{section}{References}
\bibliographystyle{alphaurl}
\bibliography{reference}

@book{Serre1980,
  author = {Serre, Jean-Pierre},
  title = {Trees},
  publisher = {Springer-Verlag},
  address = {Berlin, Heidelberg},
  year = {1980},
  note = {Translated from the French by John Stillwell},
  doi = {10.1007/978-3-642-61856-7},
  url = {https://doi.org/10.1007/978-3-642-61856-7}
}

@article{AharonovEldar2015,
  author = {Aharonov, Dorit and Eldar, Lior},
  title = {Quantum locally testable codes},
  journal = {SIAM Journal on Computing},
  volume = {44}, number = {5}, pages = {1230--1262}, year = {2015},
  doi = {10.1137/140975498},
  url = {https://doi.org/10.1137/140975498}
}

@inproceedings{PanteleevKalachev2022,
  author = {Panteleev, Pavel and Kalachev, Gleb},
  title = {Asymptotically good quantum and locally testable classical {LDPC} codes},
  booktitle = {Proceedings of the 54th Annual ACM SIGACT Symposium on Theory of Computing},
  series = {STOC 2022}, pages = {375--388}, year = {2022},
  publisher = {Association for Computing Machinery},
  doi = {10.1145/3519935.3520017},
  url = {https://arxiv.org/abs/2111.03654v2}
}

@inproceedings{LeverrierZemor2022,
  author = {Leverrier, Anthony and Z{\'e}mor, Gilles},
  title = {Quantum {Tanner} codes},
  booktitle = {2022 IEEE 63rd Annual Symposium on Foundations of Computer Science (FOCS)},
  pages = {872--883},
  publisher = {IEEE},
  year = {2022},
  doi = {10.1109/FOCS54457.2022.00117},
  eprint = {2202.13641v3},
  archivePrefix = {arXiv},
  primaryClass = {quant-ph},
  note = {Full version revised September 16, 2022}
}

@inproceedings{DinurLinVidick2024,
  author = {Dinur, Irit and Lin, Ting-Chun and Vidick, Thomas},
  title = {Expansion of High-Dimensional Cubical Complexes: with Application to Quantum Locally Testable Codes},
  booktitle = {2024 IEEE 65th Annual Symposium on Foundations of Computer Science (FOCS)},
  pages = {379--385},
  publisher = {IEEE},
  year = {2024},
  doi = {10.1109/FOCS61266.2024.00031},
  eprint = {2402.07476v3},
  archivePrefix = {arXiv},
  primaryClass = {quant-ph},
  note = {Full version revised September 4, 2025}
}

@misc{PanteleevKalachev2024,
  author = {Panteleev, Pavel and Kalachev, Gleb},
  title = {Maximally extendable sheaf codes},
  year = {2024}, eprint = {2403.03651}, archivePrefix = {arXiv},
  primaryClass = {cs.IT}, note = {arXiv:2403.03651},
  url = {https://arxiv.org/abs/2403.03651}
}

@inproceedings{KalachevPanteleev2025,
  author = {Kalachev, Gleb and Panteleev, Pavel},
  title = {Maximally Extendable Product Codes are Good Coboundary Expanders},
  booktitle = {2025 IEEE 66th Annual Symposium on Foundations of Computer Science (FOCS)},
  pages = {1512--1524},
  publisher = {IEEE},
  year = {2025},
  doi = {10.1109/FOCS63196.2025.00079},
  eprint = {2501.01411v2},
  archivePrefix = {arXiv},
  primaryClass = {cs.IT},
  note = {Preprint revised October 21, 2025}
}

@techreport{BafnaVyas2026,
  author = {Bafna, Mitali and Vyas, Nikhil},
  title = {Private {PCPs} from product expansion},
  institution = {Electronic Colloquium on Computational Complexity},
  number = {TR26-150}, year = {2026}, month = aug,
  url = {https://eccc.weizmann.ac.il/report/2026/150/}
}

@article{DerksenSidman2004,
  author = {Derksen, Harm and Sidman, Jessica},
  title = {{Castelnuovo--Mumford} regularity by approximation},
  journal = {Advances in Mathematics},
  volume = {188}, number = {1}, pages = {104--123}, year = {2004},
  doi = {10.1016/j.aim.2003.10.001},
  eprint = {math/0211279}, archivePrefix = {arXiv},
  url = {https://arxiv.org/abs/math/0211279}
}

@article{RSV2019,
  author = {Rungtanapirom, Nithi and Stix, Jakob and Vdovina, Alina},
  title = {Infinite series of quaternionic 1-vertex cube complexes, the doubling construction, and explicit cubical {Ramanujan} complexes},
  journal = {International Journal of Algebra and Computation},
  volume = {29},
  number = {6},
  pages = {951--1007},
  year = {2019},
  doi = {10.1142/S0218196719500371},
  eprint = {1808.03290},
  archivePrefix = {arXiv},
  primaryClass = {math.GR}
}

@article{BadulescuRoche2013,
  author = {Badulescu, Alexandru Ioan and Roche, Philippe},
  title = {Global {Jacquet--Langlands} correspondence for division algebras in characteristic {$p$}},
  journal = {International Mathematics Research Notices},
  volume = {2017},
  number = {7},
  pages = {2172--2206},
  year = {2017},
  doi = {10.1093/imrn/rnw094},
  eprint = {1302.5289v2},
  archivePrefix = {arXiv},
  primaryClass = {math.NT},
  note = {Preprint revised July 15, 2014}
}

@misc{Stacks,
  author = {{The Stacks Project Authors}},
  title = {The {Stacks Project}},
  year = {2026},
  note = {Tags 01XS, 0BED, 08BV and 015J; accessed September 7, 2026},
  url = {https://stacks.math.columbia.edu/}
}

@article{CrossEtAl2024,
  author = {Cross, Andrew and He, Zhiyang and Natarajan, Anand and Szegedy, Mario and Zhu, Guanyu},
  title = {Quantum locally testable code with constant soundness},
  journal = {Quantum}, volume = {8}, pages = {1501}, year = {2024},
  doi = {10.22331/q-2024-10-18-1501},
  url = {https://doi.org/10.22331/q-2024-10-18-1501}
}

@misc{LiEtAl2025,
  author = {Li, Yiming and Li, Zimu and Liu, Zi-Wen and Nguyen, Quynh T.},
  title = {{Poincar\'e} duality and multiplicative structures on quantum codes},
  year = {2025}, eprint = {2512.21922}, archivePrefix = {arXiv},
  primaryClass = {quant-ph}, note = {arXiv:2512.21922, version 2},
  url = {https://arxiv.org/abs/2512.21922}
}

@inproceedings{HLMRZ2025,
  author = {Hsieh, Jun-Ting and Lubotzky, Alexander and Mohanty, Sidhanth and Reiner, Assaf and Zhang, Rachel Yun},
  title = {Explicit Lossless Vertex Expanders},
  booktitle = {2025 IEEE 66th Annual Symposium on Foundations of Computer Science (FOCS)},
  pages = {894--911},
  publisher = {IEEE},
  year = {2025},
  doi = {10.1109/FOCS63196.2025.00046},
  eprint = {2504.15087v1},
  archivePrefix = {arXiv},
  primaryClass = {math.CO}
}

@article{LSV2005,
  author = {Lubotzky, Alexander and Samuels, Beth and Vishne, Uzi},
  title = {Ramanujan complexes of type {$\widetilde A_d$}},
  journal = {Israel Journal of Mathematics},
  volume = {149}, pages = {267--299}, year = {2005},
  doi = {10.1007/BF02772543},
  eprint = {math/0406208}, archivePrefix = {arXiv},
  url = {https://arxiv.org/abs/math/0406208}
}

@misc{GayJeronimo2026,
  author = {Gay, William and Granha Jeronimo, Fernando},
  title = {Asymptotically Good Quantum Locally Testable Codes},
  year = {2026}, eprint = {2609.20780}, archivePrefix = {arXiv},
  primaryClass = {quant-ph}, note = {arXiv:2609.20780v1},
  url = {https://arxiv.org/abs/2609.20780}
}

@article{GimenezRuanoSanJose2023,
  author = {Gimenez, Philippe and Ruano, Diego and San-Jos{\'e}, Rodrigo},
  title = {Entanglement-assisted quantum error-correcting codes from subfield subcodes of projective {Reed--Solomon} codes},
  journal = {Computational and Applied Mathematics},
  volume = {42}, pages = {363}, year = {2023},
  doi = {10.1007/s40314-023-02506-4},
  url = {https://doi.org/10.1007/s40314-023-02506-4}
}

@article{BPS98,
  author  = {Dave Bayer and Irena Peeva and Bernd Sturmfels},
  title   = {Monomial Resolutions},
  journal = {Mathematical Research Letters},
  volume  = {5},
  number  = {1--2},
  pages   = {31--46},
  year    = {1998}
}

@techreport{BafnaLiNguyen2026,
  author      = {Mitali Bafna and Anqi Li and Quynh T. Nguyen},
  title       = {{Good Quantum Locally Testable Codes from Product Expansion}},
  institution = {Electronic Colloquium on Computational Complexity},
  number      = {TR26-204},
  year        = {2026},
  url         = {https://eccc.weizmann.ac.il/report/2026/204/}
}

@article{TradeoffConstructionsForQLTCs,
  author       = {Adam Wills and
                  Ting{-}Chun Lin and
                  Min{-}Hsiu Hsieh},
  title        = {Tradeoff Constructions for Quantum Locally Testable Codes},
  journal      = {{IEEE} Trans. Inf. Theory},
  volume       = {71},
  number       = {1},
  pages        = {426--458},
  year         = {2025},
  url          = {https://doi.org/10.1109/TIT.2024.3503500},
  doi          = {10.1109/TIT.2024.3503500},
  bibsource    = {dblp computer science bibliography, https://dblp.org}
}

@inproceedings{QuantumCodesFromHighDimensionalManifolds,
  author       = {Matthew B. Hastings},
  editor       = {Christos H. Papadimitriou},
  title        = {Quantum Codes from High-Dimensional Manifolds},
  booktitle    = {8th Innovations in Theoretical Computer Science Conference, {ITCS}
                  2017, Berkeley, CA, USA, January 9-11, 2017},
  series       = {LIPIcs},
  volume       = {67},
  pages        = {25:1--25:26},
  publisher    = {Schloss Dagstuhl - Leibniz-Zentrum f{\"{u}}r Informatik},
  year         = {2017},
  url          = {https://doi.org/10.4230/LIPIcs.ITCS.2017.25},
  doi          = {10.4230/LIPICS.ITCS.2017.25},
  bibsource    = {dblp computer science bibliography, https://dblp.org}
}

@article{TowardsLocalTestabilityForQuantumCoding,
  author       = {Anthony Leverrier and
                  Vivien Londe and
                  Gilles Z{\'{e}}mor},
  title        = {Towards local testability for quantum coding},
  journal      = {Quantum},
  volume       = {6},
  pages        = {661},
  year         = {2022},
  url          = {https://doi.org/10.22331/q-2022-02-24-661},
  doi          = {10.22331/Q-2022-02-24-661},
  bibsource    = {dblp computer science bibliography, https://dblp.org}
}

@article{GeneralDistanceBalancingForQLTCs,
  author = {Wills, Adam and Lin, Ting-Chun and Hsieh, Min-Hsiu},
  title = {Local testability of distance-balanced quantum codes},
  journal = {npj Quantum Information},
  volume = {10},
  pages = {120},
  year = {2024},
  doi = {10.1038/s41534-024-00908-8},
  eprint = {2305.00689},
  archivePrefix = {arXiv},
  primaryClass = {quant-ph},
  note = {Preprint title: General Distance Balancing for Quantum Locally Testable Codes}
}

@misc{TransversalGatesOnAlmostGoodQuantumCodes,
  author        = {Li, Yiming and Li, Zimu and Liu, Zi-Wen},
  title         = {Transversal Non-{Clifford} Gates on Almost-Good
                   Quantum {LDPC} and Quantum Locally Testable Codes},
  year          = {2026},
  archivePrefix = {arXiv},
  eprint        = {2604.01874},
  primaryClass  = {quant-ph},
  note          = {Version 2, revised July 2026},
  url           = {https://arxiv.org/abs/2604.01874v2}
}

@misc{LiLiLiu2026,
  author = {Li, Yiming and Li, Zimu and Liu, Zi-Wen},
  title = {Transversal non-{C}lifford gates on good quantum locally testable codes},
  year = {2026}, eprint = {2609.26691}, archivePrefix = {arXiv},
  primaryClass = {quant-ph}, note = {arXiv:2609.26691v1},
  url = {https://arxiv.org/abs/2609.26691}
}
\end{document}